\documentclass[onecolumn,nofootinbib,notitlepage,11pt]{revtex4-1}

%............................

\usepackage[ colorlinks = true,
             linkcolor = blue,
             urlcolor  = blue,
             citecolor = red,
             anchorcolor = green,
]{hyperref}

\usepackage{mtunicode}
\usepackage{qittools}

\let\originalleft\left
\let\originalright\right
\renewcommand{\left}{\mathopen{}\mathclose\bgroup\originalleft}
\renewcommand{\right}{\aftergroup\egroup\originalright}

\begin{document}

\title{Identifying the Information Gain of a Quantum Measurement}

\author{Mario Berta}
\email[]{berta@phys.ethz.ch}
\affiliation{Institute for Theoretical Physics, ETH Zurich, 8093 Z\"urich, Switzerland}
\author{Joseph M.~Renes}
%\email[]{renes@phys.ethz.ch}
\affiliation{Institute for Theoretical Physics, ETH Zurich, 8093 Z\"urich, Switzerland}
\author{Mark M.~Wilde}
%\email[]{...}
\affiliation{School of Computer Science, McGill University, Montr\'eal, Qu\'ebec, Canada}

\begin{abstract}
We show that quantum-to-classical channels, i.e., quantum measurements, can be asymptotically simulated by an amount of classical communication equal to the quantum mutual information of the measurement, if sufficient shared randomness is available. This result generalizes Winter's measurement compression theorem for fixed independent and identically distributed inputs [Winter, CMP 244 (157), 2004] to arbitrary inputs, and more importantly, it identifies the quantum mutual information of a measurement as the information gained by performing it, independent of the input state on which it is performed. Our result is a generalization of the classical reverse Shannon theorem to quantum-to-classical channels. In this sense, it can be seen as a quantum reverse Shannon theorem for quantum-to-classical channels, but with the entanglement assistance and quantum communication replaced by shared randomness and classical communication, respectively. The proof is based on a novel one-shot state merging protocol for ``classically coherent states'' as well as the post-selection technique for quantum channels, and it uses techniques developed for the quantum reverse Shannon theorem [Berta {\it et al}., CMP 306 (579), 2011].
\end{abstract}

\maketitle

\section{Introduction}

Measurement is an integral part of quantum theory. It is the means by which
we gather information about a quantum system. Although the classical notion of a measurement
is rather straightforward, the quantum notion of measurement has
been the subject of much thought and debate \cite{BohrEinstein1949}. One interpretation is that
the act of measurement on a quantum system
causes it to abruptly jump or ``collapse'' into one of several possible states with some probability,
an evolution seemingly different from the smooth, unitary transitions
resulting from Schr\"odinger's wave equation. Some have advocated for a measurement postulate
in quantum theory \cite{dirac82}, while others have advocated that our understanding
of quantum measurement should follow from other postulates~\cite{Z09}.

In spite of the aforementioned difficulties in understanding and interpreting quantum measurement,
there is a precise question that one can formulate concerning it:
\begin{quote}
\textit{How much information is gained by performing a given quantum measurement?}
\end{quote}
This question has a rather long history, which to our knowledge begins with the work of
Groenewold~\cite{G71}. In 1971, Groenewold argued on intuitive grounds for the following ``entropy reduction'' 
to quantify the information gained by performing a quantum measurement:
\begin{equation}
H(\rho) - \sum_x p_x H(\rho_x), \label{eq:groenewold}
\end{equation} 
where $\rho$ is the initial state before the measurement occurs, $\{ p_x , \rho_x \}$ 
is the post-measurement ensemble induced by
the measurement, and $H(\sigma) \equiv -\text{tr}[ \sigma \log \sigma]$ is the
von Neumann entropy of a state $\sigma$. The intuition behind this measure is that it
quantifies the reduction in uncertainty after performing a quantum measurement on a
quantum system in state $\rho$, and its form is certainly reminiscent of a
Holevo-like quantity \cite{Holevo73}, although the
classical data in the above Groenewold quantity appears at the
\emph{output} of the process rather than at the \emph{input} as in the case of the Holevo quantity.
Groenewold left open the question of whether
this quantity is non-negative for all measurements, and Lindblad proved that non-negativity
holds whenever the measurement is of the von Neumann-L\"uders kind (projecting onto an eigenspace of
an observable) \cite{L72}. Ozawa then settled the matter 
by proving that the above quantity is non-negative if and only if the post-measurement states are of the form
\begin{equation}
\rho_x = \frac{M_x \rho M_x^\dag}{\text{tr}[M_x^\dag M_x \rho]} \label{eq:special-measurement} ,
\end{equation}
for some operators $\{M_x\}$ such that
$\sum_x  M_x^\dag M_x = \id$ \cite{O86}. Such measurements are termed ``efficient'', and differ from general measurements as the latter may have several operators $M_{x,s}$ corresponding to the result $x$~\cite{FJ01}.
%\footnote{These post-measurement states are indeed a special case of those
%resulting from performing a more general quantum instrument.
%For such a device, the post-measurement states take the form
%$\mathcal{M}_x(\rho) / \text{tr}\{ \mathcal{M}_x(\rho) \}$ where $\mathcal{M} \equiv \{ \mathcal{M}_x \}$
%is a collection of completely positive, trace non-increasing maps such that the sum map $\sum_x \mathcal{M}_x$ is trace preserving.}

The fact that the quantity in \eqref{eq:groenewold} can become negative for some quantum measurements excludes
it from being a generally appealing measure of information gain. To remedy this situation,
Buscemi~\textit{et al.}~later advocated for the following
measure to characterize the information gain of a quantum measurement
when acting upon a particular state $\rho$ \cite{BHH08,PhysRevA.82.052103,shirokov:052202,Wilde12}:
\begin{equation}
I(X:R)_{\omega} \, , \label{eq:winter-info-gain}
\end{equation}
where $I(X:R)_{\omega} \equiv H(X)_\omega + H(R)_\omega - H(XR)_\omega$ is the quantum mutual information of
the following state:
\begin{equation}
\omega_{XR} \equiv \sum_x \vert x\rangle \langle x\vert_X \otimes
\text{tr}_A\{( \mathcal{M}_x \otimes \mathcal{I}_R  ) (\ket{\rho}\bra{\rho}_{AR}) \} .
\label{eq:info-gain-state}
\end{equation}
The register $X$ is a classical register containing the outcome of the measurement, $\mathcal{M} \equiv \{ \mathcal{M}_x \}$
is a collection of completely positive, trace non-increasing maps characterizing the measurement
(for which the sum map $\sum_x \mathcal{M}_x$ is trace preserving), $\mathcal{I}$ is the identity map, and
$\ket{\rho}_{AR}$ is a purification of the initial state $\rho$ on system $A$
to a purifying system $R$. The advantages of the measure
of information gain in \eqref{eq:winter-info-gain} are as follows:
\begin{itemize}
\item It is non-negative.
\item It reduces to Groenewold's quantity in \eqref{eq:groenewold}
for the special case of measurements of the form in \eqref{eq:special-measurement} \cite{BHH08}.
\item It characterizes the trade-off between information and disturbance in quantum measurements~\cite{BHH08}. 
\item It has an operational interpretation in Winter's measurement
compression protocol as the optimal rate at
which a measurement gathers information~\cite{Winter04}.
\end{itemize}This last advantage is the most compelling one from the perspective of
quantum information theory---one cannot really justify a measure as an information measure unless it corresponds
to a meaningful information processing task. Indeed, when reading the first few paragraphs of Groenewold's 
paper \cite{G71}, it becomes evident that his original motivation was information theoretic in nature,
and with this in mind, Winter's measure in \eqref{eq:winter-info-gain}
is clearly the one Groenewold was seeking after all.

In spite of the above arguments in favor of the information measure in \eqref{eq:winter-info-gain} as a measure
of information gain, it is still lacking in one aspect: it is dependent on the state on which the
quantum measurement $\mathcal{M}$ acts in addition to the
measurement itself. A final requirement that one should impose for a measure of information gain by a measurement
is that it should depend only on the measurement itself. A simple way to remedy this problem is
to maximize the quantity in \eqref{eq:winter-info-gain} over all possible input states, leading to the following
characterization of information gain:
\begin{equation}
I(\mathcal{M}) \equiv \max_{\rho_{AR}} I(X:R)_{\omega} \, , \label{eq:max-info-gain}
\end{equation}
for $\omega_{RX}$ as in \eqref{eq:info-gain-state}. 
The quantity
above has already been identified and studied by previous authors as an important information quantity,
being labeled as the ``purification capacity'' of a measurement \cite{J03,jacobs:012102} or
the ``information capacity of a quantum observable'' \cite{H11}. The above quantity also admits
an operational interpretation as the entanglement-assisted capacity of a quantum measurement
for transmitting classical information \cite{Bennett02,Hol01a,H11},
though it is our opinion that this particular operational interpretation is not sufficiently compelling
such that we should associate the measure in \eqref{eq:max-info-gain} with the notion of information gain.
The main aim of this paper is to address this issue by providing a compelling operational interpretation of
the measure in \eqref{eq:max-info-gain}.

% Should we cite the following? italians \cite{DDS11}. Oreshkov \cite{OCMB11}.
% Nielsen \cite{N01}. \cite{BL05}. 
 
%%%%%%%%%%%%%%%%%%%%%%%%%%%%%%%%%%%%%%%%%%%%%

\section{Summary of Results}

In this paper, our main contribution is to show that $I(\mathcal M)$
is the optimal rate at which a measurement gains information when many identical instances of
it act on an arbitrary input state. In our opinion, this new result establishes~\eqref{eq:max-info-gain}
as {\it the} information-theoretic measure of information gain of a quantum measurement.
In more detail, let $A$ denote the input Hilbert space for a
given measurement $\mathcal{M}$. We suppose that a third party prepares an arbitrary
quantum state on a Hilbert space $A^{\otimes n}$, which is equivalent to $n$ identical copies of the 
original Hilbert space $A$, where $n$ is a large positive number. A sender and receiver can then exploit
some amount of shared random bits and classical communication
to {\it simulate} the action of $n$ instances of the measurement
$\mathcal{M}$ (denoted by $\mathcal{M}^{\otimes n}$) on the chosen input state,
in such a way that it becomes physically impossible for the third
party, to whom the receiver passes along the measurement outcomes,
to distinguish between the simulation and the ideal measurement $\mathcal{M}^{\otimes n}$
as $n$ becomes large (the third party can even
keep the purifying system of a purification of the chosen input state in order to
help with the distinguishing task). By design,
the information gained by the measurement is that
relayed by the classical communication.  Following~\cite{Winter04},
we call this task {\it universal measurement compression}. We
prove that the optimal rate of
classical communication is equal to $I(\mathcal M)$,
if sufficient shared randomness is available.

\begin{figure}[ptb]
\begin{center}
\includegraphics[
width=6in]{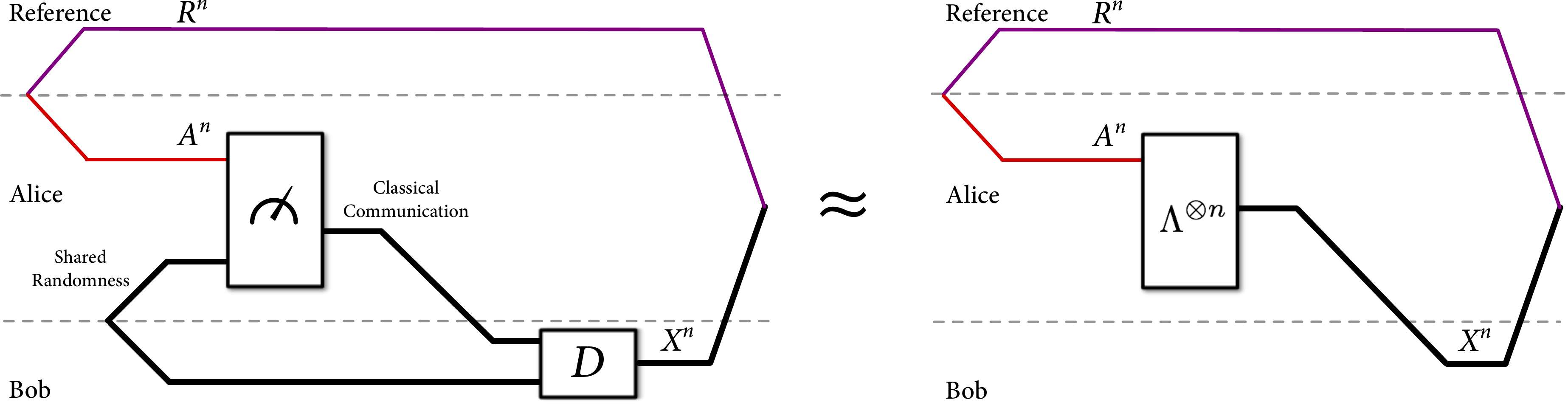}
\end{center}
\caption{Simulation (left) of the measurement $\Lambda^{\otimes n}$ (right). In the simulation, Alice uses shared randomness to perform a new measurement, whose result she communicates to Bob, such that Bob can recover the actual measurement output $X^n$ using the message and the shared randomness. If the simulation scheme works for any input, we can associate the amount of communication with the information gained by the measurement.}
\label{fig:measurement-compression}%
\end{figure}

The information-theoretic task outlined above is also known as {\it channel simulation} (depicted in Figure~\ref{fig:measurement-compression}),
and it has been well studied for the case of fully classical channels (with
classical inputs and classical outputs) \cite{Bennett02,C08,C12} and
fully quantum channels (with quantum inputs and quantum outputs)
\cite{Bennett02,Bennett06,Berta09_2}. The ``in-between'' case 
of channels with quantum inputs and classical outputs (i.e., measurements)
has been studied as well \cite{Winter04} (see also \cite{Wilde12}),
but as mentioned above, the problem of simulating many instances of a quantum measurement
on an {\it arbitrary} input state has not been studied before this paper. Beyond its intrinsic interest
as an information-processing task, channel simulation has two known concrete applications:
in establishing a strong converse rate for a channel coding task \cite{Bennett02,Bennett06,entCost,6284692}
and in rate distortion coding (lossy data compression) \cite{winter_new,DHW11,DHWW12,DWHW12}.

Our paper also features some related results of interest.
We characterize the optimal rate region consisting of the rates of shared randomness
and classical communication that are both necessary and sufficient for the existence of a measurement simulation,
whenever both the sender and receiver are required to obtain the measurement outcomes
(this is known as a feedback
simulation since the sender also obtains the measurement outcomes).
We also characterize
the optimal rate region of shared randomness and classical communication
for a non-feedback simulation,
in which the sender is not required to obtain the
measurement outcomes. Note that if sufficient shared
randomness is available and we are only interested in
quantifying the rate of classical communication,
then there is no advantage of a non-feedback simulation
over a feedback one---the optimal rate of
classical communication is given by \eqref{eq:max-info-gain}.

Our proof technique in this paper
exploits ideas from the approach in \cite{Berta09_2}
for proving the fully quantum reverse Shannon theorem. In fact, one can think of our approach
here as a ``classicalized'' or ``dephased'' version of that approach. In particular, we begin
by establishing a protocol known as ``classically coherent state merging,'' which
is a variation of the well-known state merging protocol \cite{Horodecki05,Horodecki06}
specialized to classically coherent states
(see Section~\ref{sec:prelim} for  definition). We then show how time-reversing this protocol and exchanging the roles
of Alice and Bob leads to a protocol known as ``classically coherent state splitting.''
It suffices for our purposes for this protocol
to use shared randomness
and classical communication rather than entanglement and quantum communication, respectively.
Generalizing this last protocol then leads to a one-shot state-and-channel simulation which is essentially optimal when acting on a single copy
of a \emph{known} state. Finally, we exploit the post-selection technique for quantum channels~\cite{ChristKoenRennerPostSelect}
and the aforementioned state splitting protocol
to show that it suffices to simulate
many instances of a measurement on a purification of a particular de Finetti quantum input state
in order to guarantee that the simulation is asymptotically
perfect when acting on an \emph{arbitrary} quantum state. We then show that
applying very similar reasoning as above along with randomness recycling \cite{Bennett06} 
solves the non-feedback case.

We organize this paper as follows. In Section~\ref{sec:prelim},
we introduce our notation and review
preliminary concepts such as states, distance measures, channels,
isometries, entropies, smooth entropies,
and classically coherent states. Section~\ref{sec:cl-coh-merge-split}
then introduces one-shot protocols for state merging and state
splitting of classically coherent states (the classical state
splitting protocol turns out to be the most important  tool
for proving our main result). Section~\ref{sec:main-result}
provides a proof of our main results for the case of feedback
and non-feedback simulations, and we shortly comment on possible extensions and
applications in Section~\ref{sec:applications}. We finally
conclude in Section~\ref{sec:conclusion} by summarizing our
results and stating some directions for future research.

\section{Preliminaries}

\label{sec:prelim}

\textbf{States, Distance Measures, Channels, Isometries}.
Let $A,B,C,\ldots$ denote finite dimensional Hilbert spaces and let $|A|$ denote
the dimension of~$A$. We establish notation for several sets:
$\cL(A)$ linear operators on $A$,
$\cP(A)$ non-negative linear operators on $A$,
$\cS_{\leq}(A)=\{\rho_{A}\in\cP(A)\,|\,\mathrm{tr}[\rho]\leq1\}$
subnormalized states on $A$,
$\cS(A)=\{\rho_{A}\in\cP(A)\,|\,\mathrm{tr}[\rho]=1\}$ density operators or states on $A$,
and $\cV(A)=\{\rho_{A}\in\cS(A)\,|\,\mathrm{tr}[\rho^2]=1\}$ pure-state density operators on $A$.
We define the purified distance $P(\rho_{A},\sigma_{A})=\sqrt{1-\bar{F}^{2}(\rho_{A},\sigma_{A})}$ for $\rho_{A},\sigma_{A}\in\cS_{\leq}(A)$, where $\bar{F}(\rho_{A},\sigma_{A})=F(\rho_{A},\sigma_{A})+\sqrt{(1-\mathrm{tr}[\rho_{A}])(1-\mathrm{tr}[\sigma_{A}])}$, and the quantum fidelity $F(\rho_{A},\sigma_{A})=\|\sqrt{\rho_{A}}\sqrt{\sigma_{A}}\|_{1}$ with $\|\Gamma_{A}\|_{1}=\mathrm{tr}[\sqrt{\Gamma_{A}\Gamma_{A}^{\dagger}}]$ for $\Gamma_{A}\in\cL(A)$.
We use the notation $\rho_{A}\approx_{\eps}\sigma_{A}$ to indicate that $\rho_{A}$ and $\sigma_{A}$
are $\eps$-close with respect to the purified distance: $P(\rho_{A},\sigma_{A})\leq\eps$.
We define the $\eps$-ball around $\rho_{A}$ as $\cB^{\eps}(\rho_{A})=\{{\tilde\rho_{A}}\in\cS_{\leq}(A):\tilde{\rho}_{A}\approx_{\eps}\rho_{A}\}$. The tensor product of two Hilbert spaces $A$ and $B$ is denoted by $AB\equiv A\otimes B$. Given a multipartite operator $\rho_{AB}\in\cP(AB)$, we unambiguously
write $\rho_{A}=\mathrm{tr}_{B}[\rho_{AB}]$ for the corresponding reduced operator. For $M_{A}\in\cL(A)$, we write $M_{A}\equiv M_{A}\otimes\id_{B}$ for the enlargement on any joint Hilbert space $AB$, where $\id_{B}$ denotes the identity operator acting on $\cL(B)$.
Isometries from $A$ to $B$ are denoted by $V_{A\rightarrow B}$. For Hilbert spaces $A$, $B$ with orthonormal bases $\{\ket{i}_{A}\}_{i=1}^{|A|}$, $\{\ket{i}_{B}\}_{i=1}^{|B|}$ and $|A|=|B|$, the canonical identity mapping from $\cL(A)$ to $\cL(B)$ with respect to these bases is denoted by $\cI_{A\rightarrow B}$, i.e.,~$\cI_{A\rightarrow B}(\ket{i}\bra{j}_{A})=\ket{i}\bra{j}_{B}$. A linear map $\cE_{A\rightarrow B}:\cL(A)\rightarrow\cL(B)$ is positive if $\cE_{A\rightarrow B}(\rho_{A})\in\cP(B)$ for all $\rho_{A}\in\cP(A)$. It is completely positive if the map $(\cE_{A\rightarrow B}\otimes\cI_{C\rightarrow C})$ is positive for all $C$. Completely positive and trace preserving maps are called quantum channels. The support of $\rho_{A}\in\cP(A)$ is denoted by $\mathrm{supp}(\rho_{A})$, the projector onto $\mathrm{supp}(\rho_{A})$ is denoted by $\rho^{0}_{A}$ and $\mathrm{tr}\left[\rho^{0}_{A}\right]=\mathrm{rank}(\rho_{A})$, the rank of $\rho_{A}$. For $\rho_{A}\in\cP(A)$ we write $\|\rho_{A}\|_{\infty}$ for the operator norm of $\rho_{A}$, which is equal to the maximum eigenvalue of $\rho_{A}$.

\textbf{Diamond Norm}. We will need a distance measure for
quantum channels. We use a norm on the set of quantum
channels which measures the bias in distinguishing two
such mappings. In quantum information theory,
this norm is known as the diamond norm~\cite{Kitaev97}. Here,
we present it in a formulation which highlights that
it is dual to the well-known completely bounded
(cb) norm~\cite{Paulsen}. 

\begin{definition}\label{def:diamond}
Let $\cE_{A}:\cL(A)\mapsto\cL(B)$ be a linear map. The \textit{diamond norm} of $\cE_{A}$ is defined as
\begin{align}
\|\cE_{A}\|_{\diamond}=\sup_{k\in\mathbb{N}}\|\cE_{A}\otimes\cI_{k}\|_{1}\ ,
\end{align}
where $\|\cF_{A}\|_{1}=\sup_{\sigma\in\cS_{\leq}(A)}\|\cF_{A}(\sigma_{A})\|_{1}$ and $\cI_{k}$ denotes the identity map on states of a $k$-dimensional quantum system.
\end{definition}

The supremum in Definition~\ref{def:diamond} is reached for $k=|A|$~\cite{Kitaev97, Paulsen}. Two quantum channels $\cE$ and $\cF$ are called $\eps$-close if they are $\eps$-close in the metric induced by the diamond norm. 

\textbf{Classically Coherent States}. We say that a pure state $\ket{\psi}\bra{\psi}_{X_A X_B R} \in \cV(X_A X_B R)$ is {\it classically coherent} with respect to systems $X_A X_B$ if there is an orthonormal basis $\{ \ket{x} \}$ such that $ \ket{\psi}$ can be written in the following form:
\begin{equation}
	\label{eq:clacoh}
\ket{\psi}_{X_A X_B R} = \sum_x \sqrt{p_x} \,\ket{x x}_{X_A X_B} \otimes \ket{\psi_x}_R ,
\end{equation}
for some probability distribution $p_x$ and states $\vert \psi_x \rangle_R$. Harrow realized the importance of classically coherent states for quantum communication tasks \cite{Harrow04PRL}, while Refs.~\cite{Dupuis12,Szehr11} recently exploited this notion in devising a ``decoupling approach'' to the Holevo-Schumacher-Westmoreland coding theorem \cite{Holevo98,Schumacher97} that is useful for our purposes here. Classically coherent states
are also related to Zurek's approach to decoherence \cite{Z91}, in which classicality arises from an inaccessible environment possessing an ``imprint'' of a classical state in superposition (as in the above state if we think of
$X_B$ as an environment).

\textbf{Entropies}. Recall the following standard definitions. The von Neumann entropy of $\rho_{A}\in\cS(A)$ is defined as\footnote{All logarithms in this paper are taken to base 2.}
\begin{align}
H(A)_{\rho}=-\mathrm{tr}\left[\rho_{A}\log\rho_{A}\right]\ .
\end{align}
The quantum relative entropy of $\rho_{A}\in\cS_{\leq}(A)$ with respect to $\sigma_{A}\in\cP(A)$ is given by
\begin{align}
D(\rho_{A}\|\sigma_{A})=\mathrm{tr}[\rho_{A}\log\rho_{A}]-\mathrm{tr}[\rho_{A}\log\sigma_{A}] ,
\end{align}
if $\mathrm{supp}(\rho_{A})\subseteq\mathrm{supp}(\sigma_{A})$ and $\infty$ otherwise. The conditional von Neumann entropy of $A$ given $B$ for $\rho_{AB}\in\cS(AB)$ is defined as
\begin{align}
H(A|B)_{\rho}=-D(\rho_{AB}\|\id_{A}\otimes\rho_{B})\ .
\end{align}
The mutual information between $A$ and $B$ for $\rho_{AB}\in\cS(AB)$ is given by
\begin{align}
I(A:B)_{\rho}=D(\rho_{AB}\|\rho_{A}\otimes\rho_{B})\ .
\end{align}
Note that we can also write
\begin{align}
& H(A|B)_{\rho}=-\inf_{\sigma_{B}\in\cS(B)}D(\rho_{AB}\|\id_{A}\otimes\sigma_{B}) ,\\
& I(A:B)_{\rho}=\inf_{\sigma_{B}\in\cS(B)}D(\rho_{AB}\|\rho_{A}\otimes\sigma_{B})\ .
\end{align}

\textbf{Smooth Entropies}. We now give the definitions of the smooth entropy measures that we need in this work. We define the max-relative entropy of $\rho_{A}\in\cS_{\leq}(A)$ with respect to $\sigma_{A}\in\cP(A)$ as \cite{datta-2008-2}
\begin{align}
D_{\max}(\rho_{A}\|\sigma_{A})=\inf\{\lambda\in\mathbb{R}:2^{\lambda}\cdot\sigma_{A}\geq\rho_{A}\}\ .
\end{align}
The conditional min-entropy of $A$ given $B$ for $\rho_{AB}\in\cS_{\leq}(AB)$ is defined as
\begin{align}
H_{\min}(A|B)_{\rho}=-\inf_{\sigma_{B}\in\cS(B)}D_{\max}(\rho_{AB}\|\id_{A}\otimes\sigma_{B})\ .
\end{align}
In the special case where $B$ is trivial, we get $H_{\min}(A)_{\rho}=-\log\|\rho_{A}\|_{\infty}$. The max-information that $B$ has about $A$ for $\rho_{AB}\in\cS_{\leq}(AB)$ is defined as \cite{Berta09_2}
\begin{align}
I_{\max}(A:B)_{\rho}=\inf_{\sigma_{B}\in\cS(B)}D_{\max}(\rho_{AB}\|\rho_{A}\otimes\sigma_{B})\ .
\end{align}
Note that, unlike the mutual information, the max-information is not symmetric in its arguments.\footnote{For a further discussion of max-based measures for \textit{mutual information}, see~\cite{Ciganovic12}.}

Smooth entropy measures are defined by extremizing the non-smooth
measures over a set of nearby states, where our notion of ``nearby''
is expressed in terms of the purified distance.
The smooth max-information that $B$ has about $A$ for
$\rho_{AB}\in\cS_{\leq}(AB)$ is defined as
\begin{align}\label{eq:maxinfo}
I_{\max}^{\eps}(A:B)_{\rho}=\inf_{\bar{\rho}_{AB}\in\cB^{\eps}(\rho_{AB})}I_{\max}(A:B)_{\bar{\rho}}\ .
\end{align}
In contrast to the non-smooth case, the smooth max-information is approximately symmetric in its arguments.

\begin{lemma}\cite[Corollary 4.2.4]{Ciganovic12}\label{lem:ciganovic}
Let $\eps\geq0$, $\eps'>0$, and $\rho_{AB}\in\cS(AB)$. Then, we have that
\begin{align}
I_{\max}^{\eps+2\eps'}(B:A)_{\rho}\leq I_{\max}^{\eps}(A:B)_{\rho}+\log\left(\frac{2}{(\eps')^{2}}+2\right)\ ,
\end{align}
and the same holds for $A$ and $B$ interchanged.
\end{lemma}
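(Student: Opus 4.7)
\bigskip

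\noindent\textbf{Proof plan.} The plan is to take an optimizer $\bar{\rho}_{AB}\in\cB^{\eps}(\rho_{AB})$ together with some $\sigma_{B}\in\cS(B)$ witnessing $I_{\max}^{\eps}(A:B)_{\rho}=\lambda$, i.e., satisfying
\begin{align}
\bar{\rho}_{AB}\leq 2^{\lambda}\,\bar{\rho}_{A}\otimes\sigma_{B}\ ,
\end{align}
and then to ``replace'' $\sigma_{B}$ by (a projected version of) the marginal $\bar{\rho}_{B}$ at the cost of a small additional factor and a small perturbation in purified distance. Tracing out $A$ gives $\bar{\rho}_{B}\leq 2^{\lambda}\sigma_{B}$, but for a $(B:A)$-type bound we instead need the opposite direction, $\sigma_{B}\lesssim\bar{\rho}_{B}$, which can hold only on a subspace. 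This subspace is produced by a projector that discards the part of $\sigma_{B}$'s support on which it is much larger than $\bar{\rho}_{B}$.

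The key step is to define, for a threshold $t>0$ to be chosen at the end, the spectral projector $\Pi_{B}$ onto the non-negative eigenspace of $t\,\bar{\rho}_{B}-\sigma_{B}$, so that by construction
\begin{align}
\Pi_{B}\sigma_{B}\Pi_{B}\leq t\,\Pi_{B}\bar{\rho}_{B}\Pi_{B}\ ,\qquad \mathrm{tr}\!\left[\Pi_{B}^{\perp}\bar{\rho}_{B}\right]\leq \frac{1}{t}\,\mathrm{tr}\!\left[\Pi_{B}^{\perp}\sigma_{B}\right]\leq \frac{1}{t}\ .
\end{align}
Setting $\tilde{\rho}_{AB}=(\id_{A}\otimes\Pi_{B})\,\bar{\rho}_{AB}\,(\id_{A}\otimes\Pi_{B})$, sandwiching the operator inequality for $\bar{\rho}_{AB}$ with $\id_{A}\otimes\Pi_{B}$, and using $\tilde{\rho}_{B}=\Pi_{B}\bar{\rho}_{B}\Pi_{B}$ together with $\bar{\rho}_{A}\leq \bar{\rho}_{A}/\mathrm{tr}[\bar{\rho}_{A}]=:\tau_{A}\in\cS(A)$, I would obtain
\begin{align}
\tilde{\rho}_{AB}\leq 2^{\lambda}\,\bar{\rho}_{A}\otimes \Pi_{B}\sigma_{B}\Pi_{B}\leq 2^{\lambda}\,t\,\tau_{A}\otimes\tilde{\rho}_{B}\ ,
\end{align}
which, by the definition of $I_{\max}(B:A)$ via infimum over $\tau_{A}\in\cS(A)$, yields $I_{\max}(B:A)_{\tilde{\rho}}\leq \lambda+\log t$.

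It remains to quantify how close $\tilde{\rho}_{AB}$ is to $\bar{\rho}_{AB}$ in purified distance. The bound $\mathrm{tr}[\Pi_{B}^{\perp}\bar{\rho}_{B}]\leq 1/t$ combined with the gentle-measurement-type estimate $\bar{F}(\bar{\rho}_{AB},\tilde{\rho}_{AB})\geq \mathrm{tr}[\Pi_{B}\bar{\rho}_{B}]\geq 1-1/t$ gives
\begin{align}
P(\bar{\rho}_{AB},\tilde{\rho}_{AB})\leq \sqrt{1-(1-1/t)^{2}}\leq \sqrt{2/t}\ ,
\end{align}
so that choosing $t=2/(\eps')^{2}+2$ makes this at most $\eps'$ (which is stronger than the $2\eps'$ allowed); the triangle inequality for the purified distance then gives $\tilde{\rho}_{AB}\in\cB^{\eps+2\eps'}(\rho_{AB})$, and combining this with $I_{\max}(B:A)_{\tilde{\rho}}\leq \lambda+\log(2/(\eps')^{2}+2)$ yields the claim. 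The main obstacle I anticipate is the non-commutativity of $\sigma_{B}$ and $\bar{\rho}_{B}$: one has to handle the sandwiched inequality carefully and invoke a lifted gentle-measurement bound on $AB$ (rather than just on $B$), which is what the above construction of $\Pi_{B}$ via the spectral decomposition of $t\,\bar{\rho}_{B}-\sigma_{B}$ is designed to make work.
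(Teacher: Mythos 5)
The paper does not prove this lemma; it is cited verbatim from \cite[Corollary 4.2.4]{Ciganovic12}, so there is no in-paper proof to compare against. Assessing your argument on its own merits, it is correct, and it is a natural way to establish the approximate symmetry of the smooth max-information. The construction of $\Pi_B$ as the positive spectral projector of $t\bar{\rho}_B-\sigma_B$ does exactly what is needed: $\Pi_B\sigma_B\Pi_B\leq t\,\Pi_B\bar{\rho}_B\Pi_B$ lets you replace the witness $\sigma_B$ by the marginal $\tilde{\rho}_B$ after the projection, while $t\,\Pi_B^\perp\bar{\rho}_B\Pi_B^\perp\leq\Pi_B^\perp\sigma_B\Pi_B^\perp$ controls the truncation error via $\mathrm{tr}[\Pi_B^\perp\bar{\rho}_B]\leq 1/t$; the rest is the gentle-measurement bound of Lemma~\ref{lem:gentle} applied on the joint system. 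One small point worth making explicit: Lemma~\ref{lem:gentle} as stated requires a normalized input, so you should first note that the optimizer $\bar{\rho}_{AB}\in\cB^{\eps}(\rho_{AB})$ may be taken normalized without loss of generality (renormalizing a subnormalized optimizer only decreases its purified distance to $\rho_{AB}$ and leaves $I_{\max}$ unchanged, as the scalar cancels in $D_{\max}(\bar{\rho}_{AB}\|\bar{\rho}_A\otimes\sigma_B)$); otherwise the bound $\mathrm{tr}[\Pi_B\bar{\rho}_B]\geq 1-1/t$ does not follow directly. With that proviso, your estimate actually lands the smoothed state inside $\cB^{\eps+\eps'}(\rho_{AB})$, strictly tighter than the $\eps+2\eps'$ the lemma asserts, so the inclusion into $\cB^{\eps+2\eps'}$ is immediate. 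The argument is also manifestly symmetric in $A\leftrightarrow B$, covering the final clause of the statement.
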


For technical reasons, we also need the following entropic quantities. For $\eps\geq0$, and $\rho_{A}\in\cS_{\leq}(A)$, the max-entropy and its smooth version are defined as
\begin{align}
&H_{\max}(A)_{\rho}=2\log\mathrm{tr}\left[\rho_{A}^{1/2}\right],\\
&H_{\max}^{\eps}(A)_{\rho}=\inf_{\bar{\rho}_{A}\in\cB^{\eps}(\rho_{A})}H_{\max}(A)_{\bar{\rho}}\ .
\end{align}
Furthermore, the zero-R\'enyi entropy and its smooth version are defined as
\begin{align}
&H_{0}(A)_{\rho}=\log\mathrm{rank}(\rho_{A}),\\
&H_{0}^{\eps}(A)_{\rho}=\inf_{\bar{\rho}_{A}\in\cB^{\eps}(\rho_{A})}H_{0}(A)_{\bar{\rho}}\ .
\end{align}

Since all Hilbert spaces in this paper are assumed to be
finite dimensional and the ball $\cB^{\eps}$ is convex and compact \cite{tomamichel:thesis},
we can replace the infima by minima and the suprema by maxima
in all the definitions of this section. We will do so in what follows.

%%%%%%%%%%%%%%%%%%%%%%%%%%%%%%%%%%%%%%%%%%%%%

\section{Classically Coherent State Merging and State Splitting}

\label{sec:cl-coh-merge-split}

We first establish ``one-shot'' protocols for state merging and
state splitting of classically coherent quantum states.
The classical state splitting protocol established in this section will then be the basis
for the universal measurement compression protocol discussed in the next section.

\begin{definition}[State Merging for Classically Coherent States]
Consider a bipartite system with parties Alice and Bob. Let $\eps>0$, and $\rho_{X_{A}X_{B}BR}\in\cV(X_{A}X_{B}BR)$ be classically coherent on $X_{A}X_{B}$ with respect to the basis $\{\ket{x}\}$, where Alice controls $X_{A}$, Bob $X_{B}B$, and $R$ is a reference system. A quantum protocol $\cE$ is called an $\eps$-error state merging of $\rho_{X_{A}X_{B}BR}$ if it consists of applying local operations at Alice's side, sending $q$ qubits from Alice to Bob, local operations at Bob's side, and it outputs a state $\omega_{X_{B'}X_{B}BRX_{A_{1}}B_{1}}=(\cE\otimes\cI_{R})(\rho_{X_{A}X_{B}BR})$ such that
\begin{align}
\omega_{X_{B'}X_{B}BRX_{A_{1}}B_{1}}\approx_{\eps}\cI_{X_{A}\rightarrow X_{B'}}(\rho_{X_{A}X_{B}BR})\otimes\phi^{E}_{X_{A_{1}}B_{1}}\ ,
\end{align}
where $\phi^{E}_{X_{A_{1}}B_{1}}$ is a maximally entangled state of Schmidt rank $E$. The quantity $q$ is called the quantum communication cost, and $e=\lfloor\log E\rfloor$ the entanglement gain.
\end{definition}

\begin{lemma}\label{lem:merging}
Let $\eps>0$, and $\rho_{X_{A}X_{B}BR}\in\cV(X_{A}X_{B}BR)$ be classically coherent on $X_{A}X_{B}$ with respect to the basis $\{\ket{x}\}$. Then there exists an $\eps$-error state merging protocol for $\rho_{X_{A}X_{B}BR}$ with quantum communication cost
\begin{align}
q=\left\lceil H_{0}(X_{A})_{\rho}-H_{\min}(X_{A}|R)_{\rho}+4\cdot\log\frac{1}{\eps}  \right\rceil ,
\end{align}
and entanglement gain
\begin{align}
e=\left\lfloor H_{\min}(X_{A}|R)_{\rho}-4\cdot\log\frac{1}{\eps}\right\rfloor\ .
\end{align}
\end{lemma}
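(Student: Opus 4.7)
The approach adapts the decoupling-plus-Uhlmann strategy for state merging~\cite{Horodecki05, Horodecki06, Berta09_2} to the classically coherent setting. The key structural fact is that the $X_A R$ marginal of $\rho_{X_A X_B B R}$ is the classical--quantum state $\rho_{X_A R}=\sum_x p_x\ket{x}\bra{x}_{X_A}\otimes\rho_x^R$. Decoupling of such a CQ state from its quantum side information can be achieved by a random classical hash of the preferred basis $\{\ket{x}\}$, rather than by a Haar-random unitary. Implementing the hash coherently as a basis-permuting unitary then automatically preserves classical coherence, because any such unitary commutes with dephasing in $\{\ket{x}\}$.

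Set $e=\lfloor H_{\min}(X_A|R)_\rho-4\log(1/\eps)\rfloor$ and $q=\lceil H_0(X_A)_\rho-H_{\min}(X_A|R)_\rho+4\log(1/\eps)\rceil$, and introduce registers $X_{A_1}$ of dimension $2^e$ and $A_2$ of dimension $2^q$, padding $X_A$ if needed so that $|X_A|\leq 2^{e+q}$. Draw a random hash $\pi$ from a 2-universal family, and let Alice implement it coherently by the unitary $U_\pi:\ket{x}_{X_A}\mapsto\ket{h_\pi(x)}_{X_{A_1}}\ket{g_\pi(x)}_{A_2}$, with $g_\pi$ chosen so that $(h_\pi,g_\pi)$ is injective on $\mathrm{supp}(\rho_{X_A})$. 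The quantum leftover hashing lemma against quantum side information then gives
\[
\mathbb{E}_\pi\,\bigl\|\mathrm{tr}_{A_2}\!\bigl[(U_\pi\otimes\id_R)\rho_{X_AR}(U_\pi\otimes\id_R)^\dagger\bigr]-\tau_{X_{A_1}}\otimes\rho_R\bigr\|_1\;\leq\;\sqrt{2^e\cdot 2^{-H_{\min}(X_A|R)_\rho}}\,,
\]
where $\tau_{X_{A_1}}=\id_{X_{A_1}}/2^e$. The stated choice of $e$ makes this $\leq\eps^2$, so some fixed $\pi$ produces a marginal on $X_{A_1}R$ that is $\eps$-close in purified distance to $\tau_{X_{A_1}}\otimes\rho_R$, via the Fuchs--van de Graaf conversion $P\leq\sqrt{\lVert\cdot\rVert_1}$.

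Uhlmann's theorem promotes this decoupling into the full merging. The target pure state $\cI_{X_A\to X_{B'}}(\rho_{X_A X_B B R})\otimes\phi^E_{X_{A_1}B_1}$ has $X_{A_1}R$ marginal exactly $\tau_{X_{A_1}}\otimes\rho_R$, so there is an isometry $V_{A_2 X_B B\to X_{B'} X_B B B_1}$ on Bob's side that brings the post-hash global pure state within purified distance $\eps$ of the target. The protocol is then: Alice applies $U_\pi$ coherently to $X_A$, sends the $q$-qubit register $A_2$ to Bob, and Bob applies $V$; Alice retains $X_{A_1}$, which is maximally entangled with Bob's $B_1$, yielding entanglement gain $e$. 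I expect the main technical hurdle to be the decoupling step: one must invoke the CQ-specialized version of quantum leftover hashing so that 2-universal basis permutations (not Haar-random unitaries) suffice at the min-entropy rate, and the $L_1$-to-purified-distance conversion must be tracked carefully to pin the constant at exactly $4\log(1/\eps)$.
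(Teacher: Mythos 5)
Your proposal is correct and follows essentially the same route as the paper: split $X_A$ into a kept register $X_{A_1}$ and a sent register, decouple $X_{A_1}$ from $R$ by a randomized relabeling of the classical basis at the min-entropy rate, convert the trace-norm bound $\eps^{2}$ to purified distance $\eps$, and finish with Uhlmann's theorem on Bob's side (handling the rank of $\rho_{X_A}$ so that $H_0(X_A)$ rather than $\log|X_A|$ appears). The only difference is that the paper instantiates your ``2-universal hash completed to an injection'' directly as a random permutation of the basis $\{\ket{x}\}$ (Proposition~\ref{prop:extractor}), which gives the same bound $\sqrt{|X_{A_1}|\cdot 2^{-H_{\min}(X_A|R)_\rho}}$ while making the injectivity and coherent implementation automatic.
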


\begin{figure}[ptb]
\begin{center}
\includegraphics[
width=4.5in]{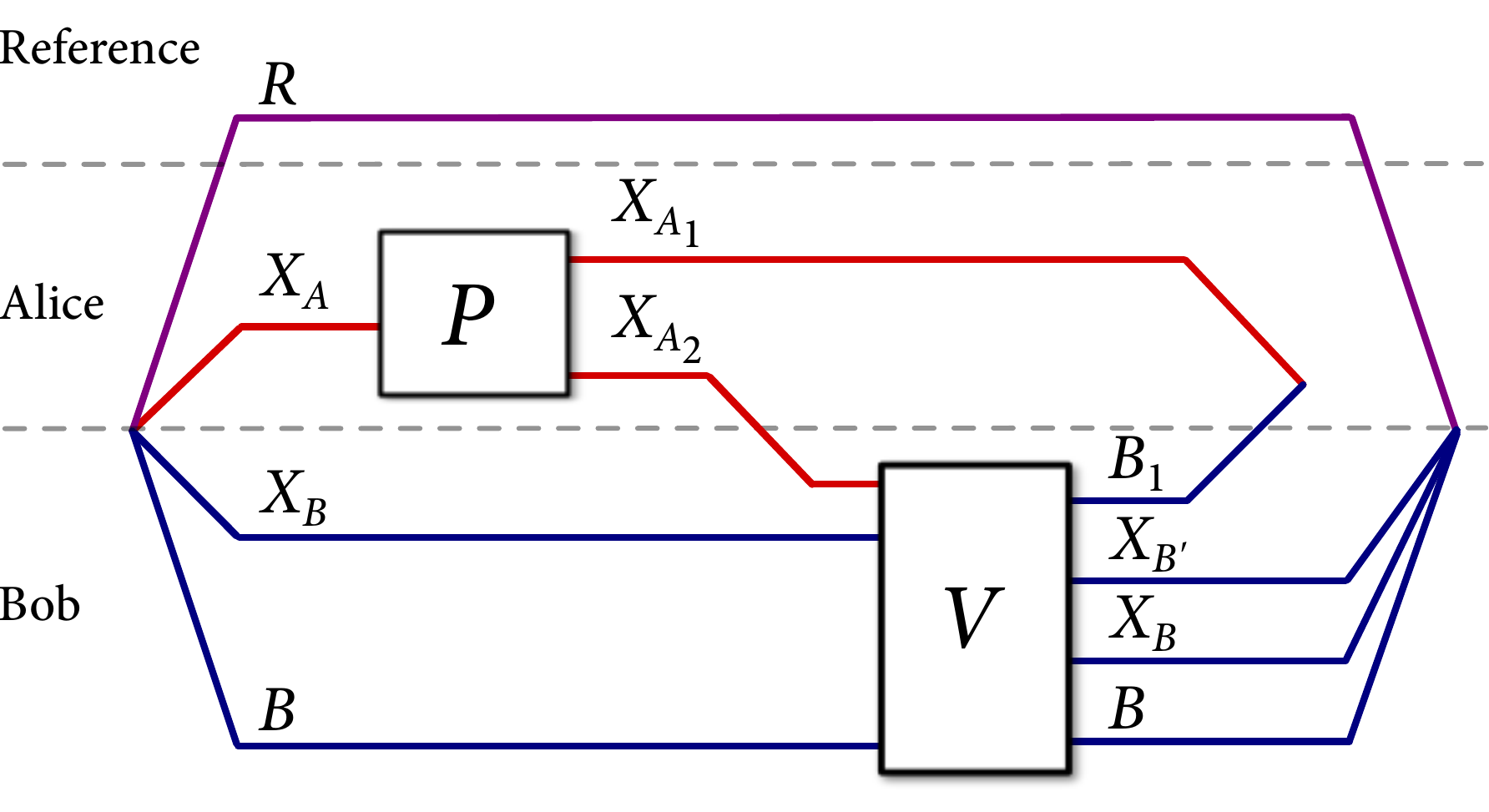}
\end{center}
\caption{The protocol from the proof of Lemma~\ref{lem:merging}
for state merging of a classically coherent state on systems $R X_{A}X_{B}B$.
The operation $P$ is a permutation of states in the orthonormal basis $\{ \ket{x} \}$ of $X_{A}$, and it also
splits $X_{A}$ into two subsystems. The operation $V$ is an isometry
guaranteed by Uhlmann's theorem to complete the merging task, while also generating entanglement between
Alice and Bob.}
\label{fig:state-merging-classically-coherent}%
\end{figure}

\begin{proof}
The intuition is as follows. First Alice applies a particular permutation $P_{X_{A}\rightarrow X_{A_{1}}X_{A_{2}}}$ in the basis $\{\ket{x}\}_{x\in X_{A}}$; it also splits the output into two subsystems $X_{A_1}$ and $X_{A_2}$. Then she sends $X_{A_{2}}$ to Bob, who finally performs a local isometry $V_{X_{A_{2}}X_{B}B\rightarrow X_{B'}X_{B}BB_{1}}$. After Alice applies the permutation, the state on $X_{A_{1}}R$ is approximately given by $\frac{\id_{X_{A_{1}}}}{|X_{A_{1}}|}\otimes\rho_{R}$ and Bob holds a purification of this. But $\frac{\id_{X_{A_{1}}}}{|X_{A_{1}}|}\otimes\rho_{R}$ is the reduced state of $\rho_{X_{B'}X_{B}BR}\otimes\phi^{E}_{X_{A_{1}}B_{1}}$, and since all purifications are equivalent up to local isometries, there exists an isometry $V_{X_{A_{2}}X_{B}B\rightarrow X_{B'}X_{B}BB_{1}}$ on Bob's side that transforms the state into $\rho_{X_{B'}X_{B}BR}\otimes\phi^{E}_{X_{A_{1}}B_{1}}$. Figure~\ref{fig:state-merging-classically-coherent} depicts this protocol.

More formally, let $X_{A}=X_{A_{1}}X_{A_{2}}$ with $\log|X_{A_{2}}|=\lceil\log|X_{A}|-H_{\min}(X_{A}|R)_{\rho}+4\cdot\log\frac{1}{\eps}\rceil$. According to Proposition~\ref{prop:extractor} concerning permutation based extractors, there exists a permutation $P_{X_{A}\rightarrow X_{A_{1}}X_{A_{2}}}$ such that for $\sigma_{X_{A_{1}}X_{A_{2}}BR}=P_{X_{A}\rightarrow X_{A_{1}}X_{A_{2}}}(\rho_{X_{A}X_{B}BR})$,
\begin{align}\label{eq:random}
\left\|\sigma_{X_{A_{1}}R}-\frac{\id_{X_{A_{1}}}}{|X_{A_{1}}|}\otimes\rho_{R}\right\|_{1}\leq\eps^{2}\ .
\end{align}
By an upper bound of the purified distance in terms of the trace distance (Lemma~\ref{lem:distance}), this implies $\sigma_{X_{A_{1}}R}\approx_{\eps}\frac{\id_{X_{A_{1}}}}{|X_{A_{1}}|}\otimes\rho_{R}$. Alice applies this permutation $P_{X_{A}\rightarrow X_{A_{1}}X_{A_{2}}}$ and then sends $X_{A_{2}}$ to Bob; therefore
\begin{align}
q=\left\lceil\log|X_{A}|-H_{\min}(X_{A}|R)_{\rho}+4\cdot\log\frac{1}{\eps}\right\rceil\ .
\end{align}
Uhlmann's theorem~\cite{uhlmann,Jozsa94} guarantees that there exists an isometry $V_{X_{A_{2}}X_{B}B\rightarrow X_{B'}X_{B}BB_{1}}$ such that
\begin{align}\label{eq:uhlmann}
P\left(\sigma_{X_{A_{1}}R},\,\frac{\id_{X_{A_{1}}}}{|X_{A_{1}}|}\otimes\rho_{R}\right)=P\left(V_{X_{A_{2}}X_{B}B\rightarrow X_{B'}X_{B}BB_{1}}(\sigma_{X_{A_{1}}X_{A_{2}}X_{B}BR}), \, \phi^{E}_{X_{A_{1}}B_{1}}\otimes\rho_{X_{B'}X_{B}BR}\right)\ .
\end{align}
Hence the entanglement gain is given by
\begin{align}
e=\left\lfloor H_{\min}(X_{A}|R)_{\rho}-4\cdot\log\frac{1}{\eps}\right\rfloor\ .
\end{align}
Now if $\rho_{X_{A}}$ has full rank, this is already what we want. In general $\log\mathrm{tr}\left[\rho_{X_{A}}^{0}\right]=\log|X_{\hat{A}}|\leq\log|X_{A}|$. But in this case we can restrict $X_{A}$ to the subspace $X_{\hat{A}}$ on which $\rho_{X_{A}}$ has full rank, i.e.\ those $x$ for which $p_x\neq 0$.
\end{proof}

\begin{definition}[State Splitting for Classically Coherent States]\label{def:splitting}
Consider a bipartite scenario with parties Alice and Bob. Let $\eps>0$, and $\rho_{AX_{A}X_{A'}R}\in\cV(AX_{A}X_{A'}R)$ be classically coherent on $X_{A}X_{A'}$ with respect to the basis $\{\ket{x}\}$, where Alice controls $AX_{A}X_{A'}$, and $R$ is a reference system. Furthermore let $\phi^{E}_{A_{1}B_{1}}$ be a maximally entangled state of Schmidt rank $E$ shared between Alice and Bob. A quantum protocol $\cE$ is called an $\eps$-error state splitting of $\rho_{AX_{A}X_{A'}R}$ if it consists of applying local operations at Alice's side, sending $q$ qubits from Alice to Bob, local operations at Bob's side, and it outputs a state $\omega_{AX_{A}X_{B}R}=(\cE\otimes\cI_{R})(\rho_{AX_{A}X_{A'}R}\otimes\phi^{E}_{A_{1}B_{1}})$ such that
\begin{align}
\omega_{AX_{A}X_{B}R}\approx_{\eps}\cI_{X_{A'}\rightarrow X_{B}}(\rho_{AX_{A}X_{A'}R})\ .
\end{align}
The quantity $q$ is called the quantum communication cost, and $e=\lceil\log E\rceil$ the entanglement cost.
\end{definition}

\begin{lemma}\label{lem:splitting}
Let $\eps>0$, and $\rho_{AX_{A}X_{A'}R}\in\cV(AX_{A}X_{A'}R)$ be classically coherent on $X_{A}X_{A'}$ with respect to the basis $\{\ket{x}\}$. Then there exists an $\eps$-error state splitting protocol for $\rho_{AX_{A}X_{A'}R}$ with quantum communication cost
\begin{align}
q=\left\lceil H_{0}(X_{A'})_{\rho}-H_{\min}(X_{A'}|R)_{\rho}+4\cdot\log\frac{1}{\eps}\right\rceil ,
\end{align}
and entanglement cost
\begin{align}
e=\left\lfloor H_{\min}(X_{A'}|R)_{\rho}-4\cdot\log\frac{1}{\eps}\right\rfloor\ .
\end{align}
\end{lemma}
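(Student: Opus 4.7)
The plan is to derive the splitting protocol by time-reversing the classically coherent state merging protocol of Lemma~\ref{lem:merging} and swapping the roles of Alice and Bob. Intuitively, merging consumes a classically coherent resource jointly held by Alice ($X_A$) and Bob ($X_B B$), transfers Alice's share to Bob, and generates entanglement as a byproduct; splitting is the exact dual, in which Alice starts with two coherent copies $X_A X_{A'}$, consumes pre-shared entanglement, and transfers one copy to Bob. This duality is visible at the level of Stinespring dilations, which is why one expects matching parameters in the reduction.

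Concretely, I would first extend Bob's isometry $V_{X_{A_2}X_B B \to X_{B'}X_B B B_1}$ from the proof of Lemma~\ref{lem:merging} to a unitary $\tilde V$ by adjoining a local ancilla initialized to $\ket{0}$; Alice's permutation $P$ is already unitary. The reversed protocol then applies $\tilde V^{\dagger}$, transmits $X_{A_2}$, and applies $P^{-1}$ in that order, and swapping the labels ``Alice'' and ``Bob'' restores the Alice $\to$ Bob communication direction required by Definition~\ref{def:splitting}. Under the relabellings $X_{B'}\to X_{A'}$, $X_B\to X_A$, $B\to A$, $B_1\to A_1$ on Alice's side, and $X_{A_1}\to B_1$ on Bob's side, the initial state $\rho_{X_{B'}X_B B R}\otimes\phi^{E}_{X_{A_1}B_1}$ of the reversed merging protocol becomes the splitting input $\rho_{A X_A X_{A'} R}\otimes\phi^{E}_{A_1 B_1}$, and the final state $\rho_{X_A X_B B R}$ becomes the splitting target $\cI_{X_{A'}\to X_B}(\rho_{A X_A X_{A'} R})$. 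The claimed $q$ and $e$ then follow immediately by substituting $X_{A'}$ for $X_A$ in the expressions of Lemma~\ref{lem:merging}.

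For the error analysis, Lemma~\ref{lem:merging} yields via Uhlmann's theorem an isometry $V$ such that
\begin{align}
V\ket{\sigma}_{X_{A_1}X_{A_2}X_B B R}\approx_{\eps}\ket{\rho}_{X_{B'}X_B B R}\otimes\ket{\phi^{E}}_{X_{A_1}B_1},
\end{align}
where $\ket{\sigma}=P\ket{\rho}$. Since purified distance is preserved under the isometries $\tilde V^{\dagger}$ and $P^{-1}$, applying these in sequence to the right-hand side produces a state $\eps$-close to $\ket{\rho}_{X_A X_B B R}\otimes\ket{0}_{\mathrm{anc}}$, and tracing out the local ancilla delivers the desired $\eps$-close splitting output by monotonicity. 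The only real obstacle is bookkeeping: verifying that no additional shared entanglement is spent in dilating $V$ (the ancilla is local to Alice after the role swap and starts in $\ket{0}$), and that the classically coherent structure persists end-to-end (automatic, since both endpoints have this form by construction and all operations are unitary). I expect no genuine difficulty beyond this.
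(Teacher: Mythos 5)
Your proposal is correct and follows essentially the same route as the paper: time-reverse the merging protocol of Lemma~\ref{lem:merging}, swap Alice and Bob, and relabel systems. The only cosmetic difference is that you handle the non-surjectivity of $V$ by extending it to a unitary $\tilde V$ and tracing out the local ancilla, whereas the paper first projects onto $\mathrm{Im}(V)$ (replacing the state by $\proj{0}$ on the complementary outcome) and then applies $V^{-1}$; both variants give the same $\eps$-error bound via monotonicity of the purified distance.
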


\begin{proof}
We get the desired state splitting protocol by time-reversing the state merging protocol of Lemma~\ref{lem:merging} and interchanging the roles of Alice and Bob.
Figure~\ref{fig:state-splitting-classically-coherent}(a) depicts the state splitting protocol for
classically coherent states. More precisely, we first define an isometry $V_{X_{A'_{2}}X_{A}A\rightarrow X_{A'}X_{A}AA_{1}}$, analogously to $V_{X_{A_{2}}X_{B}B\rightarrow X_{B'}X_{B}BB_{1}}$ of~\eqref{eq:uhlmann} in the state merging protocol. Because all isometries are injective, we can define an inverse of $V$ acting on the image of $V$ (which we denote by $\mathrm{Im}(V)$). The inverse is again an isometry and we denote it by $V^{-1}_{\mathrm{Im}(V)\rightarrow X_{A'_{2}}X_{A}A}$. The protocol starts by measuring the $AX_AX_{A'}A_1$ systems to 
%applying a quantum operation to the state $\rho_{AX_{A}X_{A'}R}\otimes\phi^{E}_{A_{1}B_{1}}$, that first does a measurement on $\rho_{AX_{A}X_{A'}}\otimes\phi^{E}_{A_{1}}$ to 
decide whether $\rho_{AX_{A}X_{A'}}\otimes\phi^{E}_{A_{1}}\in\mathrm{Im}(V)$ or not. If so, the protocol proceeds by applying the isometry $V^{-1}_{\mathrm{Im}(W)\rightarrow X_{A'_{2}}X_{A}A}$, but otherwise the state is discarded and replaced with $\proj{0}_{X_{A'_{2}}X_{A}A}$. This step is necessary because the output of merging is not exactly $\rho_{AX_AX_{A'}R}$. 
 The next step is to send $X_{A'_{2}}$ to Bob, who then applies the permutation $P^{-1}_{X_{A'_{2}}B_{1}\rightarrow X_{B}}$ defined analogously to $P_{X_{A}\rightarrow X_{A_{1}}X_{A_{2}}}$ in~\eqref{eq:random}. By the monotonicity of the purified distance, we get a state that is $\eps$-close to $\cI_{X_{A'}\rightarrow X_{B}}(\rho_{AX_{A}X_{A'}R})$. \end{proof}

\begin{figure}[ptb]
\begin{center}
\includegraphics[
width=6.5in]{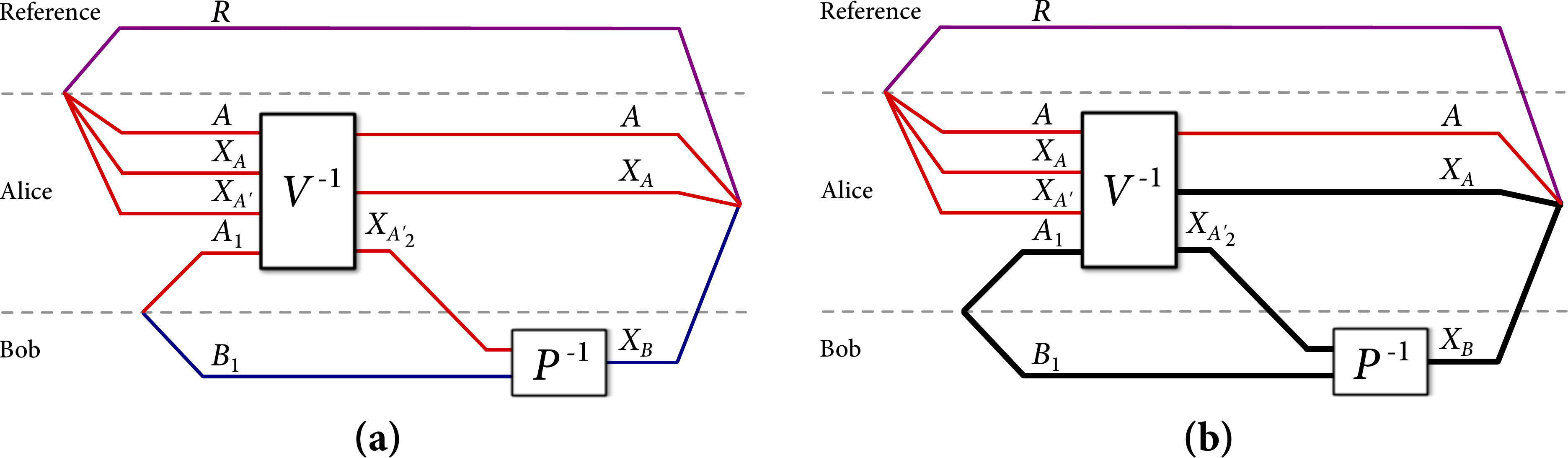}
\end{center}
\caption{\textbf{(a)} A simple protocol for state splitting obtained by time-reversing the state
merging protocol of Lemma~\ref{lem:merging} and interchanging the roles of Alice and Bob.
\textbf{(b)} If it is not necessary to maintain the quantum coherence
of the $X$ systems (if they can be dephased to
classical registers), then the state splitting protocol can exploit shared randomness and classical
communication instead of entanglement and quantum communication, respectively.}
\label{fig:state-splitting-classically-coherent}%
\end{figure}

%For the same premises as in the definition of state splitting of classically coherent states (Definition~\ref{def:splitting}), but 
If we are not concerned with the coherence of the registers $X_{A}$ and $X_{B}$ shared between Alice and Bob, then the protocol given above (Lemma~\ref{lem:splitting}) also works if the entanglement assistance and the quantum communication are replaced by the same amount of shared randomness assistance and classical communication, respectively. More precisely, we define:

\begin{definition}[Classical State Splitting of Classically Coherent States]\label{def:csplitting}
Consider a bipartite system with parties Alice and Bob. Let $\eps>0$, and $\rho_{AX_{A}X_{A'}R}\in\cV(AX_{A}X_{A'}R)$ be classically coherent on $X_{A}X_{A'}$ with respect to the basis $\{\ket{x}\}$, where Alice controls $AX_{A}X_{A'}$, and $R$ is a reference system. Furthermore let $\overline{\phi}^{S}_{X_{A_{1}}X_{B_{1}}}$ denote $S$ bits of shared randomness shared between Alice and Bob. A quantum protocol $\cE$ is called an $\eps$-error classical state splitting of $\rho_{AX_{A}X_{A'}R}$ if it consists of applying local operations at Alice's side, sending $c$ bits from Alice to Bob, local operations at Bob's side, and it outputs a state $\omega_{AX_{A}X_{B}R}=(\cE\otimes\cI_{R})(\rho_{AX_{A}X_{A'}R}\otimes\overline{\phi}^{S}_{X_{A_{1}}X_{B_{1}}})$ such that
\begin{align}
\omega_{AX_{A}X_{B}R}\approx_{\eps}\sum_{x}\bra{x}\rho_{AX_{A}X_{A'}R}\ket{x}_{X_{A'}}\otimes\proj{x}_{X_{B}}\ .
\end{align}
The quantity $c$ is called the classical communication cost, and $s=\lceil\log S\rceil$ shared randomness cost.
\end{definition}

Using the achievability of state splitting of classically coherent states (Lemma~\ref{lem:splitting}) we get the following.

\begin{corollary}\label{cor:csplitting}
Let $\eps>0$, and $\rho_{AX_{A}X_{A'}R}\in\cV(AX_{A}X_{A'}R)$ be classically coherent on $X_{A}X_{A'}$ with respect to the basis $\{\ket{x}\}$. Then there exists a classical $\eps$-error state splitting protocol for $\rho_{AX_{A}X_{A'}R}$ with classical communication cost
\begin{align}
c=\left\lceil H_{0}(X_{A'})_{\rho}-H_{\min}(X_{A'}|R)_{\rho}+4\cdot\log\frac{1}{\eps}\right\rceil
\end{align}
and shared randomness cost
\begin{align}
s=\left\lfloor H_{\min}(X_{A'}|R)_{\rho}-4\cdot\log\frac{1}{\eps}\right\rfloor\ .
\end{align}
\end{corollary}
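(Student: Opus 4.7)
The plan is to obtain the classical state splitting protocol directly from the quantum state splitting protocol of Lemma~\ref{lem:splitting} by dephasing every register on which coherence is not exploited by the intended output. Observe that the target state $\sum_{x}\bra{x}\rho_{AX_{A}X_{A'}R}\ket{x}_{X_{A'}}\otimes\proj{x}_{X_{B}}$ of the classical splitting task is precisely the image of the quantum target $\cI_{X_{A'}\rightarrow X_{B}}(\rho_{AX_{A}X_{A'}R})$ of Lemma~\ref{lem:splitting} under the computational-basis dephasing map $\sigma\mapsto\sum_{x}\proj{x}_{X_{B}}\sigma\proj{x}_{X_{B}}$. Lemma~\ref{lem:splitting} guarantees an output $\omega$ that is $\eps$-close to the quantum target in purified distance, so by monotonicity under this CPTP dephasing map, dephasing $X_B$ of the quantum protocol's output yields a state $\eps$-close to the desired classical target.

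It remains to show that this dephased-output variant of the quantum protocol can be executed using only classical communication and shared randomness at the same rates. The only Bob-side operation in Lemma~\ref{lem:splitting} after receipt of the message is the permutation $P^{-1}_{X_{A'_{2}}B_{1}\rightarrow X_{B}}$, which acts as a permutation of computational-basis labels; hence dephasing its output in $X_B$ equals dephasing each of its two inputs $X_{A'_2}$ and $B_1$ separately beforehand. Dephasing $X_{A'_2}$ immediately after receipt is equivalent to Alice measuring $X_{A'_2}$ in the $\{\ket{x}\}$ basis after her isometry and transmitting the outcome as classical information, so the $q$ qubits of quantum communication become $c=q$ classical bits. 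Dephasing $B_1$ replaces the maximally entangled state $\phi^{E}_{A_{1}B_{1}}$ with $\frac{1}{E}\sum_{x}\proj{x}_{A_{1}}\otimes\proj{x}_{B_{1}}$, which is operationally $\log E$ bits of shared randomness: Alice's subsequent coherent operations on her now-diagonal $A_1$ reduce to a classical average over the shared random bit. This turns the entanglement cost $e$ into a shared-randomness cost $s=e$, matching the formulas in the statement.

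The only step requiring genuine care is the permutation-dephasing commutation, together with the operational equivalence between a one-sided-dephased maximally entangled state and classical shared randomness. Both are elementary, but they must be stated carefully so that no hidden coherence leaks across the cut between Alice and Bob when quantum resources are replaced by classical ones. Once these are established, the $\eps$-error is preserved through every dephasing step by monotonicity of the purified distance, and the rates $c$ and $s$ inherited from Lemma~\ref{lem:splitting} coincide exactly with those claimed in the corollary.
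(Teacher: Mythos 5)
Your proposal is correct and follows essentially the same route as the paper: both rely on the observation that the classical target is the computational-basis dephasing of the quantum target of Lemma~\ref{lem:splitting}, that dephasing commutes with Bob's permutation $P^{-1}_{X_{A'_2}B_1\rightarrow X_B}$, that dephasing $X_{A'_2}$ converts the qubit channel to classical bits, and that dephasing $B_1$ converts the maximally entangled state to shared randomness. The paper phrases this as the protocol ``working up to random phase flips'' on $X_B$, which after averaging is exactly your dephasing map, so the two arguments coincide.
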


\begin{proof}
Note that it is sufficient to find a protocol for state splitting of classically coherent states (as in Definition~\ref{def:splitting}) that only works up to random phase flips on the $X_{B}$ register. These random phase flips then commute with the action of the permutation that takes systems $B_1$ and $X_{A'_2}$ to $X_B$. Thus, if we use the protocol for state splitting of classically coherent states described before (Lemma~\ref{lem:splitting}), random phase flips on $X_{B}$ are the same as random phase flips on $X_{A'_{2}}B_{1}$ before the permutation $P^{-1}_{X_{A'_{2}}B_{1}\rightarrow X_{B}}$ is applied. Since random phase flips on $B_{1}$ just transform the maximally entangled state $\phi_{A_{1}B_{1}}$ to shared randomness $\overline{\phi}_{X_{A_{1}}X_{B_{1}}}$ of the same size (with the relabeling of $A_{1}B_{1}$ to $X_{A_{1}}X_{B_{1}}$), and they dephase the quantum system $X_{A'_{2}}$ to a classical system, the protocol of Lemma~\ref{lem:splitting} also works for classical state splitting of classically coherent states.
\end{proof}

Note that the above idea is similar to how Hsieh {\it et al.}~recovered the Holevo-Schumacher-Westmoreland coding theorem for classical communication from a protocol for entanglement-assisted classical communication \cite{hsieh08}, simply by dephasing shared entanglement to common randomness and replacing random unitaries with random permutations.

However, the classical communication cost of this
protocol is not yet optimal (for the general one-shot case
considered here). To improve this, we use an idea from a recent
proof of the quantum reverse Shannon theorem, and
Theorem~\ref{thm:converse} demonstrates that the rate
found in terms of the smooth max-information is
essentially optimal. The following lemma is the
crucial ingredient for the proof of our main result:
universal measurement compression (Theorem~\ref{thm:main}).

\begin{theorem}\label{thm:splitting}
Let $\eps>0$, $\eps'\geq0$, and $\rho_{AX_{A}X_{A'}R}\in\cV(AX_{A}X_{A'}R)$ be classically coherent on $X_{A}X_{A'}$ with respect to the basis $\{\ket{x}\}$. Then there exists a classical $(\eps+\eps'+\sqrt{8\eps'}+|X_{A'}|^{-1/2})$-error state splitting protocol for $\rho_{AX_{A}X_{A'}R}$ with
\begin{align}\label{eq:msplitting}
&c\leq I_{\max}^{\eps'}(X_{A'}:R)_{\rho}+4\cdot\log\frac{1}{\eps}+4+\log\log|X_{A'}|\\
&c+s\leq H_{0}^{\eps'}(X_{A'})_{\rho}+2+\log\log|X_{A'}|\ ,
\end{align}
where $c$ denotes the classical communication cost, and $s$ the shared randomness cost.
\end{theorem}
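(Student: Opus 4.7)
The plan is to improve Corollary~\ref{cor:csplitting} by applying it not to $\rho$ itself but to ``flat'' spectral pieces of a smoothed variant $\bar\rho$ of $\rho$, preceded by a short classical message from Alice identifying the active piece. This is a classically-coherent transcription of the one-shot ``flattening'' strategy underlying the proof of the quantum reverse Shannon theorem~\cite{Berta09_2}.

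The first step is to pick $\bar\rho_{AX_{A}X_{A'}R}\in\cB^{\eps'}(\rho_{AX_{A}X_{A'}R})$ attaining $I_{\max}^{\eps'}(X_{A'}:R)_{\rho}$. Since $\rho$ is invariant under dephasing on $X_{A'}$ in the basis $\{\ket{x}\}$ and the max-information is monotone non-increasing under this channel, I may assume without loss of generality that $\bar\rho_{X_{A'}}$ is diagonal with weights $\bar p_{x}$. I then discard the indices with $\bar p_{x}<|X_{A'}|^{-2}$ (losing at most $|X_{A'}|^{-1/2}$ in purified distance via the trace-distance bound in Lemma~\ref{lem:distance}) and partition the remainder into ``flat'' groups $G_{k}=\{x:2^{-k-1}<\bar p_{x}\leq 2^{-k}\}$ for $k=1,\dots,K$ with $K\leq\lceil 2\log|X_{A'}|\rceil$. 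Let $\bar\rho^{(k)}=\Pi_{k}\bar\rho\,\Pi_{k}/P_{k}$ denote the normalized restriction to $G_{k}$, where $\Pi_{k}=\sum_{x\in G_{k}}\proj{xx}_{X_{A}X_{A'}}$ and $P_{k}=\mathrm{tr}[\Pi_{k}\bar\rho]$. Each $\bar\rho^{(k)}$ is still classically coherent on $X_{A}X_{A'}$, and its $X_{A'}$ marginal is ``flat'' in the sense that $\bar\rho^{(k)}_{X_{A'}}\leq 2\cdot 2^{-H_{0}(X_{A'})_{\bar\rho^{(k)}}}\,\Pi_{G_{k}}$.

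The protocol now reads: Alice coherently computes the group label $k(x)$ from $X_{A}$ into an ancilla, measures it, transmits the outcome to Bob using $\lceil\log K\rceil\leq\log\log|X_{A'}|+O(1)$ classical bits, and both parties then run the protocol of Corollary~\ref{cor:csplitting} on $\bar\rho^{(k)}$ drawing from a $k$-indexed pool of shared randomness. The central observation that yields the $c$-bound is the flat-state inequality
\begin{align}
H_{0}(X_{A'})_{\bar\rho^{(k)}}-H_{\min}(X_{A'}|R)_{\bar\rho^{(k)}}\leq I_{\max}(X_{A'}:R)_{\bar\rho^{(k)}}+1\leq I_{\max}^{\eps'}(X_{A'}:R)_{\rho}+1 ,
\end{align}
obtained by chaining the flat-marginal bound with the operator inequality defining $I_{\max}$, and then using that projecting onto $G_{k}$ on $X_{A'}$ cannot increase $I_{\max}$ to $R$. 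The bound on $c+s$ follows from $c_{k}+s_{k}\leq H_{0}(X_{A'})_{\bar\rho^{(k)}}+O(1)$ combined with $H_{0}(X_{A'})_{\bar\rho^{(k)}}=\log|G_{k}|\leq H_{0}(X_{A'})_{\bar\rho}\leq H_{0}^{\eps'}(X_{A'})_{\rho}+O(1)$, the last step requiring $\bar\rho$ to be chosen so as to also nearly realize the smooth zero-R\'enyi entropy, which is compatible with the max-information optimization after a small additional truncation at the low-probability tail.

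The hardest step will be the error accounting: one must combine the $\eps'$ from passing from $\rho$ to $\bar\rho$, the $\sqrt{8\eps'}$ arising from a gentle-measurement / Uhlmann-theorem argument that transports the simulation built for $\bar\rho$ back to $\rho$, the per-branch $\eps$ from Corollary~\ref{cor:csplitting} (which survives averaging over $k$ by joint convexity of the purified distance), and the $|X_{A'}|^{-1/2}$ from the discarded low-weight tail, all cleanly via the triangle inequality for the purified distance. Equally delicate is confirming that a single choice of $\bar\rho$ and a single branching protocol simultaneously validate both the $c$ and $c+s$ bounds without padding to the worst-case branch pushing the total resource count past the stated targets.
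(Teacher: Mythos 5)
Your proposal follows essentially the same route as the paper's proof: a preliminary measurement partitioning $X_{A'}$ into $\approx 2\log|X_{A'}|$ flat weight bands, per-band application of Corollary~\ref{cor:csplitting}, the flat-band inequality $H_0 - H_{\min} \leq I_{\max}+1$, monotonicity of $I_{\max}$ under projective pre-processing, and a final smoothing step. The two points you flag as delicate are exactly those the paper handles via Lemma~\ref{lem:simsmoothing} (truncating the $I_{\max}$-optimizing $\tilde\rho$ by a projection on $X_{A'}$ to also realize the smooth zero-R\'enyi entropy) and Lemma~\ref{lem:nonnegative} (such a projection cannot increase $I_{\max}$), which together give the single classically coherent $\bar\rho$ and the $\eps'+\sqrt{8\eps'}$ contribution to the error budget.
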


\begin{proof}
The idea for the protocol is as follows. Let $\rho_{AX_{A}X_{A'}R}=\proj{\rho}_{AX_{A}X_{A'}R}$ with
\begin{align}
\ket{\rho}_{AX_{A}X_{A'}R}=\sum_{x}\sqrt{p_{x}}\cdot\ket{xx}_{X_{A}X_{A'}}\otimes\ket{\rho^{x}}_{AR}\ .
\end{align}
First, in our proof, we disregard all the $x$ with $p_{x}\leq|X_{A'}|^{-2}$.
This introduces an error $|X_{A'}|^{-1/2}$, but the error at the end of the protocol
is still upper bounded by $|X_{A'}|^{-1/2}$ due to the
monotonicity of the purified distance. As the next step, we let Alice perform
a measurement $W_{X_{A'}\rightarrow X_{A'}Y_{A}}$ with
roughly $2\cdot\log|X_{A'}|$ measurement outcomes in the
basis $\{\ket{x}\}_{x\in X_{A'}}$. That is, the state after the measurement is of the form
\begin{align}
\omega_{AX_{A}X_{A'}RY_{A}}=\sum_{y}q_{y}\cdot\rho^{y}_{AX_{A}X_{A'}R}\otimes\proj{y}_{Y_{A}}\ ,
\end{align}
where the index $y$ indicates which measurement outcome occurs, $q_{y}$ denotes its probability, and $\rho^{y}_{AX_{A}X_{A'}R}$ is the corresponding post-measurement state. Then conditioned on the index $y$, we use the classical state splitting protocol for classically coherent states from Lemma~\ref{cor:csplitting} for each state $\rho^{y}_{AX_{A}X_{A'}R}$, and denote the corresponding classical communication cost and shared randomness cost by $c_{y}$ and $s_{y}$, respectively. The total amount of classical communication we need for this is no larger than $\max_{y}c_{y}$, plus the amount needed to send the register $Y_{A}$ (which is of order $\log\log|X_{A'}|$).
The sum cost is no larger than $\max_{y}c_{y} + s_{y}$ (along with the amount for sending $Y_{A}$). This completes the description of the classical state splitting protocol for $\rho_{AX_{A}X_{A'}R}$. All that remains to do is to bring the expression for the classical communication cost and the sum cost into the right form. In the following, we describe the proof in detail.

Let $Q=\lceil2\cdot\log|X_{A'}|-1\rceil$, $Y=\{0,1,\ldots,Q,(Q+1)\}$ and let $\{T_{X_{A'}}^{y}\}_{y\in Y}$ be a collection of projectors on $X_{A'}$ defined as
\begin{align}
T_{X_{A'}}^{Q+1}=\sum_{\substack{x\\0\leq p_{x}\leq2^{-2\log|X_{A'}|}}}\proj{x}_{X_{A'}}\ ,\qquad T_{X_{A'}}^{Q}=\sum_{\substack{x\\ 2^{-2\log|X_{A'}|} \leq p_{x}\leq2^{-Q} }}\proj{x}_{X_{A'}}\ ,
\end{align}
and for $y=0,1,\dots,(Q-1)$ as
\begin{align}
T_{X_{A'}}^{y}=\sum_{\substack{x\\2^{-(y+1)}\leq p_{x}\leq2^{-y}}}\proj{x}_{X_{A'}}\ .
\end{align}
These define a measurement
\begin{align}\label{eq:preproc}
W_{X_{A'}\rightarrow X_{A'}Y_{A}}(\cdot)=\sum_{y\in Y}T_{X_{A'}}^{y}(\cdot)T_{X_{A'}}^{y}\otimes\proj{y}_{Y_{A}}\ ,
\end{align}
where the vectors $\ket{y}_{Y_{A}}$ form an orthonormal basis, and $Y_{A}$ is at Alice's side. Furthermore let \begin{align}
q_{y} & =\mathrm{tr}\left[T_{X_{A'}}^{y}\rho_{X_{A'}}\right], \\
\rho_{AX_{A}X_{A'}R}^{y} & =q_{y}^{-1}\cdot T_{X_{A'}}^{y}\rho_{AX_{A}X_{A'}R}T_{X_{A'}}^{y},
\end{align}
and define the sub-normalized state
\begin{align}
\bar{\rho}_{AX_{A}X_{A'}R}=\sum_{y=0}^{Q}q_{y}\cdot\rho^{y}_{AX_{A}X_{A'}R}\ .
\end{align}
We have
\begin{align}\label{eq:cutoff}
P(\bar{\rho}_{AX_{A}X_{A'}R},\rho_{AX_{A}X_{A'}R})&=\sqrt{1-F^{2}(\bar{\rho}_{AX_{A}X_{A'}R},\rho_{AX_{A}X_{A'}R})}%=\sqrt{1-|\langle\bar{\rho}|\rho\rangle_{AX_{A}X_{A'}R}|^{2}}
\\
& \leq \sqrt{1-\sum_{y=0}^{Q}q_{y}}=\sqrt{q_{Q+1}}\leq\sqrt{|X_{A'}|\cdot2^{-2\log|X_{A'}|}}=|X_{A'}|^{-1/2}\ .
\end{align}
We proceed by defining the operations that we need for the classical state splitting protocol for $\bar{\rho}_{AX_{A}X_{A'}R}$. We want to use the $\eps$-error classical state splitting protocol from Corollary~\ref{cor:csplitting} for each $\rho^{y}_{AX_{A}X_{A'}R}$. For $y=0,1,\ldots,Q$ this protocol has a classical communication cost
\begin{align}
c_{y}\leq H_{0}(X_{A'})_{\rho^{y}}-H_{\min}(X_{A'}|R)_{\rho^{y}}+4\cdot\log\frac{1}{\eps}+1\ ,
\end{align}
and sum cost
\begin{align}\label{eq:sumcost}
c_{y}+s_{y}\leq H_{0}(X_{A'})_{\rho^{y}}\ ,
\end{align}
where $s_{y}$ denotes the shared randomness cost.

For $X_{A_{1}}$ on Alice's side, $X_{B_{1}}$ on Bob's side, and $X_{A_{1}^{y}}$, $X_{B_{1}^{y}}$ $2^{s_{y}}$-dimensional subspaces of $X_{A_{1}}$, $X_{B_{1}}$ respectively, the classical state splitting protocol from Corollary~\ref{cor:csplitting} has basically the following form: apply some isometry $V_{AX_{A'}X_{A}X_{A_{1}^{y}}\rightarrow AX_{(A'_{2})^{y}}X_{A}}$ on Alice's side, send $X_{(A'_{2})^{y}}$ from Alice to Bob (relabel it to $X_{B_{2}^{y}}$), and then apply some isometry $U_{X_{B_{1}^{y}}X_{B_{2}^{y}}\rightarrow B}$ on Bob's side
($U_{X_{B_{1}^{y}}X_{B_{2}^{y}}\rightarrow B}$ is the inverse permutation
discussed in the proof of Corollary~\ref{cor:csplitting}). As the next ingredient, we define the operations that supply the shared randomness of size $s_{y}$. For $y=0,1,\ldots,Q$, let $S_{X_{A_{1}^{y}}}$ and $S_{X_{B_{1}^{y}}}$ be the local operations at Alice's and Bob's side respectively, that put shared randomness of size $s_{y}$ on $X_{A_{1}^{y}}X_{B_{1}^{y}}$.

\begin{figure}[ptb]
\begin{center}
\includegraphics[
width=6.5in]{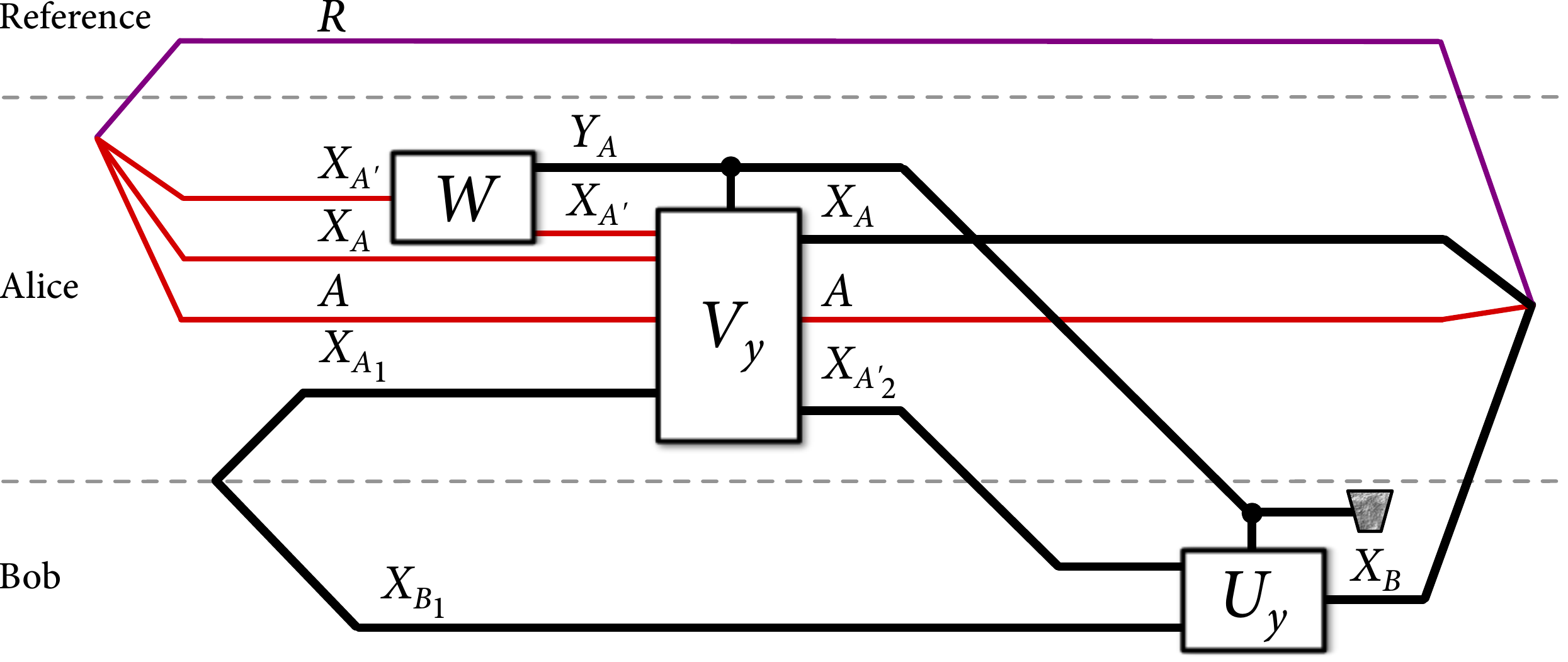}
\end{center}
\caption{Our final one-shot protocol for state splitting that achieves the smooth max-information rate
of Theorem~\ref{thm:splitting}. The converse theorem in Theorem~\ref{thm:converse}
guarantees that this one-shot protocol
is essentially optimal in terms of its classical communication cost.}
\label{fig:state-splitting-classically-coherent-max-info}%
\end{figure}

We are now ready to put the steps together and give the protocol for classical state splitting of $\bar{\rho}_{AX_{A}X_{A'}R}$ (depicted in Figure~\ref{fig:state-splitting-classically-coherent-max-info}).
Alice applies the measurement $W_{X_{A'}\rightarrow X_{A'}Y_{A}}$ from~\eqref{eq:preproc} followed by
\begin{align}
S_{A_{1}Y_{A}}=\sum_{y=1}^{Q}S_{X_{A_{1}^{y}}}\otimes\proj{y}_{Y_{A}}\ ,
\end{align}
and the isometry 
\begin{align}
V_{AX_{A}X_{A'}X_{A_{1}}Y_{A}\rightarrow AX_{A'_{2}}X_{A}Y_{A}}=\sum_{y=0}^{Q}V_{AX_{A}X_{A'}X_{A_{1}^{y}}\rightarrow AX_{(A'_{2})^{y}}X_{A}}\otimes\proj{y}_{Y_{A}}\ .
\end{align}
Afterwards she sends $X_{A'_{2}}$ and $Y_{A}$, that is
\begin{align}\label{eq:ccost}
c\leq\max_{y}\left[H_{0}(X_{A'})_{\rho^{y}}-H_{\min}(X_{A'}|R)_{\rho^{y}}\right]+4\cdot\log\frac{1}{\eps}+1+\log\left\lceil2\cdot\log|X_{A'}|\right\rceil
\end{align}
bits to Bob (and we now rename $X_{A'_{2}}$ to $X_{B_{2}}$ and $Y_{A}$ to $Y_{B}$). Then Bob applies
\begin{align}
S_{B_{1}Y_{B}}=\sum_{y=1}^{Q}S_{B_{1}^{y}}\otimes\proj{y}_{Y_{B}}\ ,
\end{align}
followed by the isometry
\begin{align}\label{eq:decoding}
U_{X_{B_{1}}X_{B_{2}}Y_{B}\rightarrow X_{B}Y_{B}}=\sum_{y=0}^{Q}U_{X_{B_{1}^{y}}X_{B_{2}^{y}}\rightarrow X_{B}}\otimes\proj{y}_{Y_{B}}\ .
\end{align}
We obtain a sub-normalized state
\begin{align}
\sigma_{AX_{A}X_{B}RY_{B}}=\sum_{y=0}^{Q}q_{y}\cdot\tilde{\rho}^{y}_{AX_{A}X_{B}R}\otimes\proj{y}_{Y_{B}}\ ,
\end{align}
with $\tilde{\rho}^{y}_{AX_{A}X_{B}R}\approx_{\eps}\cI_{X_{A'}\rightarrow X_{B}}(\rho^{y}_{AX_{A}X_{A'}R})$ for $y=0,1,\ldots,Q$. By the (quasi) convexity of the purified distance in its arguments (Lemma~\ref{lem:conv}), and the monotonicity of the purified distance under partial trace, we have
\begin{align}
\sigma_{AX_{A}X_{B}R}\approx_{\eps}\cI_{X_{A'}\rightarrow X_{B}}(\bar{\rho}_{AX_{A}X_{A'}R})\ .
\end{align}
Hence, we have shown the existence of an $\eps$-error classical state splitting protocol for $\bar{\rho}_{AX_{A}X_{A'}R}$ with classical communication cost as in~\eqref{eq:ccost}. But by the monotonicity of the purified distance, and the triangle inequality for the purified distance, this implies the existence of an $\left(\eps+|X_{A'}|^{-1/2}\right)$-error classical state splitting protocol for $\rho_{AX_{A}X_{A'}R}$, with the same classical communication cost as in~\eqref{eq:ccost}.

We now proceed by simplifying~\eqref{eq:ccost}. We have $H_{0}(X_{A'})_{\rho^{y}}\leq H_{\min}(X_{A'})_{\rho^{y}}+1$ for $y=0,1,\ldots,Q$ as can be seen as follows:
\begin{align}
2^{-(y+1)}\leq\lambda_{\min}(q_y \cdot \rho_{X_{A'}}^{y})\leq\mathrm{rank}^{-1}\left(q_y \cdot  \rho_{X_{A'}}^{y}\right)\leq\left\|q_y \cdot  \rho_{X_{A'}}^{y}\right\|_{\infty}\leq2^{-y}\ ,
\end{align}
where $\lambda_{\min}(\rho_{X_{A'}}^{y})$ denotes the smallest non-zero eigenvalue of $\rho_{X_{A'}}^{y}$. Thus,
\begin{align}\label{eq:minmaxequiv}
\mathrm{rank}\left(q_y \cdot  \rho_{X_{A'}}^{y}\right)\leq2^{y+1}=2^{y}\cdot2\leq\left\|q_y \cdot  \rho_{X_{A'}}^{y}\right\|_{\infty}^{-1}\cdot2\ ,
\end{align}
and this is equivalent to the claim. Hence, we get an $(\eps+|X_{A'}|^{-1/2})$-error classical state splitting protocol for $\rho_{AX_{A}X_{A'}R}$ with classical communication cost
\begin{align}
c&\leq\max_{y}\left[H_{\min}(X_{A'})_{\rho^{y}}-H_{\min}(X_{A'}|R)_{\rho^{y}}\right]+4\cdot\log\frac{1}{\eps}+2+\log\left\lceil2\cdot\log|X_{A'}|\right\rceil\\
&\leq\max_{y}\left[H_{\min}(X_{A'})_{\rho^{y}}-H_{\min}(X_{A'}|R)_{\rho^{y}}\right]+4\cdot\log\frac{1}{\eps}+4+\log\log|X_{A'}|\ .
\end{align}
Using a lower bound for the max-information in terms of min-entropies (Lemma~\ref{lem:minbound}), and the behaviour of the max-information under projective measurements (Lemma~\ref{lem:projective}) this simplifies to
\begin{align}
c&\leq\max_{y}I_{\max}(X_{A'}:R)_{\rho^{y}}+4\cdot\log\frac{1}{\eps}+4+\log\log|X_{A'}|
\label{eq:ccost1-max}\\
&\leq I_{\max}(X_{A'}:R)_{\rho}+4\cdot\log\frac{1}{\eps}+4+\log\log|X_{A'}|\ .\label{eq:ccost2}
\end{align}
Furthermore, it easily seen from~\eqref{eq:sumcost} that
\begin{align}\label{eq:sumcost2}
c+s&\leq H_{0}(X_{A'})_{\rho}+2+\log\log|X_{A'}|\ .
\end{align}

As the last step, we reduce the classical communication and shared randomness cost by smoothing the max-information and the zero-R\'enyi entropy in~\eqref{eq:ccost2} and~\eqref{eq:sumcost2}, respectively. For that, we do not apply the protocol as described above to the state $\rho_{AX_{A}X_{A'}R}$, but pretend that we have another classically coherent (sub-normalised) state $\bar{\rho}_{AX_{A}X_{A'}R}$ that is $(\sqrt{8\eps'}+\eps')$-close to $\rho_{AX_{A}X_{A'}R}$, and then apply the protocol for $\bar{\rho}_{AX_{A}X_{A'}R}$. By the monotonicity of the purified distance, the additional error term from this is upper bounded by $\sqrt{8\eps'}+\eps'$, and by the triangle inequality for the purified distance this results in a total accuracy of $\eps+\eps'+\sqrt{8\eps'}+|X_{A'}|^{-1/2}$. We now proceed by defining $\bar{\rho}_{AX_{A}X_{A'}R}$.  Let $\tilde{\rho}_{X_{A'}R}\in\cB^{\eps'}(\rho_{X_{A'}R})$ such that
\begin{align}\label{eq:finalsmooth}
I_{\max}^{\eps'}(X_{A'}:R)_{\rho}=I_{\max}(X_{A'}:R)_{\tilde{\rho}}\ .
\end{align}
Furthermore, since the zero-R\'enyi entropy can be smoothed by applying a projection (Lemma~\ref{lem:simsmoothing}), there exists $\Pi_{X_{A'}}\in\cP(X_{A'})$ with $\Pi_{X_{A'}}\leq\id_{X_{A'}}$ such that
\begin{align}\label{eq:simsmoothing}
H_{0}^{2\eps'}(X_{A'})_{\tilde{\rho}}\geq H_{0}(X_{A'})_{\bar{\rho}}\ ,
\end{align}
%\markwilde{I still don't understand why we are using $2\eps'$ and $\sqrt{8 \eps'}$ in the above and below rather than what is there in the lemma statement ...}
with $\bar{\rho}_{X_{A'}}=\Pi_{X_{A'}}\tilde{\rho}_{X_{A'}}\Pi_{X_{A'}}\in\cB^{\sqrt{8\eps'}}(\tilde{\rho}_{X_{A'}})$ classical with respect to the basis $\{\ket{x}\}$. By the properties of the purified distance~\cite[Chapter 3]{tomamichel:thesis}, there exists a purification $\bar{\rho}_{AX_{A}X_{A'}R}\in\cB^{\sqrt{8\eps'}+\eps'}(\rho_{AX_{A}X_{A'}R})$ that is classically coherent on $X_{A}X_{A'}$ with respect to the basis $\{\ket{x}\}$. Applying the protocol for this state $\bar{\rho}_{AX_{A}X_{A'}R}$, the classical communication cost~\eqref{eq:ccost2} becomes by the monotonicity of the max-information (Lemma~\ref{lem:nonnegative}) and~\eqref{eq:finalsmooth},
\begin{align}
c\leq I_{\max}^{\eps'}(X_{A'}:R)_{\rho}+4\cdot\log\frac{1}{\eps}+4+\log\log|X_{A'}|\ ,
\end{align}
and by~\eqref{eq:simsmoothing} the sum cost~\eqref{eq:sumcost2} becomes
\begin{align}
c+s\leq H_{0}^{\eps'}(X_{A'})_{\rho}+2+\log\log|X_{A'}|\ .
\end{align}
\end{proof}

For completeness we also state a converse for the
classical communication cost of classical state splitting
of classically coherent states.

\begin{theorem}\label{thm:converse}
Let $\eps\geq0$, $\eps'>0$, and $\rho_{AX_{A}X_{A'}R}\in\cV_{\leq}(\cH_{AX_{A}X_{A'}R})$ be classically coherent on $X_{A}X_{A'}$ with respect to the basis $\{\ket{x}\}_{x\in X_{A}X_{A'}}$. Then the classical communication cost for any $\eps$-error classical state splitting protocol for $\rho_{AX_{A}X_{A'}R}$ is lower bounded by\footnote{We do not mention the cost of the shared randomness
resource, since the statement holds independently of it.}
\begin{align}
c\geq I^{\eps+\eps'}_{\max}(X_{A'}:R)_{\rho}
-\log\left(\frac{8}{(\eps')^{2}}+2\right)\ .
\end{align}
\end{theorem}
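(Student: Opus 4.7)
The plan is to track $I_{\max}(R:\text{Bob})$ across the protocol, bound it by $c$ at the output, and then convert this one-sided inequality into the two-sided statement about $I_{\max}^{\eps+\eps'}(X_{A'}:R)_{\rho}$ via the approximate symmetry of smooth max-information from Lemma~\ref{lem:ciganovic}.

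First I would purify Alice's and Bob's local operations into isometries with ancillas, so that the only non-trivial communication event is the transmission of a classical register $M$ of dimension $2^{c}$ from Alice to Bob. Letting $\tau_{X_{B_{1}}}$ denote the marginal of the shared-randomness state $\bar{\phi}^{S}_{X_{A_{1}}X_{B_{1}}}$ on Bob's side, the initial joint state of the reference and Bob's system is $\rho_{R}\otimes\tau_{X_{B_{1}}}$, so $I_{\max}(R:X_{B_{1}})=0$. Alice's local operations do not touch $R$ or $X_{B_{1}}$ and hence leave this marginal unchanged.

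The crucial step is the chain-rule bound $I_{\max}(R:X_{B_{1}}M)\leq c$ after the message arrives. Writing the classical-quantum state as $\sigma_{RX_{B_{1}}M}=\sum_{m}\sigma_{RX_{B_{1}}}^{m}\otimes\proj{m}_{M}$, the fact that tracing out $M$ recovers the previous marginal gives $\sum_{m}\sigma_{RX_{B_{1}}}^{m}=\rho_{R}\otimes\tau_{X_{B_{1}}}$. Each summand is therefore dominated by $\rho_{R}\otimes\tau_{X_{B_{1}}}$, so
\[
\sigma_{RX_{B_{1}}M}\leq 2^{c}\cdot\rho_{R}\otimes\tau_{X_{B_{1}}}\otimes\frac{\id_{M}}{2^{c}},
\]
proving the claim. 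Bob's subsequent local processing from $X_{B_{1}}M$ to $X_{B}$ cannot increase the max-information with the untouched $R$, so $I_{\max}(R:X_{B})_{\omega}\leq c$.

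To conclude, the simulation condition $\omega_{RX_{B}}\approx_{\eps}\omega_{RX_{B}}^{\text{ideal}}$ with $\omega_{RX_{B}}^{\text{ideal}}=\cI_{X_{A'}\rightarrow X_{B}}(\rho_{X_{A'}R})$ yields $I_{\max}^{\eps}(R:X_{B})_{\omega^{\text{ideal}}}\leq c$. Applying Lemma~\ref{lem:ciganovic} with smoothing parameter $\eps'/2$ (so that $2(\eps'/2)=\eps'$) together with the relabeling identity $I_{\max}^{\eps+\eps'}(X_{B}:R)_{\omega^{\text{ideal}}}=I_{\max}^{\eps+\eps'}(X_{A'}:R)_{\rho}$ then produces the stated lower bound $c\geq I_{\max}^{\eps+\eps'}(X_{A'}:R)_{\rho}-\log(8/(\eps')^{2}+2)$. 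The main obstacle is the chain-rule inequality $I_{\max}(R:X_{B_{1}}M)\leq c$: although it is an elementary operator inequality once written out, it crucially requires the initial $R$--$X_{B_{1}}$ marginal to be a product, which is precisely what distinguishes shared randomness from entanglement and is what makes the classical-communication converse work here (an analogous argument for entanglement-assisted protocols would fail, and indeed entanglement-assisted channel simulation can beat the rate in~\eqref{eq:msplitting}).
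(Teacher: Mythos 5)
Your proposal is correct and follows essentially the same route as the paper's proof: track the max-information that Bob's systems hold about the reference, argue it starts at zero, increases by at most $c$ upon receipt of the classical message, is non-increasing under Bob's local operations, and then convert $I_{\max}(R:X_B)_{\omega}\leq c$ into the stated bound via the closeness condition and the approximate symmetry of Lemma~\ref{lem:ciganovic}. The only cosmetic difference is that you prove the key chain-rule step $I_{\max}(R:X_{B_1}M)\leq c$ directly from the product structure of the initial reference--randomness marginal, whereas the paper invokes the more general Lemma~\ref{lem:cdimension}; both yield the same constant.
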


\begin{proof}
We have a look at the correlations between Bob and the reference by analyzing the max-information that Bob has about the reference (recall that this will
be a max-information of the form $I_{\max}(R:B)$ where $R$ is the reference system and $B$ here is a general label for whatever Bob's system is). At the beginning of any protocol, there is no register at Bob's side correlated with the reference and therefore the max-information that Bob has about the reference is zero. Since back communication is not allowed, we can assume that the protocol for state splitting has the following form: applying local operations at Alice's side, sending bits from Alice to Bob and then applying local operations at Bob's side. Local operations at Alice's side have no influence on the max-information that Bob has about the reference. By sending $c$ bits from Alice to Bob, the max-information that Bob has about the reference can increase, but at most by $c$ (Corollary~\ref{lem:cdimension}). By applying local operations at Bob's side, the max-information that Bob has about the reference can only decrease (Lemma~\ref{lem:mono}). So the max-information that Bob has about the reference is upper bounded by $c$. Therefore, any state $\omega_{X_{B}R}$ at the end of a state splitting protocol must satisfy $I_{\max}(R:X_{B})_{\omega}\leq c$. But we also need $\omega_{X_{B}R}\approx_{\eps}\rho_{X_{B}R}\equiv\cI_{X_{A'}\rightarrow X_{B}}(\rho_{X_{A'}R})$ by the definition of $\eps$-error state splitting (Definition~\ref{def:splitting}). Using the definition of the smooth max-information, and that the smooth max-information is approximately symmetric in its arguments (Lemma~\ref{lem:ciganovic}), we obtain the bound in the statement of the theorem.
\end{proof}

\section{Universal Measurement Compression} \label{sec:main-result}

In this section, we establish our main result: feedback and non-feedback universal measurement compression. Theorem~\ref{thm:main} characterizes the trade-off between shared randomness and classical communication required to simulate many instances of a measurement on an arbitrary input state in such a way that both the sender and receiver obtain the outcomes of the measurement (feedback simulation), and Theorem~\ref{thm:measnonmain} characterizes the trade-off for the non-feedback case when only the receiver is required to get the outcomes of the measurement.

\begin{definition}[One-shot Measurement Compression]\label{def:simulation}
Consider a bipartite system with parties Alice and Bob. Let $\delta\geq0$, and $\cM:\cL(\cH_{A})\rightarrow\cL(\cH_{X})$ be a quantum-classical channel, with quantum input $A$ and classical output $X$. A quantum protocol $\cP$ is a one-shot feedback measurement compression for $\cM$ with error $\delta$ if it consists of using $s$ bits of shared randomness, applying local operations at Alice's side, sending $c$ classical bits from Alice to Bob, applying local operations at Bob's side, and
\begin{align}
\|\cP-\overline{\Delta} \circ \cM\|_{\Diamond}\leq\delta\ ,
\end{align}
where $\overline{\Delta}:\cL(\cH_{X})\rightarrow\cL(\cH_{X_A})\otimes\cL(\cH_{X_B})$ is a classical copying map,
\begin{align}
\overline{\Delta}(\sigma) \equiv \sum_x \bra{x} \sigma \ket{x} \,\, \ket{x}\bra{x}_{X_A} \otimes \ket{x}\bra{x}_{X_B},
\end{align}
ensuring that both Alice and Bob obtain the measurement outcome. The quantity $c$ is called the classical communication cost, and $s$ is the shared randomness cost.
For the case of a non-feedback measurement compression, we only require
the following condition to hold
\begin{align}
\|\cP-\cM\|_{\Diamond}\leq\delta\ ,
\end{align}
because Alice does not need to recover the output of the simulation
in this case.\footnote{If we state the task of measurement compression as being that a verifier who is given the reference system and classical output should not be able to distinguish the true channel from the simulation, then we should also demand that the common randomness and classical communication be private from the verifier.}
\end{definition}

\begin{definition}[Universal Measurement Compression]\label{def:main}
Let $\cM:\cL(\cH_{A})\rightarrow\cL(\cH_{X})$ be a quantum-classical channel. An asymptotic  measurement compression for $\cM$ is a sequence of one-shot measurement compressions $\cP^{n}$ for $\cM^{\otimes n}$ with error $\delta_{n}$, such that $\lim_{n\rightarrow\infty}\delta_{n}=0$. The classical communication rate is $\limsup_{n\rightarrow\infty}\frac{\log c_{n}}{n}$ and the shared randomness rate is $\limsup_{n\rightarrow\infty}\frac{\log s_{n}}{n}$ (where $c_{n}$ and $s_{n}$ denote the corresponding costs for the one-shot measurement compressions).
\end{definition}

%%%%%%%%%%%%%%%%%%%%%%%%%%%%%%%%%%%%%%%%%%%%%

\subsection{Feedback Simulation}

\begin{theorem}\label{thm:main}
Let $\cM:\cL(A)\rightarrow\cL(X)$ be a quantum to classical channel. Then there exist asymptotic feedback measurement compressions for $\cM$ if and only if the classical communication rate $C$ and shared randomness rate $S$ lie in the following rate region:\footnote{Note that the two maxima in~\eqref{eq:maina} and~\eqref{eq:mainb} can be achieved for different states.}
\begin{align}
C & \geq \max_{\rho}I(X:R)_{(\cM\otimes\cI)(\rho)}\label{eq:maina}\\
C + S & \geq \max_{\rho}H(X)_{\cM(\rho)}\label{eq:mainb}\ ,
\end{align}
where $\rho_{AR}\in\cV(AR)$ is a purification of the input state $\rho_{A}\in\cS(A)$. Or equivalently, for a given shared randomness rate $S$, the optimal rate of classical communication is equal to
\begin{align}\label{eq:mainc}
C(S)=\max\left\{\max_{\rho} I(X:R)_{(\cM\otimes\cI)(\rho)},\,\,\max_{\rho} H(X)_{\cM(\rho)}-S\right\}\ .
\end{align}
In particular, when sufficient shared randomness is available, the rate of classical communication is given by
\begin{align}
C(\infty)= \max_{\rho} I(X:R)_{(\cM\otimes\cI)(\rho)}.
\end{align}
\end{theorem}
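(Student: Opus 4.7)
The plan is to derive both achievability and converse from the classical state splitting machinery of Section~\ref{sec:cl-coh-merge-split}, using the post-selection technique~\cite{ChristKoenRennerPostSelect} and an asymptotic equipartition argument to lift the one-shot bounds to the iid asymptotic statement, independent of the input state.

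For achievability, I will first reduce one-shot feedback measurement compression on a \emph{known} input $\rho_A$ to classical state splitting of a classically coherent state. Let $U_{\cM}: A \to X_A X_{A'} E$ be the Stinespring dilation of $\cM$ followed by a coherent copy of the classical outcome into $X_A X_{A'}$, and let $\ket{\rho}_{AR}$ be a purification of $\rho_A$. Applying $U_\cM$ to $A$ produces a classically coherent state $\ket{\psi}_{X_A X_{A'} E R}$ on which Alice holds $X_A X_{A'} E$; sending $X_{A'}$ to Bob via Theorem~\ref{thm:splitting} simulates $\overline{\Delta}\circ\cM$ on $\rho_A$ with classical communication at most $I_{\max}^{\eps'}(X_{A'}:R)_\psi + O(\log\log|X|)$ and sum cost at most $H_0^{\eps'}(X_{A'})_\psi + O(\log\log|X|)$. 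To universalize over the input I will invoke the post-selection technique: for any permutation-covariant channel difference $\Xi$ on $A^{\otimes n}$, $\|\Xi\|_\diamond \leq \mathrm{poly}(n) \cdot \|(\Xi \otimes \cI)(\zeta)\|_1$, where $\zeta$ is the canonical de Finetti purification. Symmetrizing the protocol by a uniformly random permutation (costing $O(\log n)$ extra randomness, hence zero rate) makes it permutation covariant, so it suffices to control error on $\zeta$.

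Plugging $\zeta$ into the one-shot construction, the achievable rate pair is governed by the per-copy smooth max-information and smooth zero-R\'enyi entropy of the classical register induced on $\zeta$. Because $\zeta$ is a symmetric mixture of iid pure products $\ket{\phi}_{AR}^{\otimes n}$, an asymptotic equipartition argument should give
\begin{align}
\tfrac{1}{n} I_{\max}^{\eps'}(X^n : R^n)_{(\cM \otimes \cI)^{\otimes n}(\zeta)} & \longrightarrow \max_{\rho} I(X:R)_{(\cM \otimes \cI)(\rho)}, \\
\tfrac{1}{n} H_0^{\eps'}(X^n)_{\cM^{\otimes n}(\zeta_{A^n})} & \longrightarrow \max_{\rho} H(X)_{\cM(\rho)},
\end{align}
as $n\to\infty$ with $\eps'\to0$ appropriately, the maxima arising because smooth max-information and $H_0^{\eps'}$ on a de Finetti mixture are controlled by the worst constituent product state. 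The $\mathrm{poly}(n)$ overhead from post-selection and the $O(\log n)$ term from Theorem~\ref{thm:splitting} are $o(n)$, so every $(C,S)$ strictly above the right-hand sides of~\eqref{eq:maina}--\eqref{eq:mainb} is achievable.

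For the converse, I will fix an arbitrary iid input $\rho^{\otimes n}$ with purification $\ket{\rho}_{AR}^{\otimes n}$; composing any asymptotic feedback simulation with coherent preparation of the classically coherent state $\ket{\psi}^{\otimes n}_{X_A X_{A'} E R}$ on Alice's side turns the simulation into a classical state splitting protocol for this state, so Theorem~\ref{thm:converse} forces $c_n \geq I_{\max}^{\delta_n + \eps'}(X^n : R^n)_{(\cM\otimes\cI)^{\otimes n}(\rho)} - O(\log 1/\eps')$. Dividing by $n$ and running AEP in the opposite direction gives $C \geq I(X:R)_{(\cM\otimes\cI)(\rho)}$ for every $\rho$, hence~\eqref{eq:maina}. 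The sum bound~\eqref{eq:mainb} will follow from a data-processing count: $X^n$ is a deterministic function of Alice's input together with the $c_n + s_n$ bits of communication and shared randomness, so by a Fano-type continuity estimate from the $\delta_n$ error, $H(X^n)_{\cM^{\otimes n}(\rho^{\otimes n})} \leq c_n + s_n + o(n)$; dividing by $n$ and maximizing over $\rho$ closes the argument. The main obstacle will be the AEP-type convergence of $\tfrac{1}{n} I_{\max}^{\eps'}$ on the de Finetti classically coherent state to the single-letter quantum mutual information, together with the matching limit for $\tfrac{1}{n}H_0^{\eps'}$; once these and the polynomial post-selection overhead are controlled, the single-letter maxima in~\eqref{eq:maina}--\eqref{eq:mainb} emerge as the correct characterization.
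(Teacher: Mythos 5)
Your achievability plan is essentially the paper's: Stinespring-dilate $\cM$ into a classically coherent state, run the classical state splitting protocol of Theorem~\ref{thm:splitting}, symmetrize with a random permutation (recycling the randomness so its rate vanishes), and invoke the post-selection technique (Proposition~\ref{prop:selection}) to reduce the diamond-norm error to the error on a purified de Finetti input. The step you flag as ``the main obstacle'' is exactly where the paper does its real work, and your one-line justification (``controlled by the worst constituent product state'') is not yet a proof: the de Finetti state is a continuous mixture, and quasi-convexity of $I_{\max}^{\eps}$ and $H_0^{\eps}$ (Lemmas~\ref{lem:quasicon} and~\ref{lem:entropyquasicon}) carries a $+\log|I|$ penalty, so you must first invoke Carath\'eodory (Lemma~\ref{lem:cara}) to rewrite $\zeta^n_{AR}$ as a mixture of only $(n+1)^{2|A||R|-2}$ iid states before the penalty becomes $O(\log n)$ and the AEP (Lemma~\ref{lem:aep}) can be applied to each iid constituent. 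You also need to strip off the de Finetti purifying system $R'$ from $I_{\max}^{\eps}(X^n:R^nR')$ at cost $2\log|R'| = O(\log n)$ via Lemma~\ref{lem:addsub}; without this the one-shot bound does not single-letterize. These are all available tools, so the gap is closable, but as written the crucial chain of inequalities is missing.

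Your converse genuinely diverges from the paper, which simply observes that a universal simulation must in particular work on every fixed iid input and cites Winter's iid converse. Your route---bounding Bob's max-information about the reference by $c_n$ as in Theorem~\ref{thm:converse} and then single-letterizing---is viable but requires a \emph{lower}-bound direction of the AEP, $\tfrac1n I^{\eps}_{\max}(X^n:R^n)_{\sigma^{\otimes n}} \geq I(X:R)_\sigma - o(1)$, which the paper never states or needs (it follows from $D_{\max}\geq D$ plus Alicki--Fannes continuity to absorb the smoothing, but you must supply it). Your sum-rate converse is the weaker point: ``$X^n$ is a deterministic function of Alice's input plus the $c_n+s_n$ bits'' does not bound $H(X^n)$, since Alice's input has unbounded entropy and Bob may use private randomness. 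The correct argument uses the \emph{feedback} requirement: Alice and Bob both output (nearly) the same $X^n$, Alice without seeing Bob's local randomness and Bob only from the message and shared randomness, so $H(X^n)\lesssim I(X_A^n:X_B^n)+o(n)\leq c_n+s_n+o(n)$. This is the argument in Winter's Theorem~8 and Section~2.4 of Wilde \emph{et al}., and you should either reproduce it carefully or cite it as the paper does.
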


\begin{proof}
We first show that the right-hand side of~\eqref{eq:maina} is a lower bound on the classical communication rate, and that~\eqref{eq:mainb} is a lower bound on the sum rate (Propositition~\ref{prop:converse}). Then we show that these lower bounds can be achieved (Proposition~\ref{prop:achiev}). The general rate trade-off in~\eqref{eq:maina}-\eqref{eq:mainb} and \eqref{eq:mainc} immediately follows, since the shared randomness can
always be  created by classical communication.
\end{proof}

\begin{proposition}[Converse]\label{prop:converse}
Let $\cM:\cL(A)\rightarrow\cL(X)$ be a quantum to classical channel. Then we have for any asymptotic measurement compression for $\cM$ that
\begin{align}
C & \geq \max_{\rho}I(X:R)_{(\cM\otimes\cI)(\rho)}\\
C + S & \geq \max_{\rho}H(X)_{\cM(\rho)}\ ,
\end{align}
where $\rho_{AR}\in\cV(AR)$ is a purification of the input state $\rho_{A}\in\cS(A)$. 
\end{proposition}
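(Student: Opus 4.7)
The plan is to fix an arbitrary purification $\rho_{AR}\in\cV(AR)$ of an input $\rho_{A}\in\cS(A)$, apply the asymptotic simulation to $n$ copies of $\rho_{AR}$, and exploit the dimensional constraints on the simulation together with continuity of entropy. Since the bounds must hold for every valid simulation and the inputs on the right-hand side appear under a maximum, I will keep $\rho$ fixed throughout and maximize only at the very end.

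First I apply both $\cP^{n}$ and the ideal map $(\overline{\Delta}\circ\cM)^{\otimes n}$ to the tensor-power state $\rho_{AR}^{\otimes n}$, obtaining outputs $\omega_{X^{n}R^{n}}$ and $\sigma_{X^{n}R^{n}}=[(\cM\otimes\cI_{R})(\rho_{AR})]^{\otimes n}$, respectively, with $\|\omega_{X^{n}R^{n}}-\sigma_{X^{n}R^{n}}\|_{1}\leq\delta_{n}\to 0$ by the diamond-norm condition in Definition~\ref{def:main}. The key structural observation about the simulation is that the shared randomness register $L$ (of $s_{n}$ bits) is generated independently of the input and hence of $R^{n}$, while the message $M$ from Alice to Bob carries at most $c_{n}$ classical bits and $R^{n}$ is untouched throughout. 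For a feedback protocol, moreover, the output $X^{n}$ must be consistent on Alice's and Bob's sides and is therefore (without loss of generality) a deterministic function of $(M,L)$. These facts give the single-letter bounds
\begin{align}
I(X^{n}:R^{n})_{\omega}&\leq I(ML:R^{n})_{\omega}=I(L:R^{n})+I(M:R^{n}|L)\leq H(M|L)\leq c_{n},\\
H(X^{n})_{\omega}&\leq H(ML)\leq H(M)+H(L)\leq c_{n}+s_{n},
\end{align}
where the first line uses $I(L:R^{n})=0$ and data processing under Bob's local operation $ML\to X^{n}$.

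To convert these into bounds on the ideal state I will invoke Fannes continuity: since $R^{n}$ is untouched, $H(R^{n})_{\omega}=H(R^{n})_{\sigma}$, and applying Fannes to $H(X^{n})$ and $H(X^{n}R^{n})$ shows that both $\frac{1}{n}|I(X^{n}:R^{n})_{\omega}-I(X^{n}:R^{n})_{\sigma}|$ and $\frac{1}{n}|H(X^{n})_{\omega}-H(X^{n})_{\sigma}|$ vanish as $\delta_{n}\to 0$, because the per-copy dimensions $|X|$ and $|R|$ are fixed. Combining with the tensor-product identities $I(X^{n}:R^{n})_{\sigma}=nI(X:R)_{(\cM\otimes\cI)(\rho)}$ and $H(X^{n})_{\sigma}=nH(X)_{\cM(\rho)}$ and dividing by $n$ yields $I(X:R)_{(\cM\otimes\cI)(\rho)}\leq C$ and $H(X)_{\cM(\rho)}\leq C+S$ for the fixed $\rho$; a final maximization over $\rho$ completes the argument. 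The only mildly delicate step is verifying that the Fannes correction is sub-linear in $n$, which is immediate for fixed single-copy dimensions and $\delta_{n}\to 0$; everything else is a routine application of the chain rule and data processing.
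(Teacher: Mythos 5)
Your proposal reconstructs from first principles the argument that the paper delegates entirely to citations: the paper's proof of Proposition~\ref{prop:converse} is a one-line reduction to the converse for a \emph{fixed} IID source (\cite[Theorem 8]{Winter04}, with the feedback variant worked out in Section~2.4 of \cite{Wilde12}), using the observation that a universal protocol must in particular succeed on every input of the form $\rho_A^{\otimes n}$. You have correctly identified that observation (fix $\rho$, apply the diamond-norm guarantee to $\rho_{AR}^{\otimes n}$, maximize at the end), and you then re-derive what is inside the cited result: the single-letter bounds $I(X^n\!:\!R^n)\leq c_n$ and $H(X^n)\leq c_n+s_n$, followed by a Fannes/AFW continuity argument that correctly exploits that the error per copy vanishes because the single-copy dimensions are fixed. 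That the reference $R^n$ is untouched (so $H(R^n)$ is exact) and that only $H(X^n)$ and $H(X^nR^n)$ need continuity is exactly the right decomposition.

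The one step that is not quite ``without loss of generality'' is the assertion that $X^n$ is a deterministic function of $(M,L)$. In a feedback simulation Bob's output is a function of $(M,L)$ \emph{and} his private local randomness $R_B$, and Alice's of $(M,L,R_A)$; a priori $H(X^n_B|ML)$ could be as large as $H(R_B)$, which is unbounded. What rescues the claim is the feedback condition itself: the diamond-norm guarantee implies $\Pr[X^n_A\neq X^n_B]\leq O(\delta_n)$, and since $R_A\perp R_B$ given $(M,L)$ the Markov chain $X^n_A - (M,L) - X^n_B$ holds, so Fano's inequality gives $H(X^n_B|ML)\leq H(X^n_B|X^n_A)\leq h(\delta_n)+\delta_n\,n\log|X|$, which is sublinear and can be absorbed into the same error terms you already track. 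So this is a fillable gap rather than a wrong approach, but it does need to be stated---it is the precise place where the feedback hypothesis enters, and it is also what fails for a non-feedback simulation (whose converse, Theorem~\ref{thm:measnonmain}, indeed takes a different form). With that repair, your proof is a valid, self-contained alternative to the paper's citation-based argument, and it has the pedagogical advantage of making the role of the feedback assumption explicit.
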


\begin{proof}
This proposition follows from the converse for the case of a fixed IID source~\cite[Theorem 8]{Winter04}, since the asymptotic measurement compressions must in particular work for any fixed IID input state $\rho^{\otimes n}_{A}$ (for $n\rightarrow\infty$). To see this explicitly worked out with the feedback assumption, see Section~2.4
of~\cite{Wilde12}.
\end{proof}

\begin{proposition}[Achievability]\label{prop:achiev}
Let $\cM:\cL(A)\rightarrow\cL(X)$ be a quantum to classical channel. Then there exist asymptotic feedback measurement compressions for $\cM$ with\begin{align}
C & \leq \max_{\rho}I(X:R)_{(\cM\otimes\cI)(\rho)}\label{eq:achiev}\\
C + S & \leq \max_{\rho}H(X)_{\cM(\rho)}\ \label{eq:achiev2},
\end{align}
where $\rho_{AR}\in\cV(AR)$ is a purification of the input state $\rho_{A}\in\cS(A)$. 
\end{proposition}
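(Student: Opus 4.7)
The plan is to combine the one-shot classical state splitting protocol of Theorem~\ref{thm:splitting} with the post-selection technique for quantum channels of~\cite{ChristKoenRennerPostSelect}. First, I cast the channel simulation as a state splitting task. Given any input purification $\ket{\psi}_{AR}$, the Naimark--Stinespring dilation of $\cM$ followed by coherent copying of the output register produces a classically coherent pure state $\ket{\rho}_{A'X_A X_{A'} R}$ (with Alice initially holding $A' X_A X_{A'}$, where $A'$ is the dilation ancilla) whose reduced state on $X_A R$ equals $(\cM\otimes\cI_R)(\ket{\psi}\bra{\psi}_{AR})$. Running the classical state splitting protocol of Theorem~\ref{thm:splitting} transfers the $X_{A'}$ copy to a classical register $X_B$ on Bob's side, yielding a state close to $\cI_{X_{A'}\to X_B}(\rho_{A'X_A X_{A'} R})$. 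Tracing out Alice's residual quantum ancilla $A'$ (and noting that her $X_A$ register is already classical) realizes precisely the feedback action $\overline{\Delta}\circ\cM$ up to the error of the state splitting step.

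To pass from this one-shot construction to an asymptotic simulation on an \emph{arbitrary} input, I invoke the post-selection technique: bounding $\|\cP^n - (\overline{\Delta}\circ\cM)^{\otimes n}\|_{\diamond}$ reduces to bounding the simulation error on a single permutation-symmetric de Finetti input $\ket{\zeta^n}_{A^n R^n R'}$, at the cost of a $\mathrm{poly}(n)$ multiplicative overhead. Applying the one-shot construction to this particular input, Theorem~\ref{thm:splitting} gives
\begin{align*}
c_n & \leq I_{\max}^{\eps'}(X^n_{A'} : R^n R')_{\zeta} + 4\log\tfrac{1}{\eps} + O(\log n),\\
c_n + s_n & \leq H_{0}^{\eps'}(X^n_{A'})_{\zeta} + O(\log n).
\end{align*}
Choosing $\eps,\eps'$ to decay polynomially in $n$ keeps the total amplified error $\mathrm{poly}(n)\cdot(\eps+\eps'+\sqrt{8\eps'}+|X|^{-n/2})$ vanishing while the additive correction terms remain $o(n)$.

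The main obstacle, and the technical heart of the argument, is then the asymptotic evaluation of these smooth entropies on the de Finetti state. Since $\zeta^n$ is essentially a mixture of i.i.d.\ branches $\rho^{\otimes n}$ parameterised by a polynomially-sized set of inputs $\rho$, the asymptotic equipartition property for smooth entropies yields on each branch that $\tfrac{1}{n}I_{\max}^{\eps'}(X^n:R^n) \to I(X:R)_{(\cM\otimes\cI)(\rho)}$ and $\tfrac{1}{n}H_{0}^{\eps'}(X^n) \to H(X)_{\cM(\rho)}$. Combining this with quasi-convexity and monotonicity properties of $I_{\max}^{\eps'}$ and $H_{0}^{\eps'}$ to dominate the mixture by its worst-case branch, and absorbing the polynomial de Finetti prefactor into $o(n)$ corrections, I obtain $\limsup_{n}\tfrac{1}{n}c_n \leq \max_\rho I(X:R)_{(\cM\otimes\cI)(\rho)}$ and $\limsup_{n}\tfrac{1}{n}(c_n+s_n) \leq \max_\rho H(X)_{\cM(\rho)}$, which is precisely~\eqref{eq:achiev}--\eqref{eq:achiev2} achieved simultaneously by a single protocol. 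Any other point in the rate trade-off region then follows since classical communication can always be used to synthesise additional shared randomness on the fly.
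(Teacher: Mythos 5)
Your proposal is correct and follows essentially the same route as the paper: a Stinespring dilation of $\cM$ plus the classical state splitting protocol of Theorem~\ref{thm:splitting} for the one-shot step, the post-selection technique to reduce the diamond-norm error to a single purified de Finetti input, and then the Carath\'eodory decomposition of that state into polynomially many i.i.d.\ branches combined with quasi-convexity and the asymptotic equipartition property to evaluate the smooth max-information and smooth zero-R\'enyi entropy. The only hypothesis you leave implicit is that the post-selection technique requires the simulation map to be permutation covariant; the paper ensures this by symmetrizing the protocol with shared randomness whose cost is kept sublinear via randomness recycling.
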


\begin{proof}
We show the existence of a sequence of one-shot feedback measurement compressions $\cP^{n}$ for $\cM^{\otimes n}$ with asymptotically vanishing error $\eps_{n}$, a classical communication rate $\frac{c_{n}}{n}$ as in~\eqref{eq:achiev}, and a shared randomness rate $\frac{s_{n}}{n}$ such that the sum rate becomes as in~\eqref{eq:achiev2}. Without loss of generality, we choose $\cP^{n}$ to be permutation covariant.\footnote{By the following argument, every protocol can be made permutation covariant. To start with, Alice applies a random permutation $\pi$ on the input system chosen according to some shared randomness. This is then followed by the original protocol (which might not yet be permutation covariant), and Bob who undoes the permutation by applying $\pi^{-1}$ on the output system. The shared randomness cost of this procedure can be kept sub-linear in $n$ by using randomness recycling as discussed in~\cite[Section IV. D]{Bennett06}.} The post-selection technique for quantum channels (Proposition~\ref{prop:selection}) then applies and upper bounds the error by
\begin{align}\label{eq:mainstep}
\delta_{n}=\|\cM_{A\rightarrow X_{B}}^{\otimes n}-\cP^{n}_{A^n\rightarrow X^n_{B}}\|_{\Diamond}\leq(n+1)^{|A|^{2}-1}\cdot\|((\cM_{A\rightarrow X_{B}}^{\otimes n}-\cP^{n}_{A^n\rightarrow X^n_{B}})\otimes\cI_{R}^{\otimes n}\otimes\cI_{R'}(\zeta^{n}_{ARR'})\|_{1}\ ,
\end{align}
where $\zeta^{n}_{ARR'}$ is a purification of the de Finetti state $\zeta^{n}_{AR}=\int\psi_{AR}^{\otimes n}\,\, d(\psi_{AR})$ with $\psi_{AR}\in\cV(AR)$, $A\cong R$ and $d(\cdot)$ the measure on the normalized pure states on $AR$ induced by the Haar measure on the unitary group acting on $AR$, normalized to $\int d(\cdot)=1$. Hence, it is sufficient to consider simulating the measurement on a purification of the de Finetti state:
\begin{align}\label{eq:structure}
\omega_{X_{B}RR'}^{n}=\left(\cM_{A\rightarrow X_{B}}^{\otimes n}\otimes\cI_{R}^{\otimes n}\otimes\cI_{R'}\right)\left(\zeta^{n}_{ARR'}\right)\ ,
\end{align}
up to an error $o\left((n+1)^{1-|A|^{2}}\right)$ in trace distance, for an asymptotic classical communication cost smaller than~\eqref{eq:achiev}. For this, we consider a local Stinespring dilation $U_{A\rightarrow EX_{A}X_{A'}}$ of the measurement $\cM_{A\rightarrow X_{A'}}$ at Alice's side, followed by classical state splitting of
the resulting classically coherent state (Theorem~\ref{thm:splitting}). Let $U_{A^n\rightarrow E^n X^n_{A} X^n_{A'}} = U^{\otimes n}_{A\rightarrow EX_{A}X_{A'}}$ and
\begin{align}\label{eq:splitstate}
\omega_{E^n X^n_{A}X^n_{A'}R^nR'}=U_{A^n \rightarrow E^n X^n_{A} X^n_{A'}}(\zeta^{n}_{ARR'})\ .
\end{align}
As mentioned above, this map can be made permutation invariant. For fixed $\eps_{n}>0$, Theorem~\ref{thm:splitting} then assures that the map outputs a state which is
\begin{align}\label{eq:distance}
4\cdot\eps_{n}+4\sqrt{2\eps_{n}}+2\cdot|X_{A'}|^{-n/2}
\end{align}
close to~\eqref{eq:structure} in trace distance,\footnote{The trace distance is upper bounded by two times the purified distance (Lemma~\ref{lem:distance}).} for a classical communication cost
\begin{align}\label{eq:deltacost}
c_{n}\leq I_{\max}^{\eps_{n}}(X_{A'}:RR')_{\omega}+4\cdot\log\frac{1}{\eps_{n}}+4+\log\log|X_{A'}|+\log n\ ,
\end{align}
and a sum cost
\begin{align}\label{eq:deltacost2}
c_{n}+s_{n}\leq H_{0}^{\eps_{n}}(X_{A'})_{\omega}+2+\log\log|X_{A'}|+\log n\ ,
\end{align}
where the last two terms on the right in each of the above expressions come from the fact that $\log\log |X_{A'}|^n = \log\log|X_{A'}|+\log n$.
We now analyse the asymptotic behaviour of~\eqref{eq:deltacost} and~\eqref{eq:deltacost2}. By a dimension upper bound for the smooth max-information (Lemma~\ref{lem:addsub}), and the fact that we can assume $|R'|\leq(n+1)^{|A|^{2}-1}$ (Proposition~\ref{prop:selection}), we get
\begin{align}\label{eq:beforecara}
c_{n}\leq I_{\max}^{\eps_{n}}(X_{A'}:R)_{\omega}+2\cdot\log\left[(n+1)^{|A|^{2}-1}\right]+4\cdot\log\frac{1}{\eps_{n}}+4+\log\log|X_{A'}|+\log n\ .
\end{align}
By a corollary of Carath\'eodory's theorem (Lemma~\ref{lem:cara}), we write
\begin{align}\label{eq:caramain}
\zeta_{AR}^{n}=\sum_{i\in I}p_{i}(\sigma^{i}_{AR})^{\otimes n}\ ,
\end{align}
where $\sigma^{i}_{AR}\in\cV(AR)$, $I=\{1,2,\ldots,(n+1)^{2|A||R|-2}\}$, and $\{p_{i}\}_{i\in I}$ a probability distribution. Using a quasi-convexity property of the smooth max-information (Lemma~\ref{lem:quasicon}), and for
\begin{align}
\chi=2\cdot\log\left[(n+1)^{|A|^{2}-1}\right]+4\cdot\log\frac{1}{\eps_{n}}+4+\log\log|X_{A'}|+\log n\ ,
\end{align}
we obtain
\begin{align}
c_{n} &\leq I_{\max}^{\eps_{n}}(X_{A'}:R)_{(\cM^{\otimes n}\otimes\cI)(\sum_{i}p_{i}(\sigma^{i})^{\otimes n})}+\chi\\
&\leq\max_{i}I_{\max}^{\eps_{n}}(X_{A'}:R)_{\left[(\cM\otimes\cI)(\sigma^{i})\right]^{\otimes n}}+\log\left[(n+1)^{2|A||R|-2}\right]+\chi\\
&\leq\max_{\rho}I_{\max}^{\eps_{n}}(X_{A'}:R)_{\left[(\cM\otimes\cI)(\rho)\right]^{\otimes n}}+\log\left[(n+1)^{2|A||R|-2}\right]+\chi\ ,
\end{align}
where the last maximum ranges over all $\rho_{AR}\in\cV(AR)$. From the asymptotic equipartition property for the smooth max-information (Lemma~\ref{lem:aep}) we obtain
\begin{align}\label{eq:beforefinal}
c_{n}\leq n\cdot\max_{\rho}I(X_{A'}:R)_{(\cM\otimes\cI)(\rho)}+\sqrt{n}\cdot\xi(\eps_{n})-2\cdot\log\frac{\eps_{n}^{2}}{24}+\log\left[(n+1)^{2|A||R|-2}\right]+\chi\ ,
\end{align}
where $\xi(\eps_{n})=8\sqrt{13-4\cdot\log\eps_{n}}\cdot(2+\frac{1}{2}\cdot\log|A|)$. By choosing
\begin{align}\label{eq:epsn}
\eps_{n}=(n+1)^{4(1-|A|^{2})}\ ,
\end{align}
we get an asymptotic classical communication cost of
\begin{align}
c=\limsup_{n\rightarrow\infty}\frac{c_{n}}{n}\leq\max_{\rho}I(X_{A'}:R)_{(\cM\otimes\cI)(\rho)}\ ,
\end{align}
for a vanishing asymptotic error~\eqref{eq:mainstep}, \eqref{eq:distance}, \eqref{eq:epsn}:
\begin{align}
\limsup_{n\rightarrow\infty}\delta_{n}&\leq\limsup_{n\rightarrow\infty}\left[\left(4\cdot(n+1)^{4(1-|A|^{2})}+4\sqrt{2}\cdot(n+1)^{2(1-|A|^{2})}+2\cdot|X_{A'}|^{-n/2}\right)(n+1)^{|A|^{2}-1}\right]\nonumber\\
&=0\ .
\end{align}
Furthermore, we estimate the asymptotic behaviour of the sum cost~\eqref{eq:deltacost2} by using~\eqref{eq:caramain} and a quasi-convexity property of the smooth zero-R\'enyi entropy (Lemma~\ref{lem:entropyquasicon}). For $\chi'=2+\log\log|X_{A'}|+\log n$ we get
\begin{align}
c_{n}+s_{n}
& \leq
\max_{i}H_{0}^{\eps_{n}}(X_{A'})_{\cM(\sigma^{i})^{\otimes n}}+\log\left[(n+1)^{2|A||R|-2}\right]+\chi' \\
& \leq
\max_{\rho}H_{0}^{\eps_{n}}(X_{A'})_{\cM(\rho)^{\otimes n}}+\log\left[(n+1)^{2|A||R|-2}\right]+\chi'\ ,
\end{align}
where $\rho_{A}\in\cS(A)$. By the equivalence of the smooth zero-R\'enyi entropy and the smooth max-entropy (Lemma~\ref{lem:alterequiv}), and the asymptotic equipartition property for the smooth max-entropy (Lemma~\ref{lem:aep}), we arrive at
\begin{align}
c_{n}+s_{n}&
\leq
\max_{\rho}H^{\eps_{n}/2}_{\max}(X_{A'})_{\cM(\rho)^{\otimes n}}+2\cdot\log\frac{8}{\eps_{n}^{2}}+\log\left[(n+1)^{2|A||R|-2}\right]+\chi'\\
&\leq n\cdot\max_{\rho}H(X_{A'})_{\cM(\rho)}+\sqrt{n}\cdot4\sqrt{1-2\cdot\log\frac{\eps_{n}}{2}}\cdot(2+\frac{1}{2}\cdot\log|X_{A'}|) \nonumber\\
& \,\,\,\, +2\cdot\log\frac{8}{\eps_{n}^{2}}+\log\left[(n+1)^{2|A||R|-2}\right]+\chi'\ ,
\end{align}
where $\rho_{A}\in\cS(A)$. By employing~\eqref{eq:epsn}, we get for the asymptotic limit 
\begin{align}
c+s=\limsup_{n\rightarrow\infty}\frac{1}{n}(c_{n}+s_{n})\leq\max_{\rho}H(X_{A'})_{\cM(\rho)}\ , \label{eq:total-cost-bound}
\end{align}
where $\rho_{A}\in\cS(A)$.
\end{proof}

%%%%%%%%%%%%%%%%%%%%%%%%%%%%%%%%%%%%%%%%%%%%%

\subsection{Non-Feedback Simulation}
\label{sec:non-feedback-simulation}
\begin{theorem}\label{thm:measnonmain}
Let $\cM:\cL(\cH_{A})\rightarrow\cL(\cH_{X})$ be a
quantum-to-classical channel. Then there exist asymptotic
non-feedback measurement compressions for $\cM$ if and
only if the classical communication rate~$C$ and
shared randomness rate $S$ lie in the rate region
given by the union of the following regions,
\begin{align}
C&\geq\max_{\rho}I(W:R)_{\beta}\label{eq:measnonmainC}\\
C+S&\geq\max_{\rho}I(W:XR)_{\beta}\label{eq:measnonmainCS}\ ,
\end{align}
where the state $\beta_{WXR}$ has the form
\begin{align}\label{eq:betaoptimization}
\beta_{WXR}=\sum_{w,x}q_{x|w}\cdot\proj{w}_{W}\otimes\proj{x}_{X}\otimes\trace_{A}\big[(\cN_{w}\otimes\cI)(\rho_{AR})\big]\ ,
\end{align}
$\rho_{AR}\in\cV(\cH_{AR})$ is a purification of the input state $\rho_{A}\in\cS(\cH_{A})$, and the union is with respect to all decompositions of the measurement $\cM$ in terms of internal measurements $\cN=\{\cN_w\}$ and conditional post-processing distributions $q_{x|w}$. That is,
for all states $\sigma$, it should hold that
\begin{align}\label{eq:mdecomp}
\sum_x \cM_x(\sigma) \proj{x} = \sum_{x,w} q_{x|w} \, \cN_w(\sigma) \proj{x}.
% \cN:\sum_{x,w}q_{x|w}\cdot\cN_{w}=\cM\ .
\end{align}
Or equivalently, for a given shared randomness rate $S$, the optimal rate of classical communication is equal to
\begin{align}\label{eq:tragenonf}
C(S)=\min_{\cN\, : \, \sum_{w} q_{x|w} \cN_w = \cM_x}
\max\left\{\max_{\rho} I(W:R)_{\beta},\,\,\max_{\rho} I(W:XR)_{\beta}-S\right\}\ .
\end{align}
\end{theorem}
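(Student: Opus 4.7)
The plan is to reduce Theorem~\ref{thm:measnonmain} to the feedback result of Theorem~\ref{thm:main} applied to an internal measurement $\cN$, followed by a classical post-processing channel $q_{x|w}$ performed locally by Bob. The argument mirrors the structure of the feedback proof: a converse derived from Winter's IID converse combined with an explicit extraction of the decomposition~\eqref{eq:mdecomp}, together with an achievability proof that augments the feedback protocol with randomness recycling in order to drive the sum rate from the feedback value $\max_\rho H(W)_{\cN(\rho)}$ down to $\max_\rho I(W:XR)_\beta$.

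For achievability, I would fix any valid decomposition $\cM_x(\sigma) = \sum_w q_{x|w}\cN_w(\sigma)$ of the target measurement. First run the feedback simulation of Proposition~\ref{prop:achiev} on $\cN^{\otimes n}$, so that Alice and Bob share the outcome $w^n$ at classical communication rate $\max_\rho I(W:R)_\beta$ and sum rate $\max_\rho H(W)_{\cN(\rho)}$. Bob then applies the conditional post-processing $q_{x|w}$ symbol by symbol to produce the final output $x^n$, which equals the action of $\cM^{\otimes n}$ on the input up to diamond-norm error inherited from the feedback simulation. To bring the sum rate down from $H(W)$ to $I(W:XR)_\beta$, I would invoke randomness recycling in the style of \cite[Section IV.D]{Bennett06}: after Bob's post-processing, the shared randomness is approximately decoupled from the verifier's joint register $X^n R^n$ except on a subspace of size roughly $2^{n H(W|XR)_\beta}$, so the orthogonal portion can be reused in the next block. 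Amortizing over many blocks yields an effective sum rate $H(W) - H(W|XR) = I(W:XR)_\beta$, and taking the union over admissible decompositions $(\cN, q_{x|w})$ traces out the entire rate region.

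For the converse, given any sequence of $\delta_n$-error non-feedback simulations, I would extract an internal measurement directly from the protocol: take $W$ to be the classical message Alice sends together with her value of the shared randomness, let $\cN_w$ be the corresponding sub-POVM element acting on Alice's input, and let $q_{x|w}$ encode Bob's classical decoder. The decomposition~\eqref{eq:mdecomp} then holds up to asymptotically vanishing error because the protocol simulates $\cM$ in the diamond norm. Applying Winter's IID converse~\cite[Theorem 8]{Winter04} to every fixed IID input $\rho_A^{\otimes n}$ yields the pair of bounds $C \geq I(W:R)_\beta$ and $C+S \geq I(W:XR)_\beta$ for the induced $\cN$ and $\beta$; maximizing over input states $\rho$ and minimizing over compatible decompositions $\cN$ then recovers~\eqref{eq:measnonmainC}, \eqref{eq:measnonmainCS}, and the trade-off~\eqref{eq:tragenonf}.

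The main obstacle I anticipate is making the randomness recycling rigorous in conjunction with the post-selection technique used in Proposition~\ref{prop:achiev}. Concretely, I must verify that after the feedback simulation of $\cN^{\otimes n}$ and Bob's classical post-processing, the portion of the shared randomness to be recycled is close in trace distance to a product state with the joint verifier system $(X^n, R^n)$, with an error that remains controlled when composed across blocks and that does not spoil the diamond-norm bound inherited from the smooth-max-information splitting of Theorem~\ref{thm:splitting}. Once this recycling lemma is in place, the de Finetti reduction, quasi-convexity of smooth entropies, and asymptotic equipartition arguments used in Proposition~\ref{prop:achiev} can be reapplied essentially verbatim to the internal measurement $\cN$, yielding the desired rate region.
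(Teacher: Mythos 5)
Your overall plan — reduce to a feedback simulation of an internal measurement $\cN$, let Bob locally apply $q_{x|w}$, then recycle shared randomness to pull the sum rate down from $H(W)$ to $I(W:XR)_\beta$ — is the right strategy and matches the structure of the paper's proof. However, there are two genuine gaps.

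First, the converse citation is wrong in a way that matters. Winter's Theorem 8 in \cite{Winter04} is the \emph{feedback} IID converse, yielding $C\geq I(X:R)$ and $C+S\geq H(X)$. Applied to the internal measurement $\cN$ (with outputs $W$), this would give the sum-rate bound $C+S\geq H(W)$, which is \emph{larger} than $I(W:XR)_\beta$, and would therefore contradict the very achievability you establish. The correct converse for the non-feedback IID case is Theorem 9 of \cite{Wilde12}, which is a distinct and more delicate argument precisely because it must account for the recycling of the randomness that is decoupled from $XR$. The idea of extracting a decomposition $(\cN,q_{x|w})$ directly from the protocol is fine and is close to how the Wilde12 converse proceeds, but then you need a non-feedback-specific sum-rate argument, not Winter's feedback bound.

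Second, the randomness-recycling step in the achievability is underspecified in a way that risks not achieving the claimed rate. You propose cross-block amortization in the spirit of \cite[Sec.~IV.D]{Bennett06}, arguing informally that the post-protocol randomness is decoupled except on a subspace of size $\approx 2^{nH(W|XR)_\beta}$. But this is a universal simulation: the input is arbitrary and even after post-selection onto de Finetti states, the relevant entropies vary over the decomposition index $i$ and the pre-processing outcome $y$. The paper handles this by (i) flattening the spectrum of the de Finetti-averaged state with typical projectors so that, conditioned on $y$, non-smooth min/max entropies can be used; and (ii) applying a strong classical min-entropy extractor on $W$ against the \emph{quantum} side information $XRR'$ \emph{conditioned on $y$}, giving the single-block bound $c_n+s_n\le \max_y\big(H_0(W)_{\bar\gamma^{n,y}}-H_{\min}(W|RR'X)_{\bar\gamma^{n,y}}\big)+O(\log n)$, which then collapses to $I_{\max}(W:RX)$ and thence to $\max_\rho I(W:XR)_\beta$ via the AEP. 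Your proposal skips the flattening/conditioning-on-$y$ machinery and instead appeals to naive arithmetic $H(W)-H(W|XR)=I(W:XR)$ across blocks, but in the worst-case-input setting the two maxima $\max_\rho H(W)$ and the relevant conditional term are not attained at the same $\rho$, so the difference-of-maxes need not equal $\max_\rho I(W:XR)_\beta$. Without the single-block, $y$-conditioned extractor and the typical-projector flattening (or an equivalent device), the rate claim does not close.

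In short: same high-level architecture, but you need to swap the converse to the non-feedback IID result and make the recycling a single-block, $y$-conditioned min-entropy extraction on the flattened state rather than a cross-block decoupling argument.
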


By the data processing inequality for the mutual information, it holds
that~$I(W:R)_{\beta} \geq I(X:R)_{\cM(\rho)}$, and hence, the classical
communication cost can only increase compared to a feedback simulation
(Theorem~\ref{thm:main}). However, if the savings in common randomness
consumption are larger than the increase in classical communication cost, then
there is an advantage to performing a non-feedback simulation. It follows from
the considerations in~\cite{Martens90,Wilde12} that the rate
trade-offs~\eqref{eq:mainc} and~\eqref{eq:tragenonf} become identical if and
only if the elements of the measurement to simulate are all rank-one operators.

\begin{proof}
We see from the converse for the case of a fixed IID source~\cite[Theorem
9]{Wilde12}, that the right-hand side of~\eqref{eq:measnonmainC} is a lower
bound on the classical communication rate, and that~\eqref{eq:measnonmainCS}
is a lower bound on the sum rate. This is because the asymptotic non-feedback
measurement compression must work in particular for any fixed IID input state
$\rho_{A}^{\otimes n}$ (as $n\rightarrow\infty$).

As the next step, we show that these lower bounds can be achieved. The general
rate trade-off in~\eqref{eq:measnonmainC}-\eqref{eq:measnonmainCS}
and~\eqref{eq:tragenonf} then immediately follows, since shared randomness can
always be created by classical communication.

The idea for the achievability part is as follows. Given a particular
decomposition of the measurement $\cM = \{\cM_{x}\}$ as $\left\{ \sum
_{w}q_{x|w}\cdot\cN_{w}\right\} $ as stated above, Alice and Bob just use a
feedback measurement compression protocol (as in the proof of
Theorem~\ref{thm:main}) to simulate the measurement $\cN=\{\cN_{w}\}$. This is
followed by a local simulation of the classical map $q_{x|w}$ at no cost at
Bob's side. Finally, Alice and Bob can use randomness recycling to extract
$H_{\min}(W|RX)_{\beta}$ bits of shared randomness back \cite{Bennett06}. In
the one-shot case, this leads to a classical communication cost of $I_{\max
}(W:R)_{\beta}$, and a sum cost $I_{\max}(W:RX)_{\beta}$. For technical
reasons, we smooth the states using typical projectors (see
Appendix~\ref{app:typical} for background on typical projectors) and arrive at
the rates given in the statement of the theorem.

Let $\{q_{x|w},\,\cN_{\omega}\}$ be a fixed decomposition of $\cM$. As in the
feedback case (Theorem~\ref{thm:main}) we employ the post-selection technique
(Proposition~\ref{prop:selection}) to upper bound the error for one-shot
non-feedback compressions $\cP^{n}$ for $\cM^{\otimes n}$ by
\begin{align}
\delta_{n} &  =\Vert\cM_{A\rightarrow X_{B}}^{\otimes n}-\cP_{A\rightarrow
X_{B}}^{n}\Vert_{\Diamond}\label{eq:postmeasnonf}\\
&  \leq(n+1)^{|A|^{2}-1}\cdot\Vert((\cM_{A\rightarrow X_{B}}^{\otimes
n}-\cP_{A\rightarrow X_{B}}^{n})\otimes\cI_{R}^{\otimes n}\otimes
\cI_{R^{\prime}}(\zeta_{ARR^{\prime}}^{n})\Vert_{1}\ ,
\end{align}
where $\zeta_{ARR^{\prime}}^{n}$ is a purification of the de Finetti state
$\zeta_{AR}^{n}=\int\psi_{AR}^{\otimes n}\,\,d(\psi_{AR})$ with $\psi_{AR}%
\in\cV(\cH_{AR})$, $A\cong R$ and $d(\cdot)$ the measure on the normalized
pure states on $\cH_{AR}$ induced by the Haar measure on the unitary group
acting on $\cH_{AR}$, normalized to $\int d(\cdot)=1$. Hence, it is sufficient
to consider simulating the measurement $\cM^{\otimes n}$ on a purification of
the de Finetti state
\begin{equation}
\omega_{X_{B}RR^{\prime}}^{n}=\big(\cM_{A\rightarrow X_{B}}^{\otimes n}%
\otimes\cI_{R}^{\otimes n}\otimes\cI_{R^{\prime}}\big)\left(  \zeta
_{ARR^{\prime}}^{n}\right)  \ ,
\end{equation}
up to an error $o\left(  (n+1)^{1-|A|^{2}}\right)  $ in trace distance, for an
asymptotic simulation cost smaller than in~\eqref{eq:measnonmainC}
and~\eqref{eq:measnonmainCS}. For this, the idea is to consider a local
Stinespring dilation $V_{A\rightarrow EW_{A}W_{A^{\prime}}}$ of the
measurement $\cN_{A\rightarrow W_{A}}$ at Alice's side, followed by classical
state splitting of the resulting classically coherent state (along
Theorem~\ref{thm:splitting}). Let $V_{A\rightarrow EW_{A}W_{A^{\prime}}%
}^{n}=V_{A\rightarrow EW_{A}W_{A^{\prime}}}^{\otimes n}$ and
\begin{equation}
\label{eq:definettimeasnonf}
\omega_{EW_{A}W_{A^{\prime}}RR^{\prime}}^{n}=V_{A\rightarrow EW_{A}%
W_{A^{\prime}}}^{n}\left(  \zeta_{ARR^{\prime}}^{n}\right)  \ .
\end{equation}
However, Alice and Bob will not execute the protocol with respect to the state
$\omega_{EW_{A}W_{A^{\prime}}RR^{\prime}}^{n}$ directly, but they will do so
with respect to another pure, sub-normalized state $\bar{\gamma}%
_{EW_{A}W_{A^{\prime}}RR^{\prime}}^{n}$ that is also classically coherent on
$W_{A}W_{A^{\prime}}$ with respect to the basis $\{\ket{w}\}_{w\in W_{A}}$,
and such that
\begin{equation}
\Vert\bar{\gamma}_{EW_{A}W_{A^{\prime}}RR^{\prime}}^{n}-\omega_{EW_{A}%
W_{A^{\prime}}RR^{\prime}}^{n}\Vert_{1}\leq\eps_{n}\ ,\label{eq:gammastate}
\end{equation}
for some $\eps_{n}>0$. By a corollary of Carath\'{e}odory's theorem
(Lemma~\ref{lem:cara}), we write
\begin{equation}
\zeta_{AR}^{n}=\sum_{i\in I}p_{i}\cdot(\sigma_{AR}^{i})^{\otimes n}\ ,
\label{eq:carameasnonf}
\end{equation}
where $\sigma_{AR}^{i}\in\cV(\cH_{AR})$, $I=\{1,2,\ldots,(n+1)^{2|A||R|-2}\}$,
and $\{p_{i}\}_{i\in I}$ a probability distribution. From this, we define
\begin{equation}
\gamma_{EW_{A}W_{A^{\prime}}R}^{i,n}=[(V_{A\rightarrow EW_{A}W_{A^{\prime}}%
}\otimes\cI_{R})(\sigma_{AR}^{i})]^{\otimes n} ,
\end{equation}
as well as its reduction as a classical-quantum state $\gamma_{W_{A}R}^{i,n}$
on the systems $W_{A^{\prime}}^{n}R^{n}$:
\begin{equation}
\gamma_{W_{A}R}^{i,n}=\sum_{w^{n}}p_{W^{n}|i}(w^{n}|i)\,\proj{w^n}_{W_{A^{\prime
}}^{n}}\otimes\gamma_{R^{n}}^{i,w^{n}} ,
\end{equation}
for some distribution $p_{W^{n}|i}(w^{n}|i)$. On this state, we act with
typical projectors to flatten its spectrum as we need, defining the projected
state $\bar{\gamma}_{W_{A}R}^{i,n}$ as follows:
\begin{equation}\label{eq:typicalmeasnonf}
\bar{\gamma}_{W_{A}R}^{i,n}=\sum_{w^{n}}p_{W^{n}|i}(w^{n}|i)\,\Pi_{\delta
}^{W^{n}|i}\,\proj{w^n}_{W_{A^{\prime}}^{n}}\,\Pi_{\delta}^{W^{n}|i}\otimes
\Pi_{\gamma^{i},\delta}^{n}\,\Pi_{\gamma^{i,w^{n}},\delta}^{n}\,\gamma_{R^{n}%
}^{i,w^{n}}\,\Pi_{\gamma^{i,w^{n}},\delta}^{n}\,\Pi_{\gamma^{i},\delta}^{n},
\end{equation}
where $\Pi_{\delta}^{W^{n}|i}$ is a typical projector corresponding to the
distribution $p_{W^{n}|i}(w^{n}|i)$, $\Pi_{\gamma^{i,w^{n}},\delta}^{n}$ is a
conditionally typical projector corresponding to the conditional state
$\gamma^{i,w^{n}}$ on the system $R^{n}$, and $\Pi_{\gamma^{i},\delta}^{n}$ is
a typical projector corresponding to the state $\gamma_{R}^{i,n}$ (see
Appendix~\ref{app:typical} for details of typical projectors).
It follows from the properties
of typical projectors that the projected state $\bar{\gamma}_{W_{A}R}^{i,n}$
becomes arbitrarily close in trace distance to the original state
$\gamma_{W_{A}R}^{i,n}$:
\begin{equation}
\Vert\gamma_{W_{A}R}^{i,n}-\bar{\gamma}_{W_{A}R}^{i,n}\Vert_{1}\leq
\frac{\varepsilon_{n}^{2}}{4} ,
\end{equation}
for some $\varepsilon_{n}>0$ and sufficiently large $n$. The equivalence of the trace distance and the purified distance (Lemma~\ref{lem:distance}) together with Uhlmann's theorem then imply the existence of some subnormalized pure state $\bar{\gamma}_{EW_{A}W_{A^{\prime}}R}^{i,n}$ such that
\begin{align}
P(\gamma_{EW_{A}W_{A^{\prime}}R}^{i,n},\bar{\gamma}_{EW_{A}W_{A^{\prime}}R}^{i,n})\leq\frac{\varepsilon_{n}}{2}\ .
\end{align}
Hence, we get by~\eqref{eq:carameasnonf} and~\eqref{eq:definettimeasnonf} that%
\begin{equation}
\bar{\gamma}_{EW_{A}W_{A^{\prime}}R}^{n}=\sum_{i\in I}p_{i}\cdot\bar{\gamma
}_{EW_{A}W_{A^{\prime}}R}^{i,n}%
\end{equation}
is $\varepsilon_{n}$-close to $\omega_{EW_{A}W_{A^{\prime}}R}^{n}$ in purified distance. By features of the purified distance~\cite[Chapter 3]{tomamichel:thesis}, and the equivalence of the trace distance and the purified distance (Lemma~\ref{lem:distance}), we then get that there exists an extension $\bar{\gamma}_{EW_{A}W_{A^{\prime}}RR^{\prime}}^{n}$ of $\bar {\gamma}_{EW_{A}W_{A^{\prime}}R}^{n}$ with the desired properties such that~\eqref{eq:gammastate} holds.

Alice and Bob will now act with a classical state splitting protocol for
$W_{A}W_{A^{\prime}}$ with respect to the classically coherent state
$\bar{\gamma}_{EW_{A}W_{A^{\prime}}RR^{\prime}}^{n}$. However, we do not
directly use our result about classical state splitting
(Theorem~\ref{thm:splitting}), but instead employ a non-smooth version that is
implicit in the proof of Theorem~\ref{thm:splitting}. It follows
from~\eqref{eq:ccost1-max} and \eqref{eq:ccost2} that for an $(\eps_{n}%
+|W_{A^{\prime}}|^{-n})$-error (in purified distance) classical state
splitting protocol for $W_{A}W_{A^{\prime}}$, a classical communication cost%
\begin{equation}\label{eq:ccostmeasnonf}
c_{n}\leq\max_{y}I_{\max}(W_{A^{\prime}}:RR^{\prime})_{\bar{\gamma}^{n,y}%
}+4\cdot\log\frac{1}{\eps_{n}}+4+\log\log|W_{A^{\prime}}|+\log n
\end{equation}
is achievable, and it follows from~\eqref{eq:sumcost} and \eqref{eq:sumcost2}
that the sum cost becomes
\begin{equation}\label{eq:sumcostmeasnonf}
c_{n}+s_{n}\leq\max_{y}H_{0}(W_{A^{\prime}})_{\bar{\gamma}^{n,y}}+2+\log
\log|W_{A^{\prime}}|+\log n\ ,
\end{equation}
where the measurement outcomes $y$ are with respect to the pre-processing
measurement defined in~\eqref{eq:preproc}. This provides Bob with the
measurement outcomes of $\cN$ for the fixed de Finetti type input state
$\zeta_{ARR^{\prime}}$, and a total error of $(3\cdot\eps_{n}+2\cdot
|W_{A^{\prime}}|^{-n})$ in trace distance.
A local simulation of the classical map $q_{x^{n}|w^{n}}$ at no cost at Bob's
side then provides Bob with the measurement outcomes of $\cM$ as desired
(again for the fixed de Finetti type input state $\zeta_{ARR^{\prime}}$ and
the same error). However, the sum cost of this non-feedback measurement
simulation can be reduced by invoking an additional randomness recycling step
as in Ref.~\cite{Bennett06}.
We do this by having Alice and Bob apply, conditioned on $y$, a strong
classical min-entropy extractor on $W$ against the (quantum) side information
$XRR^{\prime}$ (Proposition~\ref{prop:extractor}), and this lowers the sum
cost to
\begin{equation}
c_{n}+s_{n}\leq\max_{y}\big(H_{0}(W_{A^{\prime}})_{\bar{\gamma}^{n,y}}%
-H_{\min}(W_{A^{\prime}}|RR'X_{A^{\prime}})_{\bar{\gamma}^{n,y}}\big)+4\cdot
\log\frac{1}{\eps_{n}}+2+\log\log|W_{A^{\prime}}|\ ,
\end{equation}
for an additional error $\eps_{n}$ in trace distance, leading to a total error
of
\begin{equation}\label{eq:totalerrormeasnonf}
(4\cdot\eps_{n}+2\cdot|W_{A^{\prime}}|^{-n})
\end{equation}
in trace distance. The min-entropy extractor is performed with respect to the
following typical projected state, in order to increase the amount of
randomness that can be extracted:%
\begin{multline}
\bar{\gamma}_{XW_{A}R}^{i,n}=\sum_{w^{n},x^{n}}q\left(  x^{n}|w^{n}\right)
p_{W^{n}|i}(w^{n}|i)\,\Pi_{\delta}^{X^{n}|W^{n},i}\ \proj{x^n}_{X^{n}}%
\ \Pi_{\delta}^{X^{n}|W^{n},i}\otimes\\
\Pi_{\delta}^{W^{n}|i}\,\proj{w^n}_{W_{A^{\prime}}^{n}}\,\Pi_{\delta}%
^{W^{n}|i}\otimes\Pi_{\gamma^{i},\delta}^{n}\,\Pi_{\gamma^{i,w^{n}},\delta
}^{n}\,\gamma_{R^{n}}^{i,w^{n}}\,\Pi_{\gamma^{i,w^{n}},\delta}^{n}%
\,\Pi_{\gamma^{i},\delta}^{n}\ .
\end{multline}

In the rest of the proof, we bring the classical communication
cost~\eqref{eq:ccostmeasnonf} and the sum cost~\eqref{eq:sumcostmeasnonf} into
the right form, and show that the asymptotic error for the measurement
simulation~\eqref{eq:postmeasnonf} becomes zero. By the behavior of the
max-information under projective measurements (Corollary~\ref{lem:projective}%
), a dimension upper bound for the max-information (Lemma~\ref{lem:addsub}),
the fact that we can assume $|R^{\prime|A|^{2}-1}$
(Proposition~\ref{prop:selection}), and a quasi-convexity property of the
max-information (Lemma~\ref{lem:quasicon}), we get
\begin{equation}
c_{n}\leq\max_{i\in I}I_{\max}(W_{A^{\prime}}:R)_{\bar{\gamma}^{i,n}}+\chi\ ,
\end{equation}
where
\begin{equation}
\chi=2\cdot\log\big((n+1)^{|A|^{2}-1}\big)+\log\big((n+1)^{2|A||R|-2}%
\big)+4\cdot\log\frac{1}{\eps_{n}}+4+\log\log|W_{A^{\prime}}|+\log n\ .
\end{equation}
By an upper bound on the max-information (Lemma~\ref{lem:minbound}), and a
lower bound on the conditional min-entropy (Lemma~\ref{lem:minlower}), this
can be estimated to be
\begin{equation}
c_{n}\leq\max_{i\in I}\big(H_{R}(W_{A^{\prime}})_{\bar{\gamma}^{i,n}}-H_{\min
}(W_{A^{\prime}}R)_{\bar{\gamma}^{i,n}}+H_{0}(R)_{\bar{\gamma}^{i,n}%
}\big)+\chi\ .
\end{equation}
By~\eqref{eq:typicalmeasnonf}, as well as the properties of typical
projectors (see Appendix~\ref{app:typical}), we get
\begin{align}
c_{n} &  \leq n\cdot\max_{i\in I}\big(H(W_{A^{\prime}})_{\gamma^{i}%
}-H(W_{A^{\prime}}R)_{\gamma^{i}}+H(R)_{\gamma^{i}}\big)+5nc\delta+\chi\\
&  \leq n\cdot\max_{\rho}\big(H(W_{A^{\prime}})_{\cN(\rho)}-H(W_{A^{\prime}%
}R)_{(\cN\otimes\cI)(\rho)}+H(R)_{\rho}\big)+5nc\delta+\chi\ ,
\end{align}
where $\rho_{AR}\in\cV(\cH_{AR})$, $c$ is a constant, and $\delta > 0$ is
the typicality tolerance.

By choosing
\begin{equation}
\eps_{n}=(n+1)^{4(1-|A|^{2})}\ ,
\end{equation}
we finally get an asymptotic classical communication cost of
\begin{equation}
c=\limsup_{n\rightarrow\infty}\frac{c_{n}}{n}\leq\max_{\rho}I(W_{A^{\prime}%
}:R)_{\beta}\ ,
\end{equation}
where $\rho_{AR}\in\cV(\cH_{AR})$, $\beta_{W_{A^{\prime}}R}$ is as
in~\eqref{eq:betaoptimization}, and a vanishing asymptotic
error~\eqref{eq:postmeasnonf}, \eqref{eq:totalerrormeasnonf},
\begin{equation}
\limsup_{n\rightarrow\infty}\delta_{n}\leq\limsup_{n\rightarrow\infty
}\Big((n+1)^{|A|^{2}-1}\cdot\big(4\cdot(n+1)^{4(1-|A|^{2})}+2\cdot
|X_{A^{\prime}}|^{-n}\big)\Big)=0\ .
\end{equation}
For the sum cost~\eqref{eq:sumcostmeasnonf} we get by the definition of the
measurement in~\eqref{eq:preproc} with outcomes $y$, and a line of argument as
in~\eqref{eq:minmaxequiv} that
\begin{align}
c_{n}+s_{n} &  \leq\max_{y}\big(H_{\min}(W_{A^{\prime}})_{\bar{\gamma}^{n,y}%
}-H_{\min}(W_{A^{\prime}}|RR'X_{A^{\prime}})_{\bar{\gamma}^{n,y}}%
\big)\nonumber\\
&  +4\cdot\log\frac{1}{\eps_{n}}+2+\log\log|W_{A^{\prime}}|+\log n\\
&  \leq\max_{y}I_{\max}(W_{A^{\prime}}:RX_{A^{\prime}})_{\bar{\gamma}^{n,y}%
}\nonumber\\
&  +2\cdot\log\big((n+1)^{|A|^{2}-1}\big)+4\cdot\log\frac{1}{\eps_{n}}+2+\log\log|W_{A^{\prime}}|+\log n\ ,
\end{align}
where we used a lower bound on the max-information (Lemma \ref{lem:minbound}), as well as a dimension upper bound for the max-information (Lemma~\ref{lem:addsub}), and the fact that $|R'|\leq(n+1)^{|A|^{2}-1}$ (Proposition~\ref{prop:selection}). Using similar arguments
(see Appendix~\ref{app:typical}) as in the estimation of the
classical communication cost, we arrive at
\begin{equation}
c+s=\limsup_{n\rightarrow\infty}\frac{c_{n}+s_{n}}{n}\leq\max_{\rho
}I(W_{A^{\prime}}:RX_{A^{\prime}})_{\beta}\ ,
\end{equation}
where $\rho_{AR}\in\cV(\cH_{AR})$, and $\beta_{W_{A^{\prime}}RX_{A^{\prime}}}$
is as in~\eqref{eq:betaoptimization}. By minimizing over all decompositions of
the measurement $\cM$ as in~\eqref{eq:mdecomp}, the claim follows.
\end{proof}

%%%%%%%%%%%%%%%%%%%%%%%%%%%%%%%%%%%%%%%%%%%%%

\section{Extensions and Applications}

\label{sec:applications}

\paragraph{Structured State Splitting Scheme} The state splitting protocol presented in Theorem~\ref{thm:splitting} has the drawback that the permutations $U_y$ used by Bob must be chosen at random and little is known about the structure of the unitaries $V_y$. We can remedy this by basing the state splitting protocol used in Theorem~\ref{thm:splitting} on a modified state merging protocol instead of that in Lemma~\ref{lem:merging}.  The new protocol has the advantage that Alice's classical operation $P$ (recall that the roles are reversed) is a \emph{linear} function rather than an arbitrary permutation, though still randomly-chosen, and Bob's unitary operation $V$ is based on the decoder of an information reconciliation protocol. We now give a sketch of this modified state merging protocol.  

The protocol is based on the observation from~\cite{boileaurenes,reneshabi} that state merging is a by-product of an entanglement distillation protocol in which Alice measures the stabilizers of a Calderbank-Shor-Steane (CSS) code such that, given the resulting (classical) syndrome results, Bob could determine both the amplitude (logical $X$ value) and phase (logical $Z$ value)\footnote{Associating $X$ with amplitude instead of phase contravenes the usual convention in the QECC literature, but better fits the notation of the current paper.} of Alice's remaining encoded system by using his systems. Indeed, for state merging of classically coherent states such as $\rho_{X_AX_BBR}$ in Lemma~\ref{lem:merging}, the situation is considerably simpler since Bob can already determine the amplitude of Alice's system $X_A$ by measuring $X_B$. For simplicity, let us regard $X_A$ as a collection of $k=\log|X_A|$ qubits.

Thus, from the analysis of~\cite{boileaurenes,reneshabi}, all that remains is for Alice to measure a sufficient number of phase stabilizers from an error-correcting code to enable Bob to determine the phase of her encoded systems by using the syndromes and his systems $X_B$ and $B$, with probability of error at most $\epsilon$. Use of a linear code ensures that Alice does not damage Bob's amplitude information in the course of trying to increase his phase information.  Since the task at hand is equivalent to information reconciliation, the number of phase stabilizers needed for this purpose is no more than $H_{\rm max}(\widetilde X_A|X_BB)_\rho+2\log\frac1\epsilon+4$~\cite{renes10}, where $\widetilde{X}_A$ denotes the phase observable conjugate to the amplitude observable $X_A$. 

To measure the phase stabilizers, Alice can apply a suitable unitary operation to all of her systems and then simply measure the phases of a certain subset of the outputs which correspond to the stabilizers~\cite{NieChu00Book}. But for stablizer codes, this unitary just implements a linear transformation in the phase basis of the $k$ qubits, which can equally well be regarded as a linear transformation in the amplitude basis $\{\ket{x}\}_{x\in X_A}$. Therefore, just as in the original protocol, Alice applies a ``classical'' transformation of her system and sends one part of the output to Bob. 

For his part, Bob can complete the state merging protocol by coherently implementing the decoder from the information reconciliation protocol, a construction of which based on the pretty good measurement is given in~\cite{renes10}. 

Finally, the number of entangled systems generated in the state merging protocol is equal to the number of systems left at Alice's side, or $|X_A|-H_{\rm max}(\widetilde X_A|X_BB)_\rho-2\log\frac1\epsilon-4$. Lemma~\ref{lem:minmax} shows that this is in fact greater than $H_{\rm min}(X_A|R)_\rho-2\log\frac1\epsilon-4$. Thus, the stabilizer-based state merging protocol achieves the same costs as the state merging protocol of Lemma~\ref{lem:merging} (up to terms of order $\log\frac1\epsilon$).

\paragraph{Fixed IID Source}
The case of a fixed IID source also follows easily from our analysis. We can simply apply
the one-shot protocol from Theorem~\ref{thm:splitting} to the case of a fixed IID source and then invoke the
asymptotic equipartition property for the smooth max-information and the smooth max-entropy.
In this way, we provide an alternative proof of this special case that avoids the use of typical projectors 
and the operator Chernoff bound~\cite{Winter04}.

\paragraph{Instrument Compression}
In Winter's original paper on measurement compression, additional arguments were required to establish that a POVM (positive operator-valued measure) compression protocol
can function as an instrument compression protocol, where for an instrument compression protocol,
Alice and Bob receive the classical outcomes of the measurement while Alice obtains the
post-measurement states (see Section~V of \cite{Winter04}). We note that our protocol
here already functions as an instrument compression protocol due to our use of the classical state splitting
protocol as a coding primitive.

\paragraph{Universal Measurement Compression with Quantum Side Information}
We briefly mention that there is no point in considering a protocol for universal measurement compression
with quantum side information (similar to the observation in Section~6.3 of \cite{DWHW12}).
In such a scenario, the receiver would obtain some quantum side information
correlated with the state on which the measurement should be simulated (see \cite{Wilde12} for the case of
measurement compression with quantum side information for a fixed IID source). Though, since a universal protocol
should simulate the measurement with respect to an \emph{arbitrary} input state, a special case of this
input is one in which
the quantum side information and input state are in a product state. Thus, the universal protocol given here
is suitable for this case. This occurs simply because our simulation is with respect to the diamond norm, and
the diamond norm is known to be robust under tensoring with other systems upon which the channel
of interest does not act.

Another way to see this is that one could imagine devising a protocol for which quantum side information
is taken into account. Based on the results in Ref.~\cite{Wilde12}, we would expect the rate of
classical communication for such a protocol to be equal to the following information quantity:
$% \begin{equation}
\max_{\rho}I(X:R\,|\,B)_{(\cI\otimes\cM\otimes\cI)(\rho)}\ 
$% \end{equation}
, where $\rho_{AB}\in\cS(AB)$ is an input state with quantum side information in the
system $B$, and $\rho_{RAB}\in\cV(RAB)$ is a purification of $\rho_{AB}$.
Though, as shown in Theorem~16 of \cite{DWHW12}, the above information quantity is actually equal to the information quantity in \eqref{eq:max-info-gain}, so that there is no improvement in the communication
rate from the availability of quantum side information.

%%%%%%%%%%%%%%%%%%%%%%%%%%%%%%%%%%%%%%%%%%%%%

\section{Conclusion}

\label{sec:conclusion}

We have justified the information-theoretic measure
in \eqref{eq:max-info-gain} as quantifying the
information gain of a quantum measurement, by providing
an operational interpretation in terms
of a protocol for universal measurement compression.
The main tools used to prove this result
are the post-selection technique for quantum channels
and a novel classical state splitting protocol
based on permutation-based extractors.

There are a number of open questions to consider going forward from here.
Given that there are applications of ``information gain'' or ``entropy reduction'' in thermodynamics \cite{J09} and quantum feedback control \cite{DJJ01}, it would be interesting to explore whether the quantity in
\eqref{eq:max-info-gain} has some application in these domains.
Also, Buscemi {\it et al}. showed that the static measure of information gain in \eqref{eq:winter-info-gain} plays a role in quantifying the trade-off between information extraction and disturbance \cite{BHH08}, and it would be interesting to determine if there is a role in this setting for the information quantity in \eqref{eq:max-info-gain}.

\section*{Acknowledgments}

%We gratefully acknowledge useful discussions with...

We acknowledge discussions with Francesco Buscemi, Matthias Christandl,
Nilanjana Datta, Patrick Hayden, Renato Renner, and Marco Tomamichel.
MB and JMR are supported by the Swiss National Science Foundation
through the National Centre of Competence in Research \lq Quantum
Science and Technology\rq. MB is also supported by Swiss
National Science Foundation grants PP00P2-128455 and 20CH21-138799
and German Science Foundation grant CH 843/2-1. JMR is also
supported by European Research Council grant 258932. MMW
acknowledges support from the Centre de Recherches
Math\'{e}matiques at the University of Montreal.

%%%%%%%%%%%%%%%%%%%%%%%%%%%%%%%%%%%%%%%%%%%%%

\appendix

\section{Entropies}

\begin{lemma}\cite[Lemma 3.1.10]{renato:diss}\label{lem:minlower}
Let $\rho_{AB}\in\cS_{\leq}(\cH_{AB})$. Then we have that
\begin{align}
H_{\min}(A|B)_{\rho}\geq H_{\min}(AB)_{\rho}-H_{0}(B)_{\rho}\ .
\end{align}
\end{lemma}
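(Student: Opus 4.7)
My plan is to unfold the definitions and then exhibit an explicit candidate for the minimization defining $H_{\min}(A|B)_\rho$. Recall that
\begin{align}
H_{\min}(A|B)_{\rho} = -\inf_{\sigma_{B}\in\cS(B)} D_{\max}(\rho_{AB}\,\|\,\id_{A}\otimes\sigma_{B}),
\end{align}
so the claim is equivalent to producing a single $\sigma_{B}\in\cS(B)$ satisfying
\begin{align}
D_{\max}(\rho_{AB}\,\|\,\id_{A}\otimes\sigma_{B}) \leq \log\|\rho_{AB}\|_{\infty} + \log\mathrm{rank}(\rho_{B}),
\end{align}
because the right-hand side equals $-H_{\min}(AB)_{\rho}+H_{0}(B)_{\rho}$.

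The natural candidate is the flat state on the support of $\rho_{B}$, namely $\sigma_{B}:=\rho_{B}^{0}/\mathrm{rank}(\rho_{B})$, which lies in $\cS(B)$ since $\mathrm{tr}[\rho_{B}^{0}]=\mathrm{rank}(\rho_{B})$. With this choice and $\lambda = \log\|\rho_{AB}\|_{\infty}+\log\mathrm{rank}(\rho_{B})$, the operator inequality $2^{\lambda}\,\id_{A}\otimes\sigma_{B}\geq\rho_{AB}$ in the definition of $D_{\max}$ reduces, after cancelling the factor of $\mathrm{rank}(\rho_{B})$, to
\begin{align}
\|\rho_{AB}\|_{\infty}\cdot(\id_{A}\otimes\rho_{B}^{0})\;\geq\;\rho_{AB}.
\end{align}

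To verify this, I would first establish the support containment $\mathrm{supp}(\rho_{AB})\subseteq \cH_{A}\otimes\mathrm{supp}(\rho_{B})$. This follows by setting $\Pi := \id_{A}\otimes(\id_{B}-\rho_{B}^{0})$ and noting $\mathrm{tr}[\Pi\,\rho_{AB}]=\mathrm{tr}[(\id_{B}-\rho_{B}^{0})\rho_{B}]=0$; since $\Pi$ and $\rho_{AB}$ are both positive semidefinite, this forces $\Pi\,\rho_{AB}\,\Pi=0$, and hence $(\id_{A}\otimes\rho_{B}^{0})\rho_{AB}(\id_{A}\otimes\rho_{B}^{0})=\rho_{AB}$. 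Restricted to the subspace $\cH_{A}\otimes\mathrm{supp}(\rho_{B})$, the operator $\id_{A}\otimes\rho_{B}^{0}$ acts as the identity, so the displayed inequality reduces to $\|\rho_{AB}\|_{\infty}\cdot\id\geq\rho_{AB}$ on that subspace, which is immediate from $\|\rho_{AB}\|_{\infty}$ being the largest eigenvalue.

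I do not anticipate any real obstacle here; the whole argument is a direct check against the variational characterization of $H_{\min}(A|B)_\rho$ together with the spectral bound $\rho_{AB}\leq\|\rho_{AB}\|_{\infty}\id$ on the relevant subspace. The only mildly delicate point is the support-containment step, but this is standard and follows from positivity as sketched above.
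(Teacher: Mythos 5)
Your argument is correct. The paper itself does not reprove this lemma but cites it from Renner's thesis, and your derivation is exactly the standard one there: take $\sigma_B$ to be the normalized projector $\rho_B^0/\mathrm{rank}(\rho_B)$, use $\mathrm{supp}(\rho_{AB})\subseteq\cH_A\otimes\mathrm{supp}(\rho_B)$ (which, as you observe, follows from $\mathrm{tr}[(\id_A\otimes(\id_B-\rho_B^0))\rho_{AB}]=0$ together with positivity), and then invoke $\rho_{AB}\le\|\rho_{AB}\|_\infty\,\id$ on that subspace to get the required $D_{\max}$ bound. One tiny gloss: from $\Pi\rho_{AB}\Pi=0$ with $\Pi$ a projector you should pass through $\Pi\sqrt{\rho_{AB}}=0$ to conclude $\Pi\rho_{AB}=\rho_{AB}\Pi=0$, which is what actually gives $(\id_A\otimes\rho_B^0)\rho_{AB}(\id_A\otimes\rho_B^0)=\rho_{AB}$; but this is routine and does not affect the correctness of the proof.
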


The max-mutual information is monotone under local operations.

\begin{lemma}\cite[Lemma B.14]{Berta09_2}\label{lem:mono}
Let $\rho_{AB}\in\cS_{\leq}(AB)$, and let $\cE$ be a quantum channel of the form $\cE=\cE_{A}\otimes\cE_{B}$. Then we have that
\begin{align}
I_{\max}(A:B)_{\rho}\geq I_{\max}(A:B)_{\cE(\rho)}\ .
\end{align}
\end{lemma}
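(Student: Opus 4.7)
The plan is to work directly from the variational formula $I_{\max}(A{:}B)_\rho = \inf_{\sigma_B \in \cS(B)} D_{\max}(\rho_{AB} \| \rho_A \otimes \sigma_B)$ and exploit the data processing property of $D_{\max}$. First I would pick $\sigma_B^\star \in \cS(B)$ attaining the infimum on the left (it is attained by compactness of $\cS(B)$ and lower semi-continuity of $D_{\max}$, though an $\eps$-optimizer and a limiting argument would work equally well). Setting $\lambda \equiv I_{\max}(A{:}B)_\rho$, the definition of $D_{\max}$ then gives the operator inequality
\begin{align}
2^{\lambda}\,\rho_A \otimes \sigma_B^\star \;\geq\; \rho_{AB}.
\end{align}

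Next I would apply the local channel $\cE = \cE_A \otimes \cE_B$ to both sides. Since $\cE$ is completely positive, operator inequalities are preserved, so
\begin{align}
2^{\lambda}\,\cE_A(\rho_A) \otimes \cE_B(\sigma_B^\star) \;\geq\; (\cE_A \otimes \cE_B)(\rho_{AB}).
\end{align}
Because $\cE_A$ is trace preserving, $\cE_A(\rho_A)$ equals the $A$-output marginal of $\cE(\rho_{AB})$; and $\cE_B(\sigma_B^\star)$ is a valid state on the $B$-output space. Thus $\cE_B(\sigma_B^\star)$ is a feasible choice in the infimum defining $I_{\max}(A{:}B)_{\cE(\rho)}$, and the displayed inequality certifies $D_{\max}\bigl(\cE(\rho_{AB}) \,\big\|\, \cE_A(\rho_A) \otimes \cE_B(\sigma_B^\star)\bigr) \leq \lambda$. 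Taking the infimum over $\sigma_B$ on the right completes the argument, yielding $I_{\max}(A{:}B)_{\cE(\rho)} \leq \lambda = I_{\max}(A{:}B)_\rho$.

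There is no serious obstacle here; the only subtlety worth flagging is the role of trace preservation of $\cE_A$ (without it, $\cE_A(\rho_A)$ would not match the $A$-marginal of $\cE(\rho_{AB})$, and the resulting inequality would not conform to the definition of $I_{\max}$). Trace preservation of $\cE_B$ is needed so that $\cE_B(\sigma_B^\star) \in \cS(B')$ is a valid candidate in the infimum. The subnormalization of $\rho_{AB}$ poses no difficulty, since $D_{\max}$ and hence $I_{\max}$ are defined on subnormalized states in exactly the form used above.
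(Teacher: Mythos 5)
Your argument is correct and is essentially the standard proof of this cited lemma (it mirrors the technique the paper itself uses for Lemma~\ref{lem:nonnegative}): take an optimal $\sigma_B^\star$, read off the operator inequality $2^{\lambda}\rho_A\otimes\sigma_B^\star\geq\rho_{AB}$, push it through the positive map $\cE_A\otimes\cE_B$, and observe that the result certifies feasibility for the infimum defining $I_{\max}(A{:}B)_{\cE(\rho)}$. One small attribution slip: the identity $\cE_A(\rho_A)=\mathrm{tr}_{B}\left[(\cE_A\otimes\cE_B)(\rho_{AB})\right]$ follows from trace preservation of $\cE_B$ (so that the partial trace commutes past it), not of $\cE_A$; trace preservation of $\cE_B$ is thus doing double duty here, both for the marginal identification and for $\cE_B(\sigma_B^\star)\in\cS(B)$. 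This does not affect the validity of the proof.
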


%The max-mutual information of classical-quantum states is upper bounded by the dimension of the classical system.

%\begin{lemma}\label{lem:cdimension}
%Let $\eps\geq0$, and let $\rho_{XR}\in\cS(XR)$ be classical on $X$ with respect to the basis $\{\ket{x}\}_{x\in X}$. Then we have that
%\begin{align}
%I_{\max}(X:R)_{\rho}\leq\log|X|\ .
%\end{align}
%\end{lemma}

%\begin{proof}
%By~\cite[Lemma B.2]{Berta09_2} we have
%\begin{align}
%I_{\max}(X:R)_{\rho}=H_{0}(X)_{\rho}-H_{\min}(X|R)_{\rho_{R|X}}\ ,
%\end{align}
%where $\rho_{R|X}=(\rho_{X}\otimes\id_{R})^{-1/2}\frac{\rho_{XR}}{\mathrm{rank}(\rho_{X})}(\rho_{X}\otimes\id_{R})^{-1/2}$. Since $\rho_{R|X}$ is classical on $X$ with respect to the basis $\{\ket{x}\}_{x\in X}$, $H_{\min}(X|R)_{\rho_{R|X}}\geq0$ by~\cite[Lemma 3.1.9]{renato:diss}.
%\end{proof}

The max-information can be upper and lower bounded in terms of entropies.

\begin{lemma}\cite[Lemma B.10]{Berta09_2}\label{lem:minbound}
Let $\rho_{AB}\in\cS_{\leq}(AB)$. Then we have that
\begin{align}
H_{R}(A)_{\rho}-H_{\min}(A|B)_{\rho}\geq I_{\max}(A:B)_{\rho}\geq H_{\min}(A)_{\rho}-H_{\min}(A|B)_{\rho}\ ,
\end{align}
\end{lemma}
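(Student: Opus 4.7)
The plan is to prove each inequality separately by directly manipulating the operator inequalities defining $D_{\max}$ and $H_{\min}(A|B)$, using compactness of $\cS(B)$ to pick optimizers.

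For the lower bound $I_{\max}(A:B)_{\rho}\geq H_{\min}(A)_{\rho}-H_{\min}(A|B)_{\rho}$, I would fix an arbitrary $\sigma_{B}\in\cS(B)$ and set $\lambda=D_{\max}(\rho_{AB}\|\rho_{A}\otimes\sigma_{B})$, so $2^{\lambda}\rho_{A}\otimes\sigma_{B}\geq\rho_{AB}$. Since $H_{\min}(A)_{\rho}=-\log\|\rho_{A}\|_{\infty}$, we have the operator inequality $\rho_{A}\leq 2^{-H_{\min}(A)_{\rho}}\id_{A}$; tensoring with $\sigma_{B}\geq 0$ and chaining gives
\begin{align}
2^{\lambda-H_{\min}(A)_{\rho}}\id_{A}\otimes\sigma_{B}\geq\rho_{AB}\ ,
\end{align}
so $D_{\max}(\rho_{AB}\|\id_{A}\otimes\sigma_{B})\leq\lambda-H_{\min}(A)_{\rho}$. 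Taking the infimum over $\sigma_{B}$ yields $-H_{\min}(A|B)_{\rho}\leq I_{\max}(A:B)_{\rho}-H_{\min}(A)_{\rho}$, which rearranges to the claim.

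For the upper bound $H_{R}(A)_{\rho}-H_{\min}(A|B)_{\rho}\geq I_{\max}(A:B)_{\rho}$, where I take $H_{R}(A)_{\rho}$ to be the negative logarithm of the smallest nonzero eigenvalue of $\rho_{A}$ (so that $\rho_{A}^{0}\leq 2^{H_{R}(A)_{\rho}}\rho_{A}$), I would pick $\sigma_{B}^{\ast}\in\cS(B)$ achieving the infimum in the definition of $H_{\min}(A|B)_{\rho}$, giving $2^{-H_{\min}(A|B)_{\rho}}\id_{A}\otimes\sigma_{B}^{\ast}\geq\rho_{AB}$. The key step is to reduce $\id_{A}$ to $\rho_{A}^{0}$: since $\mathrm{tr}_{B}[\rho_{AB}]=\rho_{A}$ forces $\mathrm{supp}(\rho_{AB})\subseteq\mathrm{supp}(\rho_{A})\otimes\cH_{B}$, we have $\rho_{AB}=(\rho_{A}^{0}\otimes\id_{B})\rho_{AB}(\rho_{A}^{0}\otimes\id_{B})$. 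Sandwiching the inequality with $\rho_{A}^{0}\otimes\id_{B}$ therefore leaves the left side unchanged and replaces $\id_{A}$ by $\rho_{A}^{0}$ on the right. Applying $\rho_{A}^{0}\leq 2^{H_{R}(A)_{\rho}}\rho_{A}$ then gives
\begin{align}
2^{H_{R}(A)_{\rho}-H_{\min}(A|B)_{\rho}}\rho_{A}\otimes\sigma_{B}^{\ast}\geq\rho_{AB}\ ,
\end{align}
which upper-bounds $D_{\max}(\rho_{AB}\|\rho_{A}\otimes\sigma_{B}^{\ast})$, and hence $I_{\max}(A:B)_{\rho}$, by $H_{R}(A)_{\rho}-H_{\min}(A|B)_{\rho}$.

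The only subtle point is the projection onto $\mathrm{supp}(\rho_{A})$ in the upper bound; without it, one cannot relate $\id_{A}$ to $\rho_{A}$ via an operator inequality when $\rho_{A}$ is rank-deficient, and the claim would fail for any state $\rho_{A}$ with a zero eigenvalue. Once this reduction is made, the proof reduces to the two spectral bounds $\rho_{A}\leq 2^{-H_{\min}(A)_{\rho}}\id_{A}$ and $\rho_{A}^{0}\leq 2^{H_{R}(A)_{\rho}}\rho_{A}$ together with the definitions, and no further ingredients are needed.
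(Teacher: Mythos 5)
Your proposal is correct, and it follows essentially the same route as the cited proof in \cite[Lemma B.10]{Berta09_2} (the present paper only cites this lemma without reproving it): both directions come from the elementary operator inequalities $\rho_{A}\leq 2^{-H_{\min}(A)_{\rho}}\id_{A}$ and $\rho_{A}^{0}\leq 2^{H_{R}(A)_{\rho}}\rho_{A}$ combined with the $D_{\max}$ definitions, and your care with the support projection $\rho_{A}^{0}$ in the upper bound is exactly the right (and necessary) step.
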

$H_R(\rho)$ is defined as the negative logarithm of the smallest
eigenvalue of $\rho$ on its support \cite{Berta09_2}.

The following lemma is about the behavior of the max-information under projective measurements.

\begin{lemma}\cite[Corollary B.16]{Berta09_2}\label{lem:projective}
Let $\rho_{AB}\in\cS_{\leq}(AB)$, and let $P=\left\{P_{A}^{i}\right\}_{i\in I}$ be a collection of projectors that describe a projective measurement on $A$. For $\mathrm{tr}\left[P^{i}_{A}\rho_{A}\right]\neq0$, let $p_{i}=\mathrm{tr}\left[P^{i}_{A}\rho_{A}\right]$, and $\rho_{AB}^{i}=p_{i}^{-1}\cdot P^{i}_{A}\rho_{AB}P^{i}_{A}$. Then we have that
\begin{align}
I_{\max}(A:B)_{\rho}\geq\max_{i}I_{\max}(A:B)_{\rho^{i}}\ ,
\end{align}
where the maximum ranges over all $i$ for which $\rho_{AB}^{i}$ is defined.
\end{lemma}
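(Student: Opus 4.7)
My plan is to work directly from the operator-inequality characterization of the max-relative entropy. Writing $\lambda := I_{\max}(A:B)_\rho$, the definition $I_{\max}(A:B)_\rho = \inf_{\sigma_B\in\cS(B)} D_{\max}(\rho_{AB}\|\rho_A\otimes\sigma_B)$ together with the compactness of $\cS(B)$ in finite dimensions should furnish a minimizer $\sigma_B^\star$ satisfying $\rho_{AB}\leq 2^\lambda\,\rho_A\otimes\sigma_B^\star$ as a semidefinite inequality on $AB$. (If one wishes to avoid attainment, the same reasoning goes through for an $\eps$-approximate minimizer with an $\eps\to 0$ limit at the end.)

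The key step is to conjugate this operator inequality by $P^i_A\otimes\id_B$. Since conjugation by any operator preserves the positive semidefinite ordering, I obtain
\begin{align}
(P^i_A\otimes\id_B)\,\rho_{AB}\,(P^i_A\otimes\id_B)\leq 2^\lambda\,(P^i_A\rho_A P^i_A)\otimes\sigma_B^\star\ .
\end{align}
Because $P^i_A$ acts only on the $A$-side, the left-hand side equals $p_i\,\rho^i_{AB}$ by definition, and $P^i_A\rho_A P^i_A=p_i\,\rho^i_A$ after partial-tracing out $B$. Dividing through by $p_i>0$ (permitted by the hypothesis that $\rho^i_{AB}$ is defined) then yields $\rho^i_{AB}\leq 2^\lambda\,\rho^i_A\otimes\sigma_B^\star$, which is precisely the witness needed to conclude $D_{\max}(\rho^i_{AB}\|\rho^i_A\otimes\sigma_B^\star)\leq\lambda$. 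Infimizing over the second argument immediately gives $I_{\max}(A:B)_{\rho^i}\leq \lambda$.

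Maximizing over indices $i$ with $p_i\neq 0$ then establishes the claim. I do not anticipate any genuine obstacle here: the argument is essentially the observation that conjugation by $P^i_A\otimes\id_B$ preserves both the semidefinite ordering and the factorized $A$-versus-$B$ structure of the reference operator, because the $B$-side is untouched while the $A$-side gets mapped $\rho_A\mapsto p_i\rho^i_A$ consistently on both sides of the inequality, so the \emph{same} $\sigma_B^\star$ serves as a feasible reference for every post-measurement branch. The only mildly subtle point is attainment of the infimum defining $I_{\max}$, which is standard in finite dimensions or else handled by the $\eps$-argument mentioned above; no smoothing of states, data-processing inequality, or duality between $H_{\min}$ and $H_{\max}$ is needed.
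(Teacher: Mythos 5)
Your argument is correct and is essentially the standard proof of this fact (the paper itself only cites \cite[Corollary B.16]{Berta09_2} rather than proving it): one conjugates the operator inequality $\rho_{AB}\leq 2^{\lambda}\,\rho_{A}\otimes\sigma_{B}^{\star}$ witnessing $D_{\max}$ by $P_{A}^{i}\otimes\id_{B}$, observes that the same $\sigma_{B}^{\star}$ remains feasible for every branch after cancelling $p_{i}$, and handles attainment of the infimum by compactness or an $\eps$-approximate minimizer. No issues.
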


\begin{lemma}\label{lem:cqmax}
Let $\eps\geq0$, and let $\rho_{XR}\in\cS(XR)$ be classical on $X$ with respect to the basis $\{\ket{x}\}_{x\in X}$. Then there exists $\bar{\rho}_{XR}\in\cB^{\eps}(\rho_{XR})$ classical on $X$ with respect to the basis $\{\ket{x}\}_{x\in X}$ such that
\begin{align}
I_{\max}^{\eps}(X:R)_{\rho}=I_{\max}(X:R)_{\bar{\rho}}\ .
\end{align}
\end{lemma}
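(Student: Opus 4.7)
The plan is to take an optimizer of the smoothing and project it onto the classical-on-$X$ subspace via dephasing, then argue that this only helps (or at least does not hurt) on both the distance and the max-information sides.

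First, by the compactness remark at the end of Section~\ref{sec:prelim}, the infimum in the definition of $I_{\max}^{\eps}(X:R)_{\rho}$ is attained: pick $\tilde{\rho}_{XR}\in\cB^{\eps}(\rho_{XR})$ with $I_{\max}(X:R)_{\tilde{\rho}}=I_{\max}^{\eps}(X:R)_{\rho}$. In general $\tilde{\rho}_{XR}$ need not be classical on $X$. Let $\cD_{X}(\cdot)=\sum_{x}\proj{x}(\cdot)\proj{x}$ denote the dephasing channel in the distinguished basis, and define
\begin{align}
\bar{\rho}_{XR}=(\cD_{X}\otimes\cI_{R})(\tilde{\rho}_{XR})\ ,
\end{align}
which is by construction classical on $X$ with respect to $\{\ket{x}\}$ and lies in $\cS_{\leq}(XR)$ since $\cD_{X}$ is trace-preserving and CPTP.

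Next I would verify $\bar{\rho}_{XR}\in\cB^{\eps}(\rho_{XR})$. Because $\rho_{XR}$ is already classical on $X$, the dephasing map acts as the identity on it, i.e.\ $(\cD_{X}\otimes\cI_{R})(\rho_{XR})=\rho_{XR}$. The monotonicity of the purified distance under CPTP maps then yields
\begin{align}
P(\bar{\rho}_{XR},\rho_{XR})=P\bigl((\cD_{X}\otimes\cI_{R})(\tilde{\rho}_{XR}),(\cD_{X}\otimes\cI_{R})(\rho_{XR})\bigr)\leq P(\tilde{\rho}_{XR},\rho_{XR})\leq\eps\ .
\end{align}
Hence $\bar{\rho}_{XR}$ is a feasible candidate in the infimum defining $I_{\max}^{\eps}(X:R)_{\rho}$, which gives the inequality
\begin{align}
I_{\max}^{\eps}(X:R)_{\rho}\leq I_{\max}(X:R)_{\bar{\rho}}\ .
\end{align}

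For the opposite inequality, I would apply Lemma~\ref{lem:mono} (monotonicity of the max-information under local CPTP maps): since $\cD_{X}\otimes\cI_{R}$ is of product form,
\begin{align}
I_{\max}(X:R)_{\bar{\rho}}=I_{\max}(X:R)_{(\cD_{X}\otimes\cI_{R})(\tilde{\rho})}\leq I_{\max}(X:R)_{\tilde{\rho}}=I_{\max}^{\eps}(X:R)_{\rho}\ .
\end{align}
Combining the two directions gives the stated equality. The argument is essentially routine; the only thing worth double-checking is that dephasing indeed preserves the max-information relation (i.e.\ the Lemma~\ref{lem:mono} hypothesis of product channels is genuinely satisfied, which it is because we only dephase on $X$), and that purified distance is defined for subnormalized states so the monotonicity step is valid — both of which hold in the present setup.
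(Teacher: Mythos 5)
Your proof is correct and is precisely the standard dephasing/pinching argument that the paper invokes by citing Proposition 5.8 of Tomamichel's thesis: take the optimizer, dephase on $X$, use invariance of $\rho_{XR}$ under dephasing plus monotonicity of the purified distance to stay in the $\eps$-ball, and use Lemma~\ref{lem:mono} for the reverse inequality. No gaps.
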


\begin{proof}
This is standard and can be proven exactly as in~\cite[Proposition 5.8]{tomamichel:thesis}.
\end{proof}

We need the following monotonicity of the max-information.

\begin{lemma}\label{lem:nonnegative}
Let $\rho_{AR}\in\cS(AR)$, and $\Pi_{A}\in\cP(A)$ with $\Pi_{A}\leq\id_{A}$. Then we have that
\begin{align}
I_{\max}(A:R)_{\rho}\geq I_{\max}(A:R)_{\Pi\rho\Pi}\ .
\end{align}
\end{lemma}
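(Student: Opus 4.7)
The plan is to prove this by directly manipulating the operator inequality that defines $D_{\max}$, which is the cleanest approach here. Recall that
\begin{align}
I_{\max}(A:R)_{\rho}=\inf_{\sigma_{R}\in\cS(R)}D_{\max}(\rho_{AR}\|\rho_{A}\otimes\sigma_{R})\ ,
\end{align}
so I would start by fixing an (approximate) optimizer $\sigma_{R}$ for the right-hand side and setting $\lambda=I_{\max}(A:R)_{\rho}$, so that
\begin{align}
2^{\lambda}\,\rho_{A}\otimes\sigma_{R}\geq\rho_{AR}\ .
\end{align}

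The key step is to sandwich this operator inequality with $\Pi_{A}\otimes\id_{R}$, which preserves the inequality since conjugation by a positive operator is an order-preserving map:
\begin{align}
2^{\lambda}\,(\Pi_{A}\rho_{A}\Pi_{A})\otimes\sigma_{R}\geq\Pi_{A}\rho_{AR}\Pi_{A}\ .
\end{align}
Here I would emphasize the crucial observation that the $A$-marginal of $\Pi_{A}\rho_{AR}\Pi_{A}$ equals $\Pi_{A}\rho_{A}\Pi_{A}$, since the partial trace over $R$ commutes with $\Pi_{A}$. Hence the right-hand factor on the left is precisely $(\Pi\rho\Pi)_{A}$, as required to match the definition of $I_{\max}$ for the sandwiched state. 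This yields
\begin{align}
D_{\max}\!\left((\Pi\rho\Pi)_{AR}\,\|\,(\Pi\rho\Pi)_{A}\otimes\sigma_{R}\right)\leq\lambda\ ,
\end{align}
and taking the infimum over $\sigma_{R}\in\cS(R)$ on the right-hand side (or noting it already holds for the same $\sigma_{R}$) gives $I_{\max}(A:R)_{\Pi\rho\Pi}\leq\lambda=I_{\max}(A:R)_{\rho}$.

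There is really no main obstacle here; the only subtlety worth flagging is that $\Pi\rho\Pi$ is in general subnormalized, but the definitions of $D_{\max}$ and $I_{\max}$ used throughout the paper extend to $\cS_{\leq}$, so the argument goes through unchanged. An alternative route would invoke a monotonicity statement for $I_{\max}$ under local completely positive trace-non-increasing maps (since $X\mapsto\Pi_{A}X\Pi_{A}$ is such a map on system $A$), but given how short the direct operator-inequality proof is, I would present it as above rather than appealing to external monotonicity results.
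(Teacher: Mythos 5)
Your proposal is correct and takes essentially the same approach as the paper: fix a (near-)optimal $\sigma_R$, conjugate the operator inequality $2^{\lambda}\,\rho_A\otimes\sigma_R\geq\rho_{AR}$ by $\Pi_A\otimes\id_R$, and observe that the resulting left factor is the $A$-marginal of the sandwiched state. If anything, your version is slightly tidier, since the paper's chain of inequalities includes an extraneous intermediate step $\rho_A\otimes\sigma_R\geq\Pi_A\rho_A\Pi_A\otimes\sigma_R$ that is not needed for the conclusion.
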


\begin{proof}
Let $\sigma_{R}\in\cS(R)$, and let $\lambda\in\mathbb{R}$ be such that $I_{\max}(A:R)_{\rho}=D_{\max}(\rho_{AR}\|\rho_{A}\otimes\sigma_{R})=\log\lambda$. Then we have that $\lambda\cdot\rho_{A}\otimes\sigma_{R}\geq\rho_{AR}$, and with this
\begin{align}
\lambda\cdot\rho_{A}\otimes\sigma_{R}\geq\lambda\cdot\Pi_{A}\rho_{A}\Pi_{A}\otimes\sigma_{R}\geq\Pi_{A}\rho_{AR}\Pi_{A}\ .
\end{align}
Hence, we have $\log\lambda\geq D_{\max}(\Pi_{A}\rho_{AR}\Pi_{A}\|\Pi_{A}\rho_{A}\Pi_{A}\otimes\sigma_{R})\geq I_{\max}(A:R)_{\Pi\rho\Pi}$.
\end{proof}

The following is a bound on the increase of the smooth max-information when an additional subsystem is added.

\begin{lemma}\cite[Lemma B.9]{Berta09_2}\label{lem:addsub}
Let $\eps\geq0$, and let $\rho_{ABR}\in\cS(ABR)$. Then we have that
\begin{align}
I_{\max}^{\eps}(A:BR)_{\rho}\leq I_{\max}^{\eps}(A:B)_{\rho}+2\cdot\log|R|\ .
\end{align}
\end{lemma}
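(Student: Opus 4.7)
My plan is to first prove a non-smooth version of the claim and then lift it to the smooth statement via an extension argument based on Uhlmann's theorem.

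For the non-smooth bound $I_{\max}(A:BR)_{\rho} \leq I_{\max}(A:B)_{\rho} + 2\log|R|$, I would pick an optimizer $\sigma_B \in \cS(B)$ for $I_{\max}(A:B)_{\rho}$, so that $\rho_{AB} \leq 2^{\lambda}\,\rho_A \otimes \sigma_B$ with $\lambda = I_{\max}(A:B)_{\rho}$. As trial state in the infimum defining $I_{\max}(A:BR)_{\rho}$, I would use $\tau_{BR}=\sigma_B \otimes \pi_R$ with $\pi_R = \id_R/|R|$. The one non-trivial ingredient is the operator inequality
\begin{equation*}
\rho_{ABR} \leq |R|\,\rho_{AB}\otimes \id_R,
\end{equation*}
which I would obtain by conjugating by $\rho_{AB}^{-1/2}$ on the support of $\rho_{AB}$ and reducing to the norm bound $\|M_{XY}\|_\infty \leq |Y|\,\|M_X\|_\infty$, valid for any PSD $M_{XY}$ with marginal $M_X=\mathrm{tr}_Y M_{XY}$. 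This norm bound follows from a short Schmidt-decomposition argument applied to the top eigenvector of $M_{XY}$: its reduced state on $X$ has maximal eigenvalue at least $1/\min(|X|,|Y|) \geq 1/|Y|$. Combining the displayed operator inequality with $\rho_{AB} \leq 2^{\lambda}\rho_A \otimes \sigma_B$ yields $\rho_{ABR} \leq 2^{\lambda + 2\log|R|}\,\rho_A \otimes \sigma_B \otimes \pi_R$, and hence $I_{\max}(A:BR)_{\rho} \leq \lambda + 2\log|R|$.

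For the smoothing, I would let $\bar{\rho}_{AB} \in \cB^\eps(\rho_{AB})$ attain $I_{\max}^\eps(A:B)_{\rho}$ (the minimum is attained by the compactness of the $\eps$-ball noted in Section~\ref{sec:prelim}). The extension property of the purified distance, a direct consequence of Uhlmann's theorem, then produces a subnormalized extension $\bar{\rho}_{ABR}$ of $\bar{\rho}_{AB}$ with $P(\bar{\rho}_{ABR},\rho_{ABR}) \leq P(\bar{\rho}_{AB},\rho_{AB}) \leq \eps$, so that $\bar{\rho}_{ABR} \in \cB^\eps(\rho_{ABR})$. Applying the non-smooth bound to $\bar{\rho}_{ABR}$ then gives
\begin{equation*}
I_{\max}^\eps(A:BR)_{\rho} \leq I_{\max}(A:BR)_{\bar{\rho}} \leq I_{\max}(A:B)_{\bar{\rho}} + 2\log|R| = I_{\max}^\eps(A:B)_{\rho} + 2\log|R|,
\end{equation*}
which is the claim. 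The principal obstacle is establishing the operator inequality $\rho_{ABR} \leq |R|\,\rho_{AB}\otimes \id_R$; once this is in place the non-smooth bound is a short $D_{\max}$-bookkeeping argument and the smoothing step is just Uhlmann's theorem. The two factors of $\log|R|$ in the final bound arise transparently, one from this operator inequality and one from replacing $\id_R$ by the normalized $\pi_R$ in the trial state.
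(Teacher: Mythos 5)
Your proof is correct. The paper quotes this lemma from Ref.~\cite{Berta09_2} without reproducing the proof; your argument---proving the non-smooth inequality via the operator bound $\rho_{ABR}\leq|R|\cdot\rho_{AB}\otimes\id_{R}$ (one factor of $\log|R|$) and the trial state $\sigma_{B}\otimes\id_{R}/|R|$ (the second factor), then smoothing by Uhlmann-extending the optimizer $\bar{\rho}_{AB}$ to a state $\bar{\rho}_{ABR}\in\cB^{\eps}(\rho_{ABR})$---follows the same route as the proof in the cited reference.
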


The following is a strengthening of the bound in Lemma~\ref{lem:addsub} when the additional system is classical.

\begin{lemma}\label{lem:cdimension}
Let $\rho_{ABX}\in\cS(ABX)$ be classical on $\cH_{X}$ with respect to the basis $\{\ket{x}\}_{x\in X}$. Then we have that
\begin{align}
I_{\max}(A:BX)_{\rho}\leq I_{\max}(A:B)_{\rho}+\log|X|\ .
\end{align}
\end{lemma}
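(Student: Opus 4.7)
The plan is to exploit classicality to sharpen the dimensional bound of Lemma~\ref{lem:addsub} (which loses $2\log|R|$ for general $R$) by a factor of two. The strategy is to construct an explicit ansatz for the optimization in $I_{\max}(A:BX)$ built from the optimizer for $I_{\max}(A:B)$. Concretely, let $\sigma_{B}\in\cS(B)$ be an optimal state achieving $I_{\max}(A:B)_{\rho}$, so that with $\lambda=2^{I_{\max}(A:B)_{\rho}}$ we have $\rho_{AB}\leq\lambda\,\rho_{A}\otimes\sigma_{B}$. I would then simply take $\sigma_{BX}=\sigma_{B}\otimes\pi_{X}$, where $\pi_{X}=\id_{X}/|X|$ is the maximally mixed state on the classical register.

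The key step is to decompose $\rho_{ABX}$ along the classical basis of $X$. Since $\rho_{ABX}$ is block-diagonal in $\{\ket{x}\}$, we may write $\rho_{ABX}=\sum_{x}p_{x}\,\rho^{x}_{AB}\otimes\proj{x}_{X}$, where $p_{x}\rho^{x}_{AB}=\bra{x}\rho_{ABX}\ket{x}_{X}$. For each $x$, positivity gives $p_{x}\rho^{x}_{AB}\leq\rho_{AB}\leq\lambda\,\rho_{A}\otimes\sigma_{B}$. Tensoring with $\proj{x}_{X}$ and summing over $x$ collapses the $|x\rangle\langle x|_X$'s to the identity on $X$, yielding
\begin{align}
\rho_{ABX}\leq\lambda\,\rho_{A}\otimes\sigma_{B}\otimes\id_{X}=\lambda\,|X|\,\rho_{A}\otimes\sigma_{B}\otimes\pi_{X}=\lambda\,|X|\,\rho_{A}\otimes\sigma_{BX}\ .
\end{align}
Taking logarithms and using the definition of $I_{\max}(A:BX)_{\rho}$ as an infimum over valid $\sigma_{BX}$ gives $I_{\max}(A:BX)_{\rho}\leq\log\lambda+\log|X|=I_{\max}(A:B)_{\rho}+\log|X|$, as desired.

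No real obstacle is expected here: the saving of the factor of two compared to Lemma~\ref{lem:addsub} is precisely the classical nature of $X$, which makes the identity $\id_{X}=|X|\,\pi_{X}$ available as a product-state upper bound without any additional cost from off-diagonal blocks that would have to be absorbed. The proof requires only the operator inequality $p_x\rho^x_{AB}\leq\rho_{AB}$ and the freedom to choose $\sigma_{BX}$ as a product state in the definition of $I_{\max}$.
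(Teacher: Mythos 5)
Your proof is correct and is essentially the same as the paper's: both pick the optimal $\sigma_B$ for $I_{\max}(A:B)_\rho$ and then use the feasible ansatz $\sigma_B\otimes\frac{\id_X}{|X|}$ in the definition of $I_{\max}(A:BX)_\rho$. The only cosmetic difference is that where the paper invokes a cited lemma from Renner's thesis to get $\rho_{ABX}\leq\rho_{AB}\otimes\id_X$, you reprove that inequality inline from the block-diagonal decomposition of the cq-state.
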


\begin{proof}
Let $\sigma_{B}\in\cS(B)$ be such that
\begin{align}
I_{\max}(A:B)_{\rho}=D_{\max}\left(\rho_{AB}\|\rho_{A}\otimes\sigma_{B}\right)=\log\mu\ ,
\end{align}
that is, $\mu\in\bR$ is minimal such that $\mu\cdot\rho_{A}\otimes\sigma_{B}\geq\rho_{AB}$. This implies $\mu\cdot\rho_{A}\otimes\sigma_{B}\otimes\frac{\id_{X}}{|X|}\geq\frac{1}{|X|}\cdot\rho_{AB}\otimes\id_{X}$. But we have by~\cite[Lemma 3.1.9]{renato:diss} that $\rho_{AB}\otimes\id_{X}\geq\rho_{ABC}$, and hence $\mu\cdot\rho_{A}\otimes\sigma_{B}\otimes\frac{\id_{X}}{|X|}\geq\frac{1}{|X|}\cdot\rho_{ABX}$. Now, let $\lambda\in\bR$ be minimal such that $\lambda\cdot\rho_{A}\otimes\sigma_{B}\otimes\frac{\id_{X}}{|X|}\geq\rho_{ABX}$. Thus, it follows that $\lambda\leq\mu\cdot|X|$, and from this we get
\begin{align}
I_{\max}(A:BX)_{\rho}&\leq D_{\max}(\rho_{ABX}\|\rho_{A}\otimes\sigma_{B}\otimes\frac{\id_{X}}{|X|})=\log\lambda\\
&\leq D_{\max}(\rho_{AB}\|\rho_{A}\otimes\sigma_{B})+\log|X|=I_{\max}(A:B)_{\rho}+\log|X|\ .
\end{align}
\end{proof}

The smooth max-information is quasi-convex in its argument in the following sense.

\begin{lemma}\cite[Lemma B.18]{Berta09_2}\label{lem:quasicon}
Let $\eps\geq0$, and let $\rho_{AB}=\sum_{i\in I}p_{i}\rho_{AB}^{i}\in\cS_{\leq}(AB)$ with $\rho_{AB}^{i}\in\cS_{\leq}(AB)$ for $i\in I$. Then we have that
\begin{align}
I_{\max}^{\eps}(A:B)_{\rho}\leq\max_{i\in I}I_{\max}^{\eps}(A:B)_{\rho^{i}}+\log|I|\ .
\end{align}
\end{lemma}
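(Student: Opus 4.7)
The plan is to lift the convex combination into an auxiliary classical flag register on Bob's side and then invoke the monotonicity of the max-information under local operations (Lemma~\ref{lem:mono}) together with the defining operator inequality for $D_{\max}$. This reduces everything to a block-diagonal operator comparison in which the $p_{i}$ factors can be absorbed at the cost of a $\log|I|$ slack.

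First I would fix the smoothing. For each $i\in I$, choose $\bar{\rho}^{i}_{AB}\in\cB^{\eps}(\rho^{i}_{AB})$ and $\tau^{i}_{B}\in\cS(B)$ attaining the optimum in the definition of the smooth max-information, so that
\begin{align}
\lambda_{i}\cdot\bar{\rho}^{i}_{A}\otimes\tau^{i}_{B}\,\geq\,\bar{\rho}^{i}_{AB}\ ,\qquad \log\lambda_{i}=I^{\eps}_{\max}(A:B)_{\rho^{i}}\ .
\end{align}
Set $\bar{\rho}_{AB}=\sum_{i}p_{i}\bar{\rho}^{i}_{AB}$. By (quasi-)convexity of the purified distance under convex combinations (the standard fact recorded as Lemma~\ref{lem:conv} in the paper), $\bar{\rho}_{AB}\in\cB^{\eps}(\rho_{AB})$, so it suffices to bound $I_{\max}(A:B)_{\bar{\rho}}$ by $\max_{i}I^{\eps}_{\max}(A:B)_{\rho^{i}}+\log|I|$.

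Next I would introduce a classical register $I$, imagined to sit on Bob's side, together with the subnormalized states
\begin{align}
\sigma_{ABI}=\sum_{i}p_{i}\bar{\rho}^{i}_{AB}\otimes\proj{i}_{I}\ ,\qquad \tau_{BI}=\sum_{i}\tfrac{1}{|I|}\tau^{i}_{B}\otimes\proj{i}_{I}\ .
\end{align}
Tracing out $I$ is a local operation on Bob and yields $\bar{\rho}_{AB}$, so Lemma~\ref{lem:mono} gives $I_{\max}(A:B)_{\bar{\rho}}\leq I_{\max}(A:BI)_{\sigma}$. Both $\sigma_{ABI}$ and $\sigma_{A}\otimes\tau_{BI}=\bar{\rho}_{A}\otimes\tau_{BI}$ are block-diagonal in the $I$ basis, so the desired inequality $|I|(\max_{j}\lambda_{j})\cdot\sigma_{A}\otimes\tau_{BI}\geq\sigma_{ABI}$ splits block by block. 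Using $\bar{\rho}_{A}\geq p_{i}\bar{\rho}^{i}_{A}$ and the defining inequality for $\lambda_{i}$, the $i$-th block reads
\begin{align}
\bigl(\max_{j}\lambda_{j}\bigr)\cdot\bar{\rho}_{A}\otimes\tau^{i}_{B}\,\geq\,\lambda_{i}\,p_{i}\,\bar{\rho}^{i}_{A}\otimes\tau^{i}_{B}\,\geq\,p_{i}\,\bar{\rho}^{i}_{AB}\ ,
\end{align}
which is exactly what is required. Therefore $I_{\max}(A:BI)_{\sigma}\leq\log|I|+\max_{i}I^{\eps}_{\max}(A:B)_{\rho^{i}}$, and combining with the two monotonicity steps above yields the claim.

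The main technical point, and the only nontrivial step, is the block-wise operator inequality above: picking the \emph{uniform} classical distribution on $I$ inside $\tau_{BI}$ (rather than $p_{i}$) is precisely what absorbs the factor $p_{i}^{-1}$ coming from $\bar{\rho}^{i}_{A}\leq p_{i}^{-1}\bar{\rho}_{A}$ at the controlled price of $\log|I|$. The rest is bookkeeping, modulo the standard joint behaviour of the purified-distance ball under convex combinations which is invoked once at the start to move the smoothing through the mixture.
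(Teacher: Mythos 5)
Your proof is correct. The paper itself does not prove this lemma (it is imported from Lemma B.18 of \cite{Berta09_2}), and your argument is essentially the standard one from that reference: after tracing out the flag register $I$, your block-diagonal comparison is exactly the direct verification that $|I|\cdot\bigl(\max_{j}\lambda_{j}\bigr)\cdot\bar{\rho}_{A}\otimes\tau_{B}\geq\bar{\rho}_{AB}$ with $\tau_{B}=\frac{1}{|I|}\sum_{i}\tau^{i}_{B}$, so the flag-register-plus-monotonicity packaging is an equivalent (and harmless) reorganization of the same key step, with the smoothing handled correctly via Lemma~\ref{lem:conv} under the intended reading that $\{p_{i}\}$ is a probability distribution.
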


The following is a quasi-convexity property of the zero-R\'enyi entropy.

\begin{lemma}\cite[Lemma 26]{entCost}\label{lem:entropyquasicon}
Let $\eps\geq0$, and let $\rho_{A}=\sum_{j=1}^{N}p_{j}\rho_{A}^{j}\in\cS(A)$ with $p_{j}>0$ for $j=1,\ldots,N$. Then we have that
\begin{align}
H_{0}^{\eps}(A)_{\rho}\leq\max_{j}H_{0}^{\eps}(A)_{\rho^{j}}+\log N\ .
\end{align}
\end{lemma}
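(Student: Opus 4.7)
The plan is to construct a single smoothed state for the mixture $\rho_A$ by taking the corresponding convex combination of individual smoothers of the components $\rho_A^j$, and then to bound its rank using sub-additivity of rank under addition. Concretely, for each $j\in\{1,\ldots,N\}$, let $\bar\rho_A^j\in\cB^\eps(\rho_A^j)$ attain the infimum in the definition of $H_0^\eps(A)_{\rho^j}$, i.e.,
\begin{align}
H_0^\eps(A)_{\rho^j}=\log\mathrm{rank}(\bar\rho_A^j)\ ,\qquad P(\bar\rho_A^j,\rho_A^j)\leq\eps\ .
\end{align}
Since the $\eps$-ball is compact these optimizers exist. Then define the candidate
\begin{align}
\bar\rho_A=\sum_{j=1}^N p_j\,\bar\rho_A^j\ .
\end{align}

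The first step is to verify $\bar\rho_A\in\cB^\eps(\rho_A)$. This follows from (strong) joint concavity of the fidelity, which for subnormalised states gives $\bar F(\sum_j p_j\sigma^j,\sum_j p_j\tau^j)\geq\sum_j p_j\bar F(\sigma^j,\tau^j)$ whenever the weights coincide; translated to the purified distance it yields
\begin{align}
P(\bar\rho_A,\rho_A)\leq\max_j P(\bar\rho_A^j,\rho_A^j)\leq\eps\ .
\end{align}
This is the standard quasi-convexity of the purified distance and is already used elsewhere in the paper (e.g.\ in the proof of Theorem~\ref{thm:splitting}), so I would invoke it rather than reprove it.

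The second step is the rank bound. Since $\mathrm{supp}(\bar\rho_A)=\sum_j\mathrm{supp}(\bar\rho_A^j)$ when all $p_j>0$, sub-additivity of rank under the sum of positive operators gives
\begin{align}
\mathrm{rank}(\bar\rho_A)\leq\sum_{j=1}^N\mathrm{rank}(\bar\rho_A^j)\leq N\cdot\max_j\mathrm{rank}(\bar\rho_A^j)\ .
\end{align}
Taking logarithms and using the definitions of $H_0$ and $H_0^\eps$,
\begin{align}
H_0^\eps(A)_\rho\leq H_0(A)_{\bar\rho}\leq\log N+\max_j\log\mathrm{rank}(\bar\rho_A^j)=\log N+\max_j H_0^\eps(A)_{\rho^j}\ ,
\end{align}
which is the desired inequality.

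The only non-routine ingredient is the closeness claim $P(\bar\rho_A,\rho_A)\leq\eps$; everything else is an elementary linear-algebra fact about ranks. I expect that the main obstacle is simply to state this convexity property of the purified distance cleanly for subnormalised states (with the generalised fidelity $\bar F$), rather than anything conceptually deep; once that is in hand the two-line rank count closes the argument.
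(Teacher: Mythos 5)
Your proof is correct, and it is the natural/standard argument for this fact (the paper itself does not reprove the lemma but imports it from the cited reference, where essentially the same optimizer-mixing and rank-counting argument is used). Note that the closeness step you flag as the only non-routine ingredient is exactly Lemma~\ref{lem:conv} in the appendix, so you can cite it directly rather than re-deriving the joint concavity of $\bar F$; the rank bound and the attainment of the optimizers (ranks take finitely many values, and the ball is compact) are fine as you state them.
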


The smooth max-entropy and smooth zero-R\'enyi entropy are equivalent in the following sense.

\begin{lemma}\label{lem:alterequiv}
Let $\eps>0$, $\eps'\geq0$, and $\rho_{A}\in\cS(A)$. Then we have that
\begin{align}
H_{0}^{\eps'}(A)_{\rho}\geq H_{\max}^{\eps'}(A)_{\rho}>H_{0}^{\eps'+\sqrt{2\eps}}(A)_{\rho}-2\cdot\log\frac{1}{\eps}\ .
\end{align}
\end{lemma}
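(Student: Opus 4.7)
The statement contains two inequalities; I would handle them separately, with the second being the substantive one.

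For the first inequality $H_0^{\eps'}(A)_\rho \geq H_\max^{\eps'}(A)_\rho$, I would argue pointwise: for any $\sigma_A\in\cS_{\leq}(A)$ with nonzero eigenvalues $\lambda_1,\dots,\lambda_r$ (where $r=\mathrm{rank}(\sigma)$), Cauchy--Schwarz gives $(\sum_{i}\sqrt{\lambda_i})^2 \leq r\cdot\sum_i \lambda_i \leq r$. Taking logarithms yields $H_\max(A)_\sigma \leq H_0(A)_\sigma$. Since the smoothing is over the same $\eps'$-ball $\cB^{\eps'}(\rho_A)$ in both cases, the infimum of $H_\max$ is at most that of $H_0$.

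For the second inequality, I would choose an optimizer $\bar{\rho}_A\in\cB^{\eps'}(\rho_A)$ achieving $H_\max^{\eps'}(A)_\rho = H_\max(A)_{\bar\rho} =: h$, so that $\mathrm{tr}[\bar{\rho}^{1/2}] = 2^{h/2}$. Diagonalize $\bar{\rho}_A = \sum_i \bar\lambda_i \proj{i}$ and define the truncated state $\tilde{\rho}_A := \Pi\bar{\rho}_A\Pi$, where $\Pi = \sum_{i:\bar\lambda_i > \eps^2/2^h}\proj{i}$ is the projector onto eigenspaces with ``large'' eigenvalue. The plan is then to show two things: (i) $\tilde{\rho}_A$ has small rank, and (ii) $\tilde{\rho}_A$ is close to $\bar{\rho}_A$ in purified distance. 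For (i), each kept eigenvalue satisfies $\sqrt{\bar\lambda_i} > \eps/2^{h/2}$, so $\mathrm{rank}(\tilde{\rho}_A)\cdot \eps/2^{h/2} < \sum_i \sqrt{\bar\lambda_i} = 2^{h/2}$, giving $\mathrm{rank}(\tilde{\rho}_A) < 2^h/\eps$ and hence $H_0(A)_{\tilde\rho} < h + \log\frac{1}{\eps} \leq h + 2\log\frac{1}{\eps}$.

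For (ii), the discarded mass is bounded by a similar calculation: $\mathrm{tr}[\bar{\rho}_A-\tilde{\rho}_A] = \sum_{i:\bar\lambda_i\leq \eps^2/2^h}\bar\lambda_i \leq (\eps/2^{h/2})\sum_i \sqrt{\bar\lambda_i} = \eps$. Because $\bar{\rho}$ and $\tilde{\rho}$ commute and are simultaneously diagonal with $\sqrt{\bar\lambda_i\tilde\lambda_i}=\tilde\lambda_i$ on the kept subspace and zero off it, $F(\bar{\rho}_A,\tilde{\rho}_A)=\mathrm{tr}[\tilde{\rho}_A]\geq 1-\eps$. Plugging into the definition $\bar F = F + \sqrt{(1-\mathrm{tr}\bar{\rho})(1-\mathrm{tr}\tilde{\rho})}$ and $P=\sqrt{1-\bar F^2}$, one obtains $P(\bar{\rho}_A,\tilde{\rho}_A)\leq \sqrt{2\eps-\eps^2}\leq\sqrt{2\eps}$. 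The triangle inequality for the purified distance then places $\tilde{\rho}_A$ in $\cB^{\eps'+\sqrt{2\eps}}(\rho_A)$, so $H_0^{\eps'+\sqrt{2\eps}}(A)_\rho \leq H_0(A)_{\tilde\rho} < H_\max^{\eps'}(A)_\rho + 2\log\frac{1}{\eps}$, as desired.

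The main subtlety, and essentially the only delicate point, is the choice of the eigenvalue cutoff threshold $\eps^2/2^h$: it must be tight enough to keep the rank small (via the Markov-type inequality on $\sum\sqrt{\bar\lambda_i}$) but loose enough that the discarded mass is $O(\eps)$, which in turn controls the fidelity loss to $1-O(\eps)$ and hence the purified distance to $O(\sqrt{\eps})$. Once the right threshold is fixed, the rest is bookkeeping with Cauchy--Schwarz and the identities relating fidelity, trace, and purified distance for commuting subnormalized operators.
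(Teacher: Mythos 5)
Your proof is correct, and it follows the same overall strategy as the paper's: take the optimizer $\bar\rho_A$ of $H_{\max}^{\eps'}$, truncate its spectrum, bound the rank of the truncation by $H_{\max}$, and control the purified distance via the discarded mass. The execution differs in one substantive way. The paper chooses the truncation by cumulative tail mass (keep the largest eigenvalues until the tail has weight at most $\eps$) and then invokes an external result, \cite[Lemma 13]{Berta09}, to relate $H_{\max}$ of the original state to $H_0$ of the truncation, finishing with the gentle measurement lemma (Lemma~\ref{lem:gentle}). You instead fix an explicit eigenvalue threshold $\eps^2/2^{h}$ and derive both needed facts from scratch by Markov/Cauchy--Schwarz applied to $\sum_i\sqrt{\bar\lambda_i}=2^{h/2}$: the kept part has rank below $2^{h}/\eps$ (in fact giving the slightly better additive term $\log\frac{1}{\eps}$ rather than $2\log\frac{1}{\eps}$, assuming $\eps\le 1$), and the discarded mass is at most $\eps$. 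This makes the argument self-contained and arguably more transparent about why the threshold works. One small point to tighten: since the optimizer $\bar\rho_A\in\cB^{\eps'}(\rho_A)$ is in general subnormalized (and indeed normalization typically does not minimize $H_{\max}$), the intermediate claim $F(\bar\rho_A,\tilde\rho_A)=\mathrm{tr}[\tilde\rho_A]\ge 1-\eps$ is not literally correct; one only gets $F\ge\mathrm{tr}[\bar\rho_A]-\eps$. The conclusion survives because the generalized fidelity satisfies $\bar F\ge \mathrm{tr}[\bar\rho_A]-\eps+\bigl(1-\mathrm{tr}[\bar\rho_A]\bigr)=1-\eps$, which you implicitly use when plugging into $P=\sqrt{1-\bar F^2}$, but this step deserves to be written out.
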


\begin{proof}
Since the (unconditional) max-entropy is the R\'enyi entropy of order $1/2$, the first inequality just follows from the ordering of the R\'enyi entropies~\cite{renyi60,Petz_Book}.

The idea for the proof of the second inequality is from the supplementary material~\cite[Lemma 13]{Berta09}. Let $\sigma_{A}\in\cB^{\eps'}(\rho_{A})$ such that $H_{\max}^{\eps'}(A)_{\rho}=H_{\max}(A)_{\sigma}$, and let $\sigma_{A}=\sum_{i}t_{i}\proj{i}_{A}$ be a spectral decomposition of $\sigma_{A}$ where the eigenvalues $t_{i}$ are ordered non-increasingly. Define the projector $\Pi_{A}^{k}=\sum_{i\geq k}\proj{i}_{A}$, let $j$ be the smallest index such that $\mathrm{tr}\left[\Pi_{A}^{j}\sigma_{A}\right]\leq\eps$, and define $\Pi_{A}=\id_{A}-\Pi_{A}^{j}$ as well as $\bar{\sigma}_{A}=\Pi_{A}\sigma_{A}\Pi_{A}$. By~\cite[Lemma 13]{Berta09} we have
\begin{align}
H_{\max}(A)_{\sigma}>-\log\sup\{\lambda:\bar{\sigma}_{A}\geq\lambda\cdot\bar{\sigma}_{A}^{0}\}-2\cdot\log\frac{1}{\eps}&\geq\log\mathrm{tr}\left[\bar{\sigma}_{A}^{0}\right]-2\cdot\log\frac{1}{\eps}\\
&=H_{0}(A)_{\bar{\sigma}}-2\cdot\log\frac{1}{\eps}\ ,
\end{align}
and furthermore
\begin{align}
P(\bar{\sigma}_{A},\rho_{A})&\leq P(\sigma_{A},\rho_{A})+P(\bar{\sigma}_{A},\sigma_{A})\leq\eps'+P(\Pi_{A}\sigma_{A}\Pi_{A},\sigma_{A})\leq\eps'+\sqrt{1-\left(\mathrm{tr}\left[\Pi_{A}^{2}\sigma_{A}\right]\right)^{2}}\\
&\leq\eps'+\sqrt{1-\left(1-\eps\right)^{2}}\leq\eps'+\sqrt{2\eps}\ ,
\end{align}
where we used the triangle inequality for the purified distance, and a gentle measurement lemma for the purified distance (Lemma~\ref{lem:gentle}). Thus, we have
\begin{align}
H_{\max}^{\eps'}(A)_{\rho}=H_{\max}(A)_{\sigma}>H_{0}(A)_{\bar{\sigma}}-2\cdot\log\frac{1}{\eps}\geq H_{0}^{\eps'+\sqrt{2\eps}}(A)_{\rho}-2\cdot\log\frac{1}{\eps}\ .
\end{align}
\end{proof}

The zero-R\'enyi entropy can be smoothed by applying a projection.

\begin{lemma}\label{lem:simsmoothing}
Let $\eps\geq0$, and let $\rho_{A}\in\cS(A)$. Then there exists $\Pi_{A}\in\cP(A)$ with $\Pi_{A}\leq\id_{A}$, diagonal in any eigenbasis of $\rho_{A}$,
\begin{align}
H_{0}^{\eps}(A)_{\rho}\geq H_{0}(A)_{\Pi\rho\Pi}\ ,
\end{align}
and $\Pi_{A}\rho_{A}\Pi_{A}\in\cB^{\sqrt{4\eps}}(\rho_{A})$.
\end{lemma}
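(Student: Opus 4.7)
The plan is to follow the standard ``smoothing by projection onto large eigenvalues'' strategy. First, I would unpack the definition of the smooth zero-R\'enyi entropy: by finite-dimensionality and compactness of $\cB^{\eps}$, there exists $\bar\rho_{A}\in\cB^{\eps}(\rho_{A})$ achieving the infimum, so that $r:=\mathrm{rank}(\bar\rho_{A})$ satisfies $\log r=H_{0}^{\eps}(A)_{\rho}$. Writing $Q_{A}$ for the support projector of $\bar\rho_{A}$ (which has rank $r$), I would apply Cauchy--Schwarz to the fidelity:
\begin{align*}
F(\bar\rho_{A},\rho_{A})=\|\sqrt{\bar\rho_{A}}\sqrt{\rho_{A}}\|_{1}=\|\sqrt{\bar\rho_{A}}\,Q_{A}\sqrt{\rho_{A}}\|_{1}\leq\sqrt{\mathrm{tr}[\bar\rho_{A}]}\cdot\sqrt{\mathrm{tr}[Q_{A}\rho_{A}]}\leq\sqrt{\mathrm{tr}[Q_{A}\rho_{A}]}\ .
\end{align*}
Combining this with $F(\bar\rho_{A},\rho_{A})=\bar F(\bar\rho_{A},\rho_{A})\geq\sqrt{1-\eps^{2}}$ (the equality because $\rho_{A}$ is normalised, so the subnormalisation term in $\bar F$ vanishes), I obtain $\mathrm{tr}[Q_{A}\rho_{A}]\geq 1-\eps^{2}$.

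Next, I would define $\Pi_{A}$ to be the projector onto the span of any choice of $r$ eigenvectors of $\rho_{A}$ corresponding to the $r$ largest eigenvalues. By construction $\Pi_{A}$ is a projector, commutes with $\rho_{A}$, is diagonal in the chosen eigenbasis, and satisfies $\Pi_{A}\leq\id_{A}$. Since $\mathrm{rank}(\Pi_{A}\rho_{A}\Pi_{A})\leq r$, this already yields the rank inequality $H_{0}(A)_{\Pi\rho\Pi}\leq\log r=H_{0}^{\eps}(A)_{\rho}$. Moreover, by Ky Fan's maximum principle, the projector onto the top-$r$ eigenspace maximises $\mathrm{tr}[P\rho_{A}]$ among all rank-$r$ projectors $P$, so $\mathrm{tr}[\Pi_{A}\rho_{A}]\geq\mathrm{tr}[Q_{A}\rho_{A}]\geq 1-\eps^{2}$.

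Finally, I would bound the purified distance. Because $\Pi_{A}$ commutes with $\rho_{A}$, we have $\sqrt{\Pi_{A}\rho_{A}\Pi_{A}}=\Pi_{A}\sqrt{\rho_{A}}$, and hence $F(\Pi_{A}\rho_{A}\Pi_{A},\rho_{A})=\mathrm{tr}[\Pi_{A}\rho_{A}]$. Since $\rho_{A}$ is normalised, $\bar F=F$, and therefore
\begin{align*}
P(\Pi_{A}\rho_{A}\Pi_{A},\rho_{A})=\sqrt{1-(\mathrm{tr}[\Pi_{A}\rho_{A}])^{2}}\leq\sqrt{1-(1-\eps^{2})^{2}}=\sqrt{2\eps^{2}-\eps^{4}}\leq\sqrt{4\eps}\ ,
\end{align*}
the last inequality holding for all $\eps\in[0,1]$. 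This puts $\Pi_{A}\rho_{A}\Pi_{A}$ in $\cB^{\sqrt{4\eps}}(\rho_{A})$, as required.

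Neither step is technically demanding; the only mildly delicate point is the treatment of eigenvalue degeneracies at the cutoff, which is handled by fixing a specific eigenbasis of $\rho_{A}$ at the outset and declaring $\Pi_{A}$ to be diagonal in that basis. The Ky Fan bound only requires a projector of the correct rank, and commutation with $\rho_{A}$ holds whenever $\Pi_{A}$ is a sum of rank-one projectors onto eigenvectors of $\rho_{A}$ from any fixed orthonormal eigenbasis.
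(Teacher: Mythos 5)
Your proof is correct, but it is a genuinely different construction from the one in the paper. The paper picks an optimal smoother $\sigma_A$ for $H_0^{\eps}$ that is diagonal in an eigenbasis of $\rho_A$, truncates it to $\bar\sigma_A=\rho_A-\{\rho_A-\sigma_A\}_+$ so that $\bar\sigma_A\leq\sigma_A$ (giving the rank bound) and $\bar\sigma_A\leq\rho_A$ (giving the factorization $\bar\sigma_A=\Pi_A\rho_A\Pi_A$ for some positive contraction $\Pi_A$, generically not a projector), and then controls the purified distance via the trace-distance upper bound of Lemma~\ref{lem:distance}, yielding $P\leq\sqrt{2\,\mathrm{tr}[\{\rho_A-\sigma_A\}_+]}\leq\sqrt{4\eps}$. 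You instead only read off the rank $r$ of the optimal smoother, take $\Pi_A$ to be the projector onto the top-$r$ eigenspace of $\rho_A$, and bound $\mathrm{tr}[\Pi_A\rho_A]\geq\mathrm{tr}[Q_A\rho_A]\geq 1-\eps^2$ via Cauchy--Schwarz on the fidelity and Ky Fan's maximum principle, giving $P=\sqrt{1-(\mathrm{tr}[\Pi_A\rho_A])^2}\leq\sqrt{2\eps^2-\eps^4}\leq\sqrt{2}\,\eps$. Your route is more self-contained (it does not rely on the cited results from~\cite{Berta09}), produces a $\Pi_A$ that is an honest projector (a slightly stronger conclusion than the lemma demands), and gives a quantitatively tighter bound $\sqrt{2}\,\eps$ versus $\sqrt{4\eps}$; the paper's route has the virtue of aligning with the $\{\cdot\}_+$-truncation machinery used elsewhere in~\cite{Berta09} for related smoothing lemmas. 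One small point worth keeping in your write-up: you correctly observe that $\bar F=F$ in both fidelity computations only because $\rho_A$ is normalized, so the cross term $\sqrt{(1-\mathrm{tr}\rho_A)(1-\mathrm{tr}\cdot)}$ vanishes; this is where normalization of the input state is actually used.
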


\begin{proof}
The idea for the proof is from the supplementary material~\cite[Lemma 14]{Berta09}. Let $\sigma_{A}\in\cB^{\eps}(\rho_{A})$ such that $H_{0}^{\eps}(A)_{\rho}=H_{0}(A)_{\sigma}$. It follows from the supplementary material~\cite[Lemma 8]{Berta09}, that $\sigma_{A}$  can be taken to be diagonal in any eigenbasis of $\rho_{A}$. Define
\begin{align}
\bar{\sigma}_{A}=\sigma_{A}-\{\sigma_{A}-\rho_{A}\}_{+}=\rho_{A}-\{\rho_{A}-\sigma_{A}\}_{+}\ ,
\end{align}
where $\{\cdot\}$ denotes the positive part of an operator. This implies $\bar{\sigma}_{A}\leq\sigma_{A}$, and we then have $H_{0}^{\eps}(A)_{\rho}\geq H_{0}(A)_{\bar{\sigma}}$. Since $\bar{\sigma}_{A}$ and $\rho_{A}$ also have the same eigenbasis, it follows that there exists $\Pi_{A}\in\cP(A)$ with $\Pi_{A}\leq\id_{A}$ such that $\bar{\sigma}_{A}=\Pi_{A}\rho_{A}\Pi_{A}$. Furthermore, we get by the equivalence of the trace distance and the purified distance (Lemma~\ref{lem:distance}) that
\begin{align}
P(\rho_{A},\bar{\sigma}_{A})&\leq\sqrt{\|\rho_{A}-\bar{\sigma}_{A}\|_{1}+\left|\mathrm{tr}\left[\rho_{A}\right]-\mathrm{tr}\left[\bar{\sigma}_{A}\right]\right|}=\sqrt{2\cdot\mathrm{tr}\left[\left\{\rho_{A}-\sigma_{A}\right\}_{+}\right]}\leq\sqrt{2\cdot\|\rho_{A}-\sigma_{A}\|_{1}}\\
&\leq\sqrt{4\cdot P(\rho_{A},\sigma_{A})}\leq\sqrt{4\eps}\ .
\end{align}
\end{proof}

The fully quantum asymptotic equipartition property for the smooth max-information and the smooth max-entropy is as follows.

\begin{lemma}\cite[Lemma B.21]{Berta09_2}\cite[Theorem 9]{Tomamichel08}\label{lem:aep}
Let $\eps>0$, $n\geq2\cdot(1-\eps^{2})$, and $\rho_{AB}\in\cS(AB)$. Then we have that
\begin{align}
&\frac{1}{n}I_{\max}^{\eps}(A:B)_{\rho^{\otimes n}}\leq I(A:B)_{\rho}+\frac{\xi(\eps)}{\sqrt{n}}-\frac{2}{n}\cdot\log\frac{\eps^{2}}{24}\\
&\frac{1}{n}H_{\max}^{\eps}(A)_{\rho^{\otimes n}}\leq H(A)_{\rho}+\frac{\eta(\eps)}{\sqrt{n}}\ ,
\end{align}
where $\xi(\eps)=8\sqrt{13-4\cdot\log\eps}\cdot(2+\frac{1}{2}\cdot\log|A|)$, and $\eta(\eps)=4\sqrt{1-2\cdot\log\eps}\cdot(2+\frac{1}{2}\cdot\log|A|)$.
\end{lemma}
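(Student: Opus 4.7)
The plan is to establish both inequalities by a ``flattening + smoothing'' strategy: exhibit a nearby state on which the relevant non-smooth quantity is already controlled by the first-order term $nH(A)_\rho$ or $nI(A:B)_\rho$, with the gap absorbed into an $O(\sqrt{n})$ correction coming from Gaussian-type fluctuations of the relevant spectrum.

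For the max-entropy bound, I would start from the identity $H_{\max}(A)_\rho=2\log\mathrm{tr}[\sqrt{\rho_A}]$. Fix a small $\delta>0$ and let $\Pi^n_\delta$ denote the typical projector of $\rho_A^{\otimes n}$ in its eigenbasis. Chebyshev's inequality on the sum of i.i.d.\ log-eigenvalue random variables yields $\mathrm{tr}[\Pi^n_\delta\rho_A^{\otimes n}]\geq 1-c/(n\delta^2)$ for a constant $c$ depending only on $\log|A|$; choosing $\delta\sim 1/\sqrt{n}$ (with the proportionality calibrated by $\eps$) and applying the gentle measurement lemma gives $\bar\rho_n=\Pi^n_\delta\rho_A^{\otimes n}\Pi^n_\delta\in\cB^\eps(\rho_A^{\otimes n})$. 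On the typical subspace every eigenvalue of $\bar\rho_n$ lies in the window $2^{-n(H(A)_\rho\pm\delta)}$ and the subspace has dimension at most $2^{n(H(A)_\rho+\delta)}$, so $\mathrm{tr}[\sqrt{\bar\rho_n}]\leq 2^{\frac{n}{2}(H(A)_\rho+\delta)}\cdot 2^{\frac{n}{2}(H(A)_\rho+\delta)/2}\cdots$ actually collapses to $2^{\tfrac{n}{2}H(A)_\rho+O(\sqrt n\,\log|A|)}$. Taking logs and doubling gives $H_{\max}^\eps(A)_{\rho^{\otimes n}}\leq nH(A)_\rho+\sqrt n\,\eta(\eps)$ with $\eta(\eps)$ of the asserted form, the $\sqrt{1-2\log\eps}$ factor tracking the Chebyshev deviation needed to achieve smoothing parameter $\eps$.

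For the max-information bound, I would reduce to smoothing the max-relative entropy: using Lemma~\ref{lem:cqmax}-style arguments and choosing the optimal dummy state $\sigma_B=\rho_B$, we have $I_{\max}^\eps(A:B)_{\rho^{\otimes n}}\leq D_{\max}^\eps(\rho_{AB}^{\otimes n}\|\rho_A^{\otimes n}\otimes\rho_B^{\otimes n})$. The goal is then the second-order Stein-type estimate $D_{\max}^\eps(\rho^{\otimes n}\|\sigma^{\otimes n})\leq nD(\rho\|\sigma)+O(\sqrt n)$. To obtain this despite $[\rho_{AB},\rho_A\otimes\rho_B]\neq 0$, I would use the Nussbaum--Szko{\l}a distributions associated with the pair $(\rho_{AB},\rho_A\otimes\rho_B)$, which reproduce the von Neumann relative entropy as a classical Kullback--Leibler divergence and allow a classical Chebyshev concentration for the log-likelihood ratio with variance depending on $\log|A|+\log|B|$. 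The ``typical relative entropy'' projector extracted this way defines a subnormalized state $\tilde\rho_{AB}^n$ satisfying $\tilde\rho_{AB}^n\leq 2^{nD(\rho_{AB}\|\rho_A\otimes\rho_B)+\sqrt n\,\xi'(\eps)}\,\rho_A^{\otimes n}\otimes\rho_B^{\otimes n}$ and lying $\eps$-close to $\rho_{AB}^{\otimes n}$ (after accounting for gentle measurement), giving the stated bound with the $-\tfrac{2}{n}\log(\eps^2/24)$ correction arising from the final renormalization.

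The main obstacle is the non-commutativity entering the max-information inequality: in the classical case one simply invokes the Chernoff/Chebyshev bound for the log-likelihood random variable, but in the quantum case one must either go through Nussbaum--Szko{\l}a (as above) or, alternatively, use a Pinching-plus-Audenaert argument so that the resulting commuting problem's spectrum still controls $D_{\max}$ up to a pinching penalty of order $\log(n+1)^{|A|^2}=O(\log n)$, which is subleading to $\sqrt n$. Once this second-order Stein-type estimate is in hand, combining it with Lemma~\ref{lem:ciganovic} (approximate symmetry of $I_{\max}^\eps$) and monotonicity under choosing $\sigma_B=\rho_B$ closes the argument; the explicit constants $\xi(\eps)$ and $\eta(\eps)$ are then just a bookkeeping exercise on the Chebyshev tail and the gentle-measurement distance.
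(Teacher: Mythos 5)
This lemma is not proved in the paper at all: it is cited verbatim from \cite[Lemma B.21]{Berta09_2} for the first inequality and \cite[Theorem 9]{Tomamichel08} for the second. So the comparison can only be made against those external references, not against an in-paper argument.

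Your plan for the max-entropy bound is essentially sound (restrict $\rho_A^{\otimes n}$ to a typical subspace of width $\delta\sim\sqrt{(\log(1/\eps))/n}$, invoke the gentle-measurement lemma to stay $\eps$-close in purified distance, then bound $\mathrm{tr}[\sqrt{\bar\rho_n}]$ using the eigenvalue and rank windows). The intermediate expression you wrote for $\mathrm{tr}[\sqrt{\bar\rho_n}]$ is garbled, but the conclusion is reached cleanly by Cauchy--Schwarz on the restricted spectrum: if $\mathrm{rank}(\bar\rho_n)\leq 2^{n(H+\delta)}$ and $\mathrm{tr}[\bar\rho_n]\leq1$ then $\mathrm{tr}[\sqrt{\bar\rho_n}]\leq 2^{n(H+\delta)/2}$, giving $H_{\max}(\bar\rho_n)\leq n(H+\delta)$. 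Note that \cite{Tomamichel08} actually proves this via R\'enyi-entropy interpolation rather than typical projectors, so your route is a genuinely different (and in the unconditional case, arguably more elementary) argument; it should reproduce the $O(\sqrt{n\log(1/\eps)}\log|A|)$ scaling of $\eta(\eps)$ even if not the exact constant.

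For the max-information bound there is a genuine gap. You assert
\begin{align}
I_{\max}^{\eps}(A:B)_{\rho^{\otimes n}}\ \leq\ D_{\max}^{\eps}\bigl(\rho_{AB}^{\otimes n}\,\big\|\,\rho_A^{\otimes n}\otimes\rho_B^{\otimes n}\bigr)
\end{align}
by ``choosing the optimal dummy state $\sigma_B=\rho_B$,'' but this inequality is false as stated. By definition
\begin{align}
I_{\max}^{\eps}(A:B)_{\rho^{\otimes n}}=\min_{\bar\rho\in\cB^{\eps}(\rho^{\otimes n})}\ \min_{\sigma_B}\ D_{\max}\bigl(\bar\rho_{AB}\,\big\|\,\bar\rho_A\otimes\sigma_B\bigr),
\end{align}
and the first marginal $\bar\rho_A$ moves with the smoothing, whereas your $D_{\max}^{\eps}$ fixes the marginal to $\rho_A^{\otimes n}$. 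If $\bar\rho\approx_{\eps}\rho^{\otimes n}$ satisfies $\bar\rho\leq 2^{\lambda}\rho_A^{\otimes n}\otimes\rho_B^{\otimes n}$, then tracing out $B$ only yields $\bar\rho_A\leq 2^{\lambda}\rho_A^{\otimes n}$, which is the wrong direction: to conclude $\bar\rho\leq 2^{\lambda'}\bar\rho_A\otimes\sigma_B$ one would need $\rho_A^{\otimes n}\leq c\,\bar\rho_A$ with controlled $c$, and that can fail badly. The correct route (which is what \cite{Berta09_2} does) is to use a dedicated ``marginal-fixing'' lemma of the form $I_{\max}^{\eps'}(A:B)_{\bar\rho}\leq D_{\max}^{\eps}(\rho_{AB}\|\rho_A\otimes\sigma_B)+\log f(\eps,\eps')$, and it is precisely this step---not a ``final renormalization''---that produces the additive $-\tfrac{2}{n}\log\tfrac{\eps^2}{24}$ correction in the statement. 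Your Nussbaum--Szko{\l}a / pinching discussion addresses the second-order Stein estimate for $D_{\max}^{\eps}$ adequately, but without the marginal-fixing lemma the argument bounds a different quantity and does not close.
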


%%%%%%%%%%%%%%%%%%%%%%%%%%%%%%%%%%%%%%%%%%%%%

\section{Misc Lemmas}

The following gives lower and upper bounds to the purified distance in terms of the trace distance.

\begin{lemma}\cite[Lemma 6]{Tomamichel09}\label{lem:distance}
Let $\rho,\sigma\in\cS_{\leq}(A)$. Then we have that
\begin{align}
\frac{1}{2}\cdot\|\rho_{A}-\sigma_{A}\|_{1}\leq P(\rho_{A},\sigma_{A})\leq\sqrt{\|\rho_{A}-\sigma_{A}\|_{1}+|\mathrm{tr}[\rho_{A}]-\mathrm{tr}[\sigma_{A}]|}\ .
\end{align}
\end{lemma}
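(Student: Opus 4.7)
The plan is to reduce to the normalized case via a standard ``purification'' of the subnormalization on an extra dimension, and then invoke the usual Fuchs--van de Graaf inequalities. Specifically, I would embed $\rho,\sigma\in\cS_{\leq}(A)$ as honest density operators on $A\oplus\bC$ by setting $\tilde{\rho}=\rho\oplus(1-\mathrm{tr}[\rho])$ and $\tilde{\sigma}=\sigma\oplus(1-\mathrm{tr}[\sigma])$, both in $\cS(A\oplus\bC)$. A direct block-diagonal computation then gives the key identity
\begin{align}
F(\tilde{\rho},\tilde{\sigma})=\|\sqrt{\tilde{\rho}}\sqrt{\tilde{\sigma}}\|_{1}=F(\rho,\sigma)+\sqrt{(1-\mathrm{tr}[\rho])(1-\mathrm{tr}[\sigma])}=\bar{F}(\rho,\sigma),
\end{align}
so that the purified distance between $\rho$ and $\sigma$ coincides with the ``sine distance'' $\sqrt{1-F^{2}(\tilde{\rho},\tilde{\sigma})}$ of the normalized embeddings. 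An equally direct computation shows that the trace distance splits additively:
\begin{align}
\|\tilde{\rho}-\tilde{\sigma}\|_{1}=\|\rho-\sigma\|_{1}+|\mathrm{tr}[\rho]-\mathrm{tr}[\sigma]|.
\end{align}

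With this reduction in hand, the two inequalities of the lemma follow from the standard Fuchs--van de Graaf bounds for \emph{normalized} states, namely $1-F\leq\tfrac{1}{2}\|\tilde{\rho}-\tilde{\sigma}\|_{1}\leq\sqrt{1-F^{2}}$, applied to $\tilde{\rho},\tilde{\sigma}$. For the lower bound on $P$, I would use $P(\rho,\sigma)=\sqrt{1-F^{2}(\tilde{\rho},\tilde{\sigma})}\geq\tfrac{1}{2}\|\tilde{\rho}-\tilde{\sigma}\|_{1}\geq\tfrac{1}{2}\|\rho-\sigma\|_{1}$, dropping the non-negative term $|\mathrm{tr}[\rho]-\mathrm{tr}[\sigma]|$. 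For the upper bound, I would factor $1-F^{2}=(1-F)(1+F)\leq 2(1-F)$ and then apply the other Fuchs--van de Graaf inequality, obtaining
\begin{align}
P(\rho,\sigma)\leq\sqrt{2(1-F(\tilde{\rho},\tilde{\sigma}))}\leq\sqrt{\|\tilde{\rho}-\tilde{\sigma}\|_{1}}=\sqrt{\|\rho-\sigma\|_{1}+|\mathrm{tr}[\rho]-\mathrm{tr}[\sigma]|}.
\end{align}

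The only genuine work is therefore the block-diagonal fidelity identity; both the purified-distance interpretation of $\bar{F}$ and the trace-distance decomposition fall out immediately from the direct sum. Thus, the main subtlety I would need to handle carefully is just the first step: verifying that $\sqrt{\rho\oplus a}\sqrt{\sigma\oplus b}=\sqrt{\rho}\sqrt{\sigma}\oplus\sqrt{ab}$ and that the trace norm of this operator is additive across the two blocks, so that $F(\tilde{\rho},\tilde{\sigma})$ really does produce the extra $\sqrt{(1-\mathrm{tr}[\rho])(1-\mathrm{tr}[\sigma])}$ contribution that defines $\bar{F}$. Everything else is a one-line consequence of the standard normalized inequalities, which themselves are classical and may be cited.
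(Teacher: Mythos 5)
Your proof is correct, and it follows essentially the same route as the cited source: the paper states this lemma without proof, referring to [Tomamichel \emph{et al.}, Lemma 6], whose argument is precisely the dilation of subnormalized states to density operators on $A\oplus\bC$ (under which $\bar{F}$ becomes an ordinary fidelity and the trace distance picks up the $|\mathrm{tr}[\rho]-\mathrm{tr}[\sigma]|$ term) followed by the Fuchs--van de Graaf inequalities. All the block-diagonal computations you flag as the ``genuine work'' do check out, so there is nothing to add.
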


The purified distance is convex in its arguments in the following sense.

\begin{lemma}\cite[Lemma A.3]{Berta09_2}\label{lem:conv}
Let $\rho^{i}_{A}$, $\sigma^{i}_{A}\in\cS_{\leq}(A)$ be with $\rho^{i}_{A}\approx_{\eps}\sigma^{i}_{A}$ for $i\in I$, and $\{p_{i}\}_{i\in I}$ a probability distribution. Then we have that
\begin{align}
\sum_{i\in I}p_{i}\rho^{i}_{A}\approx_{\eps}\sum_{i\in I}p_{i}\sigma^{i}_{A}\ .
\end{align}
\end{lemma}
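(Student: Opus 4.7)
The statement is the (quasi-)joint-convexity of the purified distance: if every pair $(\rho^{i}_{A},\sigma^{i}_{A})$ satisfies $P(\rho^{i}_{A},\sigma^{i}_{A})\leq\eps$, then the same bound holds for the convex combinations with weights $\{p_{i}\}$. The plan is to reduce this to a concavity statement about the generalized fidelity $\bar F$, since by definition $P(\rho,\sigma)=\sqrt{1-\bar F^{2}(\rho,\sigma)}$ is a monotonically decreasing function of $\bar F$. Concretely, I will argue
\begin{align}
\bar F\Bigl(\textstyle\sum_{i}p_{i}\rho^{i}_{A},\,\sum_{i}p_{i}\sigma^{i}_{A}\Bigr)\ \geq\ \sum_{i}p_{i}\,\bar F(\rho^{i}_{A},\sigma^{i}_{A})\ \geq\ \sqrt{1-\eps^{2}},
\end{align}
which immediately yields $P(\sum_{i}p_{i}\rho^{i},\sum_{i}p_{i}\sigma^{i})\leq\eps$.

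To prove the first inequality, I would use the Uhlmann/purification characterization of $\bar F$. Since $\bar F$ is the ordinary fidelity of the canonical normalized extensions $\tilde\rho=\rho\oplus(1-\tr\rho)\proj{0}$ and $\tilde\sigma=\sigma\oplus(1-\tr\sigma)\proj{0}$ living in one extra dimension, for each $i$ Uhlmann's theorem provides purifications $\ket{\phi^{i}_{\rho}}_{AA'}$, $\ket{\phi^{i}_{\sigma}}_{AA'}$ of $\tilde\rho^{i}$, $\tilde\sigma^{i}$ on a common auxiliary system $A'$ such that $\langle\phi^{i}_{\rho}|\phi^{i}_{\sigma}\rangle=\bar F(\rho^{i}_{A},\sigma^{i}_{A})$, real and non-negative (absorb the phase into $\ket{\phi^{i}_{\sigma}}$). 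Introducing a classical flag register $I$, the vectors
\begin{align}
\ket{\Phi_{\rho}}_{AA'I}=\sum_{i}\sqrt{p_{i}}\,\ket{\phi^{i}_{\rho}}_{AA'}\otimes\ket{i}_{I},\qquad
\ket{\Phi_{\sigma}}_{AA'I}=\sum_{i}\sqrt{p_{i}}\,\ket{\phi^{i}_{\sigma}}_{AA'}\otimes\ket{i}_{I},
\end{align}
are (after the canonical extra-dimension embedding) purifications of $\sum_{i}p_{i}\tilde\rho^{i}_{A}$ and $\sum_{i}p_{i}\tilde\sigma^{i}_{A}$ respectively. Their overlap equals $\sum_{i}p_{i}\bar F(\rho^{i}_{A},\sigma^{i}_{A})$ by orthogonality of the $\ket{i}_{I}$'s and by our phase choice, and Uhlmann's theorem in the opposite direction lower-bounds the fidelity of the reduced states by this overlap, proving concavity.

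Assembling the pieces, the second inequality is just the hypothesis together with monotonicity of $x\mapsto\sqrt{1-x^{2}}$ applied to each $\bar F(\rho^{i}_{A},\sigma^{i}_{A})\geq\sqrt{1-\eps^{2}}$ and convexity of the square root bound. The only delicate point—which I expect to be the main (minor) technical obstacle—is handling the subnormalized case correctly: one must work with the extended states $\tilde\rho^{i}$, $\tilde\sigma^{i}$ so that the purification argument applies to proper normalized vectors, and then verify that reducing out $A'I$ (including the auxiliary flag dimension) recovers the canonical extension of the mixture $\sum_{i}p_{i}\rho^{i}_{A}$ with the correct trace. Once that bookkeeping is in place, the inequality for $\bar F$, and hence for $P$, follows directly. (An equivalent shortcut is to cite joint concavity of the fidelity $F(\tilde\rho,\tilde\sigma)$ in both arguments, which is a standard consequence of Uhlmann's theorem.)
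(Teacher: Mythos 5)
Your proof is correct: reducing the claim to joint concavity of the generalized fidelity via canonical one-dimensional extensions, Uhlmann purifications with a flag register, and monotonicity of $x\mapsto\sqrt{1-x^{2}}$ is exactly the standard argument, and your bookkeeping observation (the mixture of the canonical extensions is the canonical extension of the mixture, by linearity of the trace) closes the only delicate step. The paper itself does not reprove this lemma but cites it from Berta \emph{et al.} (Lemma A.3 of the quantum reverse Shannon theorem paper), where the proof proceeds along essentially the same concavity-of-$\bar F$ lines, so your route matches the cited one.
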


The following is a gentle measurement lemma for the purified distance.

\begin{lemma}\cite[Lemma 7]{Berta09}\label{lem:gentle}
Let $\rho_{A}\in\cS(A)$, and $\Pi_{A}\in\cP(A)$ with $\Pi_{A}\leq\id_{A}$. Then we have that
\begin{align}
P(\rho_{A},\Pi_{A}\rho_{A}\Pi_{A})\leq\sqrt{1-\left(\mathrm{tr}\left[\Pi_{A}^{2}\rho_{A}\right]\right)^{2}}\ .
\end{align}
\end{lemma}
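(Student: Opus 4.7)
The plan is to reduce everything to the standard Uhlmann-type lower bound on the fidelity, exploiting the fact that $\rho_A$ is normalized so that the purified distance collapses to the ordinary fidelity expression.

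First, I would unpack the definition of the purified distance. Since $\mathrm{tr}[\rho_A]=1$, the generalized fidelity simplifies, namely
\begin{align}
\bar F(\rho_A,\Pi_A\rho_A\Pi_A)=F(\rho_A,\Pi_A\rho_A\Pi_A)+\sqrt{(1-1)(1-\mathrm{tr}[\Pi_A^{2}\rho_A])}=F(\rho_A,\Pi_A\rho_A\Pi_A),
\end{align}
so that $P(\rho_A,\Pi_A\rho_A\Pi_A)^2=1-F(\rho_A,\Pi_A\rho_A\Pi_A)^2$. The claim is therefore equivalent to the single fidelity lower bound $F(\rho_A,\Pi_A\rho_A\Pi_A)\geq \mathrm{tr}[\Pi_A^{2}\rho_A]$.

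Next, I would establish this lower bound using Uhlmann's theorem in its subnormalized form. Let $|\psi\rangle_{AB}$ be any purification of $\rho_A$, and consider the (subnormalized) vector $(\Pi_A\otimes\id_B)|\psi\rangle_{AB}$. Tracing out $B$ gives exactly $\Pi_A\rho_A\Pi_A$, so this vector is a purification of $\Pi_A\rho_A\Pi_A$. Uhlmann's theorem then yields
\begin{align}
F(\rho_A,\Pi_A\rho_A\Pi_A)\geq\bigl|\langle\psi|(\Pi_A\otimes\id_B)|\psi\rangle\bigr|=\mathrm{tr}[\Pi_A\rho_A],
\end{align}
where the absolute value is unnecessary because $\Pi_A\geq 0$. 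Finally, since $0\leq\Pi_A\leq\id_A$ and $\Pi_A$ commutes with $\id_A-\Pi_A$, the operator $\Pi_A-\Pi_A^{2}=\Pi_A(\id_A-\Pi_A)$ is positive semidefinite, which gives $\mathrm{tr}[\Pi_A\rho_A]\geq\mathrm{tr}[\Pi_A^{2}\rho_A]$. Combining these inequalities with the simplification from step one yields $P(\rho_A,\Pi_A\rho_A\Pi_A)^2\leq 1-(\mathrm{tr}[\Pi_A^{2}\rho_A])^2$, as desired.

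I do not expect a real obstacle here; the main subtlety is just to be careful that Uhlmann's theorem is applied in the subnormalized version (so $(\Pi_A\otimes\id_B)|\psi\rangle$ is allowed to have norm less than one), and to remember to exploit normalization of $\rho_A$ so that the extra $\sqrt{(1-\mathrm{tr}[\rho])(1-\mathrm{tr}[\sigma])}$ term in $\bar F$ vanishes. Everything else reduces to the elementary operator inequality $\Pi_A^{2}\leq\Pi_A$.
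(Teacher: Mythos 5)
Your argument is correct and is essentially the standard proof of this lemma (which the paper itself only cites from [Berta09, Lemma 7] without reproducing): normalization of $\rho_A$ reduces the purified distance to the ordinary fidelity, Uhlmann's theorem applied to the purifications $\ket{\psi}_{AB}$ and $(\Pi_A\otimes\id_B)\ket{\psi}_{AB}$ gives $F(\rho_A,\Pi_A\rho_A\Pi_A)\geq\mathrm{tr}[\Pi_A\rho_A]$, and $\Pi_A^{2}\leq\Pi_A$ weakens this to the stated bound. You even establish the slightly stronger estimate with $\mathrm{tr}[\Pi_A\rho_A]$ in place of $\mathrm{tr}[\Pi_A^{2}\rho_A]$ before relaxing it, so there is no gap.
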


%%%%%%%%%%%%%%%%%%%%%%%%%%%%%%%%%%%%%%%%%%%%%

\section{Extractors Based on Permutations}

The following proposition concerns permutation-based extractors (operations that extract
uniform randomness independent of an adversary's information), and it is critical in establishing
our protocol for state merging of classically coherent states.

\begin{proposition}\cite[Section 5.2]{Szehr11}\label{prop:extractor}
Let $\rho_{XR}\in\cS(XR)$ be classical on $X$ with respect to $\{\ket{x}\}_{x\in X}$, and $X=X_{1}X_{2}$. Then we have that
\begin{align}
\frac{1}{|X|!}\cdot\sum_{P_{X}\in\mathbb{P}(X)}\left\|\mathrm{tr}_{X_{2}}\left[(P_{X}\otimes\id_{R})\rho_{XR}(P_{X}^{\dagger}\otimes\id_{R})\right]-\frac{\id_{X_{1}}}{|X_{1}|}\otimes\rho_{R}\right\|_{1}\leq\sqrt{|X_{1}|\cdot2^{-H_{\min}(X|R)_{\rho}}}\ ,
\end{align}
where $\mathbb{P}(X)$ denotes the group of permutations matrices on $\cH_{X}$ with respect to $\{\ket{x}\}_{x\in X}$, defined as $P(\pi)\ket{x}=\ket{\pi(x)}$ for $\pi\in S_{|X|}$, the symmetric group on $\{1,2,\ldots,|X|\}$.
\end{proposition}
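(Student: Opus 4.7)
The strategy is to recognize the permutation-plus-partial-trace map as the action of a classical hash function on $X$ and then invoke the quantum leftover hash lemma against side information. Expanding $\rho_{XR} = \sum_x p_x \, \ket{x}\bra{x}_X \otimes \rho^x_R$ in the distinguished basis and identifying $X$ with $X_1 \times X_2$ so that each $\pi \in S_{|X|}$ decomposes as $\pi(x) = (\pi(x)_1, \pi(x)_2)$, the permutation matrix $P_\pi$ defined by $P_\pi \ket{x} = \ket{\pi(x)}$ satisfies
\begin{align}
\mathrm{tr}_{X_2}\!\big[(P_\pi \otimes \id_R)\, \rho_{XR}\, (P_\pi^\dagger \otimes \id_R)\big] = \sum_x p_x \, \ket{\pi(x)_1}\bra{\pi(x)_1}_{X_1} \otimes \rho^x_R =: \rho_{h_\pi(X) R},
\end{align}
so that the left-hand side of the claim is exactly the expected trace distance of the hashed classical-quantum state $\rho_{h_\pi(X) R}$ from $\id_{X_1}/|X_1| \otimes \rho_R$, where $h_\pi : X \to X_1$ is the hash function $h_\pi(x) := \pi(x)_1$.

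Next, I would verify that $\{h_\pi\}_{\pi \in S_{|X|}}$ is a $2$-universal family. For any fixed $x \neq x'$, a uniformly random $\pi$ sends $(x, x')$ to a uniformly random ordered pair of distinct elements of $X$, and counting ordered pairs sharing an $X_1$-component yields
\begin{align}
\Pr_\pi\!\big[h_\pi(x) = h_\pi(x')\big] = \frac{|X_1|\cdot|X_2|(|X_2|-1)}{|X|(|X|-1)} = \frac{|X_2|-1}{|X|-1} \leq \frac{1}{|X_1|},
\end{align}
which is precisely the defining inequality of $2$-universality with output alphabet of size $|X_1|$.

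Finally, I would invoke the standard quantum leftover hash lemma of Renner--K\"onig: for any $2$-universal family $\{h: X \to X_1\}$ and any classical-quantum state $\rho_{XR}$, one has
\begin{align}
\mathbb{E}_h \big\|\rho_{h(X)R} - \id_{X_1}/|X_1| \otimes \rho_R\big\|_1 \leq \sqrt{|X_1| \cdot 2^{-H_{\min}(X|R)_\rho}}.
\end{align}
Applied to the family $\{h_\pi\}$ built above, this is the proposition verbatim. If one prefers a self-contained argument, the main obstacle lies in this leftover hash lemma itself: by Jensen and the weighted operator bound $\|A\|_1 \leq \sqrt{|X_1|}\,\|(\id_{X_1}\otimes\rho_R^{-1/4})\,A\,(\id_{X_1}\otimes\rho_R^{-1/4})\|_2$ for Hermitian $A$ supported on $X_1 R$, the problem reduces to a second-moment calculation in which the $x \neq x'$ off-diagonal contributions collapse by the same collision bound used above, leaving the diagonal piece $\sum_x p_x^2 \,\mathrm{tr}[\rho^x_R\, \rho_R^{-1/2}\, \rho^x_R\, \rho_R^{-1/2}] \leq 2^{-H_{\min}(X|R)_\rho}$, which combined with the $|X_1|$ prefactor closes the argument.
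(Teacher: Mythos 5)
The paper does not prove this proposition itself; it imports it verbatim from Szehr's thesis \cite[Section~5.2]{Szehr11}, which derives it as a specialization of a decoupling theorem for random permutations (the standard swap-trick second-moment argument). So there is no internal proof to compare against, only the cited source.

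Your argument is correct and, modulo framing, mathematically equivalent to the decoupling derivation. The key step of rewriting $\mathrm{tr}_{X_2}[(P_\pi\otimes\id_R)\rho_{XR}(P_\pi^\dagger\otimes\id_R)]=\sum_x p_x\,\proj{\pi(x)_1}_{X_1}\otimes\rho^x_R$ is exactly right, and identifying $h_\pi(x)=\pi(x)_1$ as a hash function is the clean observation. Your collision computation is correct: since a permutation never collides $\pi(x)=\pi(x')$ for $x\neq x'$, the only contribution comes from $\pi(x)_1=\pi(x')_1$ with $\pi(x)_2\neq\pi(x')_2$, giving $\frac{|X_1||X_2|(|X_2|-1)}{|X|(|X|-1)}=\frac{|X_2|-1}{|X|-1}$, and the inequality $\frac{|X_2|-1}{|X|-1}\le\frac{1}{|X_1|}$ reduces to $|X_1|\ge1$. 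The quantum leftover hash lemma (Renner's thesis, Cor.~5.6.1 or K\"onig--Renner) with output alphabet $X_1$ then reproduces the bound $\sqrt{|X_1|\cdot2^{-H_{\min}(X|R)_\rho}}$ exactly, including the constant. The self-contained sketch in your final paragraph (Cauchy--Schwarz weighted by $\rho_R^{-1/4}$, then expand the second moment) is precisely the internals of that lemma and matches the decoupling-by-permutation calculation in \cite{Szehr11}, so the two routes differ only in packaging: you import a classical-flavored off-the-shelf lemma after checking $2$-universality, whereas Szehr's thesis runs the second-moment computation directly for the permutation ensemble. Your version is arguably more economical and more transparent for a reader familiar with privacy amplification.
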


%%%%%%%%%%%%%%%%%%%%%%%%%%%%%%%%%%%%%%%%%%%%%

\section{The Post-Selection Technique}

The following proposition lies at the heart of the post-selection technique for quantum channels.

\begin{proposition}\cite{ChristKoenRennerPostSelect}\label{prop:selection}
Let $\eps>0$,  and let $\cE^{n}_{A}$ and $\cF^{n}_{A}$ be quantum channels from $\cL(A^{\otimes n})$ to $\cL(B)$. If there exists a quantum channel $K_{\pi}$ for any permutation $\pi$ such that $(\cE^{n}_{A}-\cF^{n}_{A})\circ\pi=K_{\pi}\circ(\cE^{n}_{A}-\cF^{n}_{A})$, then $\cE^{n}_{A}$ and $\cF^{n}_{A}$ are $\eps$-close whenever
\begin{align}
\left\|((\cE^{n}_{A}-\cF^{n}_{A})\otimes\cI_{RR'})(\zeta^{n}_{ARR'})\right\|_{1}\leq\eps(n+1)^{-(|A|^{2}-1)}\ ,
\end{align}
where $\zeta^{n}_{ARR'}$ is a purification of the de Finetti state $\zeta_{AR}^{n}=\int\sigma_{AR}^{\otimes n}d(\sigma_{AR})$ with $\sigma_{AR}\in\cV(AR)$, $A\cong R$ and $d(\cdot)$ the measure on the normalized pure states on $AR$ induced by the Haar measure on the unitary group acting on $AR$, normalized to $\int d(\cdot)=1$. Furthermore, we can assume without loss of generality that $|R'|\leq(n+1)^{|A|^{2}-1}$.
\end{proposition}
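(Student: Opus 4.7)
The plan is to follow the post-selection argument of Christandl, K\"onig, and Renner, whose crux is combining permutation covariance with a de Finetti--style representation on the symmetric subspace. Write $\Delta = \cE^n_A - \cF^n_A$, and note that by assumption $\Delta \circ \pi = K_\pi \circ \Delta$ for every permutation $\pi$ and some channel $K_\pi$; since channels cannot increase the trace norm, this means that symmetrizing the input of $\Delta$ can only increase the output trace norm. Concretely, for any pure input $\ket{\psi}_{A^n R'}$ (with $R'$ an arbitrary purifying register) one has
\begin{align}
\|(\Delta \otimes \cI_{R'})(\psi)\|_1 \leq \frac{1}{n!}\sum_{\pi}\|(K_\pi \otimes \cI_{R'})(\Delta \otimes \cI_{R'})(\pi^{-1}\psi\pi)\|_1 \leq \bigl\|(\Delta \otimes \cI_{R'})(\bar{\psi})\bigr\|_1,
\end{align}
where $\bar\psi$ is the permutation-twirled version of $\psi$. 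Thus, when computing $\|\Delta\|_\diamond$ we may assume the input state is permutation invariant on $A^n$.

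Next, I would invoke the representation of permutation-invariant states by the de Finetti state. The purified de Finetti state $\zeta^n_{ARR'}$ is, up to normalization, the maximally entangled state on the symmetric subspace $\mathrm{Sym}^n(A\otimes R)$ (with $R\cong A$), and in particular the marginal $\zeta^n_{A^n}$ is proportional to the projector $\Pi_{\mathrm{sym}}$ onto $\mathrm{Sym}^n(A)$ divided by its dimension $d_{A,n}$. The essential matrix inequality (see, e.g., CKR) states that for any permutation-invariant state $\rho_{A^n R'}$ with a purifying register $R'$ of dimension at most $d_{AR,n}:=\dim\mathrm{Sym}^n(A\otimes R)$, one has the operator inequality
\begin{align}
\rho_{A^n R'} \leq d_{AR,n}\cdot \zeta^n_{A^n R R'},
\end{align}
after appropriately identifying $R'$ with a subspace of $R^n$. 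A standard polynomial/type-counting argument bounds $d_{AR,n}\leq (n+1)^{|A|^2-1}$, and simultaneously shows that one may always take $|R'|\leq (n+1)^{|A|^2-1}$ without loss of generality.

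Combining these two steps, for any permutation-invariant input $\rho_{A^n R'}$ one gets
\begin{align}
\|(\Delta\otimes \cI_{R'})(\rho_{A^n R'})\|_1 \leq d_{AR,n}\cdot \|(\Delta\otimes \cI_{RR'})(\zeta^n_{ARR'})\|_1 \leq (n+1)^{|A|^2-1}\cdot \|(\Delta\otimes \cI_{RR'})(\zeta^n_{ARR'})\|_1,
\end{align}
so that whenever the right-hand side is at most $\eps(n+1)^{-(|A|^2-1)}$, we conclude $\|\Delta\|_\diamond\leq\eps$, which is the claim.

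The main obstacle is the matrix inequality $\rho_{A^nR'} \leq d_{AR,n}\cdot \zeta^n_{A^nRR'}$, which is not a simple symmetrization: it requires using that a pure, permutation-invariant purification lives in $\mathrm{Sym}^n(A\otimes R)$ together with the fact that the maximally mixed state on any subspace $\cV$ dominates any other state supported on $\cV$ by exactly a factor of $\dim\cV$. The dimension bound $\dim\mathrm{Sym}^n(A\otimes R)\leq (n+1)^{|A|^2-1}$ then follows from counting types on an alphabet of size $|A|^2$, which is the polynomial overhead responsible for the $(n+1)^{|A|^2-1}$ factor in the statement.
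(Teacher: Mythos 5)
The paper states this proposition purely as a citation to Christandl, K\"onig, and Renner and does not supply its own proof, so your proposal is necessarily a reconstruction. You correctly identify the two pillars---restriction to permutation-invariant inputs via the covariance hypothesis, and domination by the de Finetti state with a factor $\dim\mathrm{Sym}^n(A\otimes R)\leq(n+1)^{|A|^2-1}$---but both steps contain genuine gaps. In the symmetrization step, the final inequality is backward: by convexity of the trace norm, $\|(\Delta\otimes\cI)(\bar\psi)\|_1 \leq \frac{1}{n!}\sum_\pi\|(\Delta\otimes\cI)(\pi^{-1}\psi\pi)\|_1$, which is the reverse of what you write, so ``symmetrizing the input can only increase the output trace norm'' is false. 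The correct reduction appends a classical register $F$ storing which permutation was applied; the state $\sigma_{A^nEF}=\frac{1}{n!}\sum_\pi\pi(\psi)_{A^nE}\otimes\proj{\pi}_F$ is permutation-invariant and, by block-diagonality in $F$, satisfies $\|(\Delta\otimes\cI_{EF})(\sigma)\|_1=\frac{1}{n!}\sum_\pi\|(\Delta\otimes\cI_E)(\pi(\psi))\|_1$, and each summand equals $\|(\Delta\otimes\cI_E)(\psi)\|_1$ by the covariance hypothesis together with the fact that $K_\pi$ is trace preserving. One then purifies $\sigma$ and proceeds.

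The operator inequality $\rho_{A^nR'}\leq d_{AR,n}\cdot\zeta^n_{A^nRR'}$ is ill-posed: the two sides act on different Hilbert spaces, and moreover $\zeta^n_{A^nRR'}$ is a pure state, so it can only dominate multiples of itself. The actual inequality lives on $A^n$ alone: any permutation-invariant $\rho_{A^n}$ has a purification in $\mathrm{Sym}^n(A\otimes R)$ with $R\cong A$, and since $\zeta^n_{A^nR^n}=\Pi_{\mathrm{sym}}/d_{AR,n}$, tracing over $R^n$ gives $\rho_{A^n}\leq d_{AR,n}\,\zeta^n_{A^n}$. Even granting this, your displayed conclusion does not follow without the post-selection step that gives the technique its name: from $\psi_{A^n}\leq d_{AR,n}\,\zeta^n_{A^n}$ one constructs a completely positive, trace-non-increasing filtering map $T$ acting only on the purifying systems $R^nR'$ (take $M=\frac{1}{d_{AR,n}}(\zeta^n_{A^n})^{-1/2}\psi_{A^n}(\zeta^n_{A^n})^{-1/2}\leq\id$, act with $\sqrt{M^{T}}$ on the reference side of $\zeta^n_{ARR'}$, then apply an Uhlmann isometry) such that $(\cI_{A^n}\otimes T)(\zeta^n_{ARR'})=\frac{1}{d_{AR,n}}\psi_{A^nE}$. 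Because $T$ commutes with $\Delta\otimes\cI$ and CP trace-non-increasing maps do not increase the trace norm, the claimed bound then follows. This filtering argument is the essential missing piece in your proposal.
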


A straightforward application of Carath\'eodory's theorem gives the following.

\begin{lemma}\cite[Corollary D.6]{Berta09_2}\label{lem:cara}
Let $\zeta_{AR}^{n}=\int\sigma_{AR}^{\otimes n}d(\sigma_{AR})$ as in Proposition~\ref{prop:selection}. Then we have that $\zeta_{AR}^{n}=\sum_{i}p_{i}\left(\omega^{i}_{AR}\right)^{\otimes n}$ with $\omega^{i}_{AR}\in\cV(AR)$, $i\in\{1,2,\ldots,(n+1)^{2|A||R|-2}\}$, and $\{p_{i}\}$ a probability distribution.
\end{lemma}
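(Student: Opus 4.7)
The plan is to recast $\zeta_{AR}^{n}$ as a point in a low-dimensional convex set and then invoke Carath\'eodory's theorem in real affine spaces. First I would observe that for every pure state $\sigma_{AR}\in\cV(AR)$, the tensor power $\sigma_{AR}^{\otimes n}$ is invariant under permutations of the $n$ copies of $AR$, and hence is supported on (and is an operator on) the symmetric subspace $\mathrm{Sym}^{n}(AR)\subseteq(AR)^{\otimes n}$. Since $\zeta_{AR}^{n}$ is by construction a continuous convex combination of such pure tensor powers, it too lies in the real vector space of Hermitian operators on $\mathrm{Sym}^{n}(AR)$.

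Next, I would estimate the real dimension of the affine span that actually contains $\zeta_{AR}^{n}$ and the states $\sigma_{AR}^{\otimes n}$. The symmetric subspace of $n$ copies of a $d$-dimensional Hilbert space has dimension $\binom{n+d-1}{d-1}$; specializing to $d=|A||R|$ and using the standard bound $\binom{n+d-1}{d-1}\leq(n+1)^{d-1}$ gives $\dim\mathrm{Sym}^{n}(AR)\leq(n+1)^{|A||R|-1}$. Consequently, the real vector space of Hermitian operators on $\mathrm{Sym}^{n}(AR)$ has dimension at most $(n+1)^{2(|A||R|-1)}=(n+1)^{2|A||R|-2}$, and intersecting with the affine hyperplane of trace-one operators shaves off one more dimension.

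Finally, I would invoke Carath\'eodory's theorem: any element of the convex hull of a set $S$ inside a real affine space of dimension $k$ can be written as a convex combination of at most $k+1$ elements of $S$. Applying this to
\begin{align}
S=\{\sigma_{AR}^{\otimes n}\,:\,\sigma_{AR}\in\cV(AR)\}
\end{align}
and the point $\zeta_{AR}^{n}\in\mathrm{conv}(S)$, we obtain the desired representation $\zeta_{AR}^{n}=\sum_{i}p_{i}(\omega_{AR}^{i})^{\otimes n}$ with $\omega_{AR}^{i}\in\cV(AR)$, where the index $i$ ranges over at most $(n+1)^{2|A||R|-2}$ values (absorbing the additive constant into the dimension bound above).

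The only delicate step is the dimension count: the ambient space of operators on $(AR)^{\otimes n}$ has exponentially large real dimension $(|A||R|)^{2n}$, so a naive application of Carath\'eodory would be useless. The saving is entirely due to permutation symmetry, which confines $\zeta_{AR}^{n}$ to the polynomially-many-dimensional symmetric subspace; once this restriction is justified, the bound is essentially immediate.
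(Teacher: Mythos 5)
Your argument is correct and is exactly the one behind the cited result: the paper itself only remarks that the lemma is ``a straightforward application of Carath\'eodory's theorem,'' and your proof supplies precisely that application, with the crucial reduction to the symmetric subspace $\mathrm{Sym}^{n}(AR)$ of dimension at most $(n+1)^{|A||R|-1}$ giving the stated bound $(n+1)^{2|A||R|-2}$ on the number of terms. The only step you gloss over---that the barycenter of a measure on the compact set $S$ lies in $\mathrm{conv}(S)$ rather than merely in its closure---is harmless in finite dimensions, since the convex hull of a compact set is itself compact.
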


%%%%%%%%%%%%%%%%%%%%%%%%%%%%%%%%%%%%%%%%%%%%%

\section{Typical Projectors}\label{app:typical}

\label{sec:typ-review}A sequence $x^{n}$\ is typical with respect to some
probability distribution $p_{X}\left(  x\right)  $ if its empirical
distribution has maximum deviation $\delta$ from $p_{X}\left(  x\right)  $.
The typical set $T_{\delta}^{X^{n}}$ is the set of all such sequences:%
\begin{equation}
T_{\delta}^{X^{n}}\equiv\left\{  x^{n}:\left\vert \frac{1}{n}N\left(
x|x^{n}\right)  -p_{X}\left(  x\right)  \right\vert \leq\delta\ \ \ \ \forall
x\in\mathcal{X}\right\}  ,
\end{equation}
where $N\left(  x|x^{n}\right)  $ counts the number of occurrences of the
letter $x$ in the sequence $x^{n}$. The above notion of typicality is the
\textquotedblleft strong\textquotedblright\ notion (as opposed to the weaker
\textquotedblleft entropic\textquotedblright\ version of typicality sometimes
employed~\cite{CT91}). The typical set enjoys three useful properties:\ its
probability approaches unity in the large $n$ limit, it has exponentially
smaller cardinality than the set of all sequences, and every sequence in the
typical set has approximately uniform probability. That is, suppose that
$X^{n}$ is a random variable distributed according to $p_{X^{n}}\left(
x^{n}\right)  \equiv p_{X}\left(  x_{1}\right)  \cdots p_{X}\left(
x_{n}\right)  $, $\epsilon$ is positive number that becomes arbitrarily small
as $n$ becomes large, and $c$ is some positive constant. Then the following
three properties hold \cite{CT91}%
\begin{align}
\Pr\left\{  X^{n}\in T_{\delta}^{X^{n}}\right\}   &  \geq1-\epsilon
,\label{eq:typ-1}\\
\left\vert T_{\delta}^{X^{n}}\right\vert  &  \leq2^{n\left[  H\left(
X\right)  +c\delta\right]  },\label{eq:typ-2}\\
\forall x^{n}\in T_{\delta}^{X^{n}}:\ \ \ 2^{-n\left[  H\left(  X\right)
+c\delta\right]  }  &  \leq p_{X^{n}}\left(  x^{n}\right)  \leq2^{-n\left[
H\left(  X\right)  -c\delta\right]  }. \label{eq:typ-3}%
\end{align}

These properties translate straightforwardly to the quantum setting by
applying the spectral theorem to a density operator $\rho$. That is, suppose
that%
\begin{equation}
\rho\equiv\sum_{x}p_{X}\left(  x\right)  \left\vert x\right\rangle
\left\langle x\right\vert ,
\end{equation}
for some orthonormal basis $\left\{  \left\vert x\right\rangle \right\}  _{x}%
$. Then there is a typical subspace defined as follows:%
\begin{equation}
T_{\rho,\delta}^{n}\equiv\text{span}\left\{  \left\vert x^{n}\right\rangle
:\left\vert \frac{1}{n}N\left(  x|x^{n}\right)  -p_{X}\left(  x\right)
\right\vert \leq\delta\ \ \ \ \forall x\in\mathcal{X}\right\}  ,
\end{equation}
and let $\Pi_{\rho,\delta}^{n}$ denote the projector onto it. Then properties
analogous to (\ref{eq:typ-1}-\ref{eq:typ-3}) hold for the typical subspace.
The probability that a tensor power state $\rho^{\otimes n}$\ is in the
typical subspace approaches unity as $n$ becomes large, the rank of the
typical projector is exponentially smaller than the rank of the full $n$-fold
tensor-product Hilbert space of $\rho^{\otimes n}$, and the state
$\rho^{\otimes n}$ \textquotedblleft looks\textquotedblright\ approximately
maximally mixed on the typical subspace:%
\begin{align}
\text{Tr}\left\{  \Pi_{\rho,\delta}^{n}\ \rho^{\otimes n}\right\}   &
\geq1-\epsilon,\label{eq:typ-q-1}\\
\text{Tr}\left\{  \Pi_{\rho,\delta}^{n}\right\}   &  \leq2^{n\left[  H\left(
B\right)  +c\delta\right]  },\label{eq:typ-q-2}\\
2^{-n\left[  H\left(  B\right)  +c\delta\right]  }\ \Pi_{\rho,\delta}^{n}  &
\leq\Pi_{\rho,\delta}^{n}\ \rho^{\otimes n}\ \Pi_{\rho,\delta}^{n}%
\leq2^{-n\left[  H\left(  B\right)  -c\delta\right]  }\ \Pi_{\rho,\delta}^{n},
\label{eq:typ-q-3}%
\end{align}
where $H\left(  B\right)  $ is the entropy of $\rho$.

Suppose now that we have an ensemble of the form $\left\{  p_{X}\left(
x\right)  ,\rho_{x}\right\}  $, and suppose that we generate a typical
sequence $x^{n}$ according to a ``pruned'' distribution (defined as a normalized version
of $p_{X^n}(x^n)$ with support on its typical set and zero otherwise),
leading to a tensor product state $\rho_{x^{n}}\equiv\rho_{x_{1}}%
\otimes\cdots\otimes\rho_{x_{n}}$. Then there is a conditionally typical
subspace with a conditionally typical projector defined as follows:%
\begin{equation}
\Pi_{\rho_{x^{n}},\delta}^{n}\equiv\bigotimes\limits_{x\in\mathcal{X}}%
\Pi_{\rho_{x},\delta}^{I_{x}},
\end{equation}
where $I_{x}\equiv\left\{  i:x_{i}=x\right\}  $ is an indicator set that
selects the indices $i$\ in the sequence $x^{n}$ for which the $i^{\text{th}}$
symbol $x_{i}$\ is equal to $x\in\mathcal{X}$ and $\Pi_{\rho_{x},\delta
}^{I_{x}}$ is the typical projector for the state $\rho_{x}$. The
conditionally typical subspace has the three following properties:%
\begin{align}
\text{Tr}\left\{  \Pi_{\rho_{x^{n}},\delta}^{n}\ \rho_{x^{n}}\right\}   &
\geq1-\epsilon,\\
\text{Tr}\left\{  \Pi_{\rho_{x^{n}},\delta}^{n}\right\}   &  \leq2^{n\left[
H\left(  B|X\right)  +c\delta\right]  },\\
2^{-n\left[  H\left(  B|X\right)  +c\delta\right]  }\ \Pi_{\rho_{x^{n}}%
,\delta}^{n}  &  \leq\Pi_{\rho_{x^{n}},\delta}^{n}\ \rho_{x^{n}}\ \Pi
_{\rho_{x^{n}},\delta}^{n}\leq2^{-n\left[  H\left(  B|X\right)  -c\delta
\right]  }\ \Pi_{\rho_{x^{n}},\delta}^{n},
\end{align}
where $H\left(  B|X\right)  =\sum_{x}p_{X}\left(  x\right)  H\left(  \rho
_{x}\right)  $ is the conditional quantum entropy.

Let $\rho$ be the expected density operator of the ensemble $\left\{
p_{X}\left(  x\right)  ,\rho_{x}\right\}  $ so that $\rho=\sum_{x}p_{X}\left(
x\right)  \rho_{x}$. The following properties are proved in
Refs.~\cite{ieee2005dev,itit1999winter,W11}:%
\begin{align}
\forall x^{n} \in T^{X^{n}}_{\delta}: \text{Tr}\left\{  \rho_{x^{n}}%
\ \Pi_{\rho}\right\}   &  \geq1-\epsilon,\nonumber\\
\sum_{x^{n}}p_{X^{\prime n}}^{\prime}\left(  x\right)  \rho_{x^{n}}  &
\leq\left[  1-\epsilon\right]  ^{-1}\rho^{\otimes n}.
\label{eq:prune-avg-op-ineq}%
\end{align}

In order to justify some of the estimates made in
Section~\ref{sec:non-feedback-simulation},
we use the above estimates on eigenvalues and support
sizes. For the classical communication cost, we consider%
\begin{equation}
H_{R}\left(  W^{n}\right)  _{\bar{\gamma}^{i}}-H_{\min}\left(  W^{n}%
R^{n}\right)  _{\bar{\gamma}^{i}}+H_{0}\left(  R^{n}\right)  _{\bar{\gamma
}^{i}}.
\end{equation}
The smallest nonzero eigenvalue of the reduced state on $W^{n}$ is larger than
$2^{-n\left[  H\left(  W\right)  +c\delta\right]  }$ due to the typical
projection on $W^{n}$. Thus, we have that%
\begin{equation}
H_{R}\left(  W^{n}\right)  \leq n\left[  H\left(  W\right)  +c\delta\right]  .
\end{equation}
The largest eigenvalue of $\bar{\gamma}_{WR}^{i,n}$ is bounded by%
\begin{equation}
2^{-n\left[  H\left(  W\right)  _{\bar{\gamma}^{i}}-c\delta\right]
}2^{-n\left[  H\left(  R|W\right)  _{\bar{\gamma}^{i}}-c\delta\right]  },
\end{equation}
due to the typical projection on $W^{n}$ and the conditionally
typical projection on $R^{n}$. So we
have that%
\begin{equation}
H_{\min}\left(  W^{n}R^{n}\right)  _{\bar{\gamma}^{i}}\geq n\left[  H\left(
WR\right)  _{\bar{\gamma}^{i}}+2c\delta\right]  .
\end{equation}
The size of the support of $R^{n}$ is bounded from above by%
\begin{equation}
2^{n\left[  H\left(  R\right)  _{\bar{\gamma}^{i}}+\delta\right]  },
\end{equation}
due to the outermost projection on $R^{n}$.\ Thus, we have that%
\begin{equation}
H_{0}\left(  R^{n}\right)  _{\bar{\gamma}^{i}}\leq n\left[  H\left(  R\right)
_{\bar{\gamma}^{i}}+2c\delta\right]  .
\end{equation}
The above development then gives the following bound:%
\begin{equation}
H_{R}\left(  W^{n}\right)  _{\bar{\gamma}^{i}}-H_{\min}\left(  W^{n}%
R^{n}\right)  _{\bar{\gamma}^{i}}+H_{0}\left(  R^{n}\right)  _{\bar{\gamma
}^{i}}\leq n\left[  I\left(  W;R\right)  _{\bar{\gamma}^{i}}+5c\delta\right]
.
\end{equation}

We have similar arguments for bounding the shared randomness cost:
\begin{equation}
H_{R}\left(  W^{n}\right)  _{\bar{\gamma}^{i}}-H_{\min}\left(  W^{n}X^{n}%
R^{n}\right)  _{\bar{\gamma}^{i}}+H_{0}\left(  R^{n}X^{n}\right)
_{\bar{\gamma}^{i}}.
\end{equation}
By the same argument as above, we have that%
\begin{equation}
H_{R}\left(  W^{n}\right)  _{\bar{\gamma}^{i}}\leq n\left[  H\left(  W\right)
_{\bar{\gamma}^{i}}+c\delta\right]  .
\end{equation}
The largest eigenvalue of $\bar{\gamma}_{WXR}^{i,n}$ is bounded by%
\begin{align}
&  2^{-n\left[  H\left(  W\right)  _{\bar{\gamma}^{i}}-c\delta\right]
}2^{-n\left[  H\left(  X|W\right)  _{\bar{\gamma}^{i}}-c\delta\right]
}2^{-n\left[  H\left(  R|W\right)  _{\bar{\gamma}^{i}}-c\delta\right]  }\\
&  =2^{-n\left[  H\left(  WX\right)  _{\bar{\gamma}^{i}}-2c\delta\right]
}2^{-n\left[  H\left(  R|WX\right)  _{\bar{\gamma}^{i}}-c\delta\right]  }\\
&  =2^{-n\left[  H\left(  WXR\right)  _{\bar{\gamma}^{i}}-3c\delta\right]  },
\end{align}
where we have used the fact that $H\left(  R|W\right)  _{\bar{\gamma}^{i}%
}=H\left(  R|WX\right)  _{\bar{\gamma}^{i}}$ because the state on $R$ is
independent of $X$. Thus, we have that%
\begin{equation}
H_{\min}\left(  W^{n}X^{n}R^{n}\right)  _{\bar{\gamma}^{i}}\geq n\left[
H\left(  WXR\right)  _{\bar{\gamma}^{i}}-3c\delta\right]  .
\end{equation}
Finally, the support of $R^{n}X^{n}$ is bounded again by $2^{n\left[  H\left(
RX\right)  _{\bar{\gamma}^{i}}+2c\delta\right]  }$, due to the typical
projections, so that we have%
\begin{equation}
H_{0}\left(  R^{n}X^{n}\right)  _{\bar{\gamma}^{i}}\leq n\left[  H\left(
RX\right)  _{\bar{\gamma}^{i}}+2c\delta\right]  .
\end{equation}
The above development then gives the following bound:%
\begin{equation}
H_{R}\left(  W^{n}\right)  _{\bar{\gamma}^{i}}-H_{\min}\left(  W^{n}X^{n}%
R^{n}\right)  _{\bar{\gamma}^{i}}+H_{0}\left(  R^{n}X^{n}\right)
_{\bar{\gamma}^{i}}\leq n\left[  I\left(  W;XR\right)  _{\bar{\gamma}^{i}%
}+6c\delta\right]  .
\end{equation}

%%%%%%%%%%%%%%%%%%%%%%%%%%%%%%%%%%%%%%%%%%%%

\section{Uncertainty Relation}
\begin{lemma}
  \label{lem:minmax}
  For every $\proj{\psi}_{ABR}\in\cV(ABR)$ and observable (measurement) ${Z}_A$, we have that
  \begin{align}
    &H_{\rm min}(A|B)_\psi+H_{\rm max}({Z}_A|R)_\psi\leq \log|A|,\quad
   \end{align}
\end{lemma}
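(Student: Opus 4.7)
The plan is to reduce the claim to a standard dimension-dependent chain rule for the conditional min-entropy by way of the min/max duality on a suitable purification of the measurement.

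First, I would dilate the measurement associated with $Z_A$ to an isometry. Letting $\{M_z\}_z$ denote its POVM elements (equivalently, the spectral projectors of the observable), define
\begin{align}
V_{A\to AZ}=\sum_z\sqrt{M_z}\otimes\ket{z}_Z,
\end{align}
which is an isometry since $\sum_z M_z=\id_A$. Applied to $\ket{\psi}_{ABR}$ it yields a pure state $\ket{\tilde\psi}_{AZBR}=V\ket{\psi}_{ABR}$ whose reduction $\tilde\psi_{ZBR}$ is precisely the classical--quantum state produced by performing $Z_A$. In particular $H_{\max}(Z_A|R)_\psi=H_{\max}(Z|R)_{\tilde\psi}$.

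Viewed as a pure tripartite state on $Z\,|\,AB\,|\,R$, duality of the smooth min/max entropies gives
\begin{align}
H_{\max}(Z|R)_{\tilde\psi}=-H_{\min}(Z|AB)_{\tilde\psi},
\end{align}
so the lemma reduces to $H_{\min}(Z|AB)_{\tilde\psi}\geq H_{\min}(A|B)_\psi-\log|A|$. Since $V$ acts only on $A$, isometric invariance of the conditional min-entropy gives $H_{\min}(AZ|B)_{\tilde\psi}=H_{\min}(A|B)_\psi$, so it suffices to establish the chain rule $H_{\min}(Z|AB)_{\tilde\psi}\geq H_{\min}(AZ|B)_{\tilde\psi}-\log|A|$.

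This last inequality is a textbook dimension-dependent chain rule: if $\tilde\psi_{AZB}\leq 2^{-H_{\min}(AZ|B)}\,\id_{AZ}\otimes\tau_B$ for some $\tau_B$, then rewriting $\id_{AZ}=|A|\,\id_Z\otimes(\id_A/|A|)$ gives $\tilde\psi_{AZB}\leq 2^{-H_{\min}(AZ|B)}\,|A|\,\id_Z\otimes\big((\id_A/|A|)\otimes\tau_B\big)$, where $(\id_A/|A|)\otimes\tau_B$ is a valid state on $AB$; rearranging yields exactly what is needed. I do not foresee any substantive obstacle: the argument is a short assembly of the Stinespring dilation, min/max duality for pure tripartite states, and the dimension-dependent chain rule for $H_{\min}$. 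The only point that requires attention is matching the system that the chain rule ``moves'' from the main to the conditioning argument (namely $A$, of dimension $|A|$) with the $\log|A|$ on the right-hand side of the lemma.
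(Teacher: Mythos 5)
Your proposal is correct and follows essentially the same route as the paper's proof: dilate the measurement to an isometry appending a $Z$ register, apply the min/max duality $H_{\max}(Z|R)=-H_{\min}(Z|AB)$ on the resulting pure state, and then extract the $\log|A|$ via the rewriting $\id_{AZ}=|A|\,\id_Z\otimes\pi_A$ in the operator inequality defining $H_{\min}$. The only cosmetic difference is that you package the last step as isometric invariance plus a dimension-dependent chain rule, whereas the paper carries out the same operator manipulation inline with the specific coherent-copy isometry $U\ket{z}_A=\ket{z}_Z\ket{z}_A$.
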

\begin{proof}
  Define $\lambda=H_{\rm min}(A|B)_\psi$ and let  $\sigma_B\in\cS(B)$ be such that 
\begin{align}
 \psi_{AB}\leq 2^{-\lambda}\id_A\otimes \sigma_B.
\end{align}
The measurement procedure can be described by an isometry $U_{A\rightarrow ZA}$ whose action is specified by $U_{A\rightarrow ZA}\ket{z}_A=\ket{z}_Z\ket{z}_A$, where $\{\ket{z}\}$ are the basis states associated with the (projective) measurement. Applied to $\psi_{AB}$ this yields
\begin{align}
 \xi_{ZAB}&= U_{A\rightarrow ZA}\psi_{AB}U_{A\rightarrow ZA}^\dagger\\
&\leq 2^{-\lambda}\,U_{A\rightarrow ZA}(\id_A\otimes \sigma_B) U_{A\rightarrow ZA}^\dagger\\
&=2^{-\lambda}\sum_z \ket{z}\bra{z}_Z\otimes \ket{z}\bra{{z}}_A\otimes \sigma_B\\
& \leq 2^{-\lambda}\id_{ZA}\otimes \sigma_B\\
&= 2^{-(\lambda-\log|A|)}\id_Z\otimes \pi_A\otimes \sigma_B,
\end{align}
where $\pi_A=\id_A/|A|$.  Thus, $\mu=\lambda-\log|A|$ and $\pi_A\otimes\sigma_B$ are feasible for $H_{\rm min}(Z|AB)_\xi$, meaning 
\begin{align}
H_{\rm min}(Z|AB)_\xi\geq \lambda-\log|A|\ .
\end{align}
Therefore the first claim follows, since $H_{\rm min}(Z|AB)_\xi=-H_{\rm max}(Z|R)_\xi=-H_{\rm max}(Z_A|R)_\psi$. 
\end{proof}

%%%%%%%%%%%%%%%%%%%%%%%%%%%%%%%%%%%%%%%%%%%%%

\bibliographystyle{plain}
\bibliography{library}

\end{document}